\newtheorem{theorem}{Theorem}[section]
\newtheorem{proposition}[theorem]{Proposition}
\newtheorem{lemma}[theorem]{Lemma}
\newtheorem{assumption}[theorem]{Assumption}
\newtheorem{remark}[theorem]{Remark}
\theoremstyle{definition}
\newtheorem{definition}[theorem]{Definition}
\newtheorem{example}[theorem]{Example}
\newcommand{\w}{\mathfrak{w}}
\newcommand{\bdX}{\boldsymbol{X}}
\newcommand{\bdx}{\boldsymbol{x}}
\newcommand{\bdY}{\boldsymbol{Y}}
\newcommand{\bdy}{\boldsymbol{y}}
\newcommand{\m}{m}
\newcommand{\e}{e}
\newcommand{\kong}{\vspace{0.08cm}}
\title{Optimal Execution and Macroscopic Market Making}
\author{Ivan Guo}
\thanks{\hspace{0.12cm} Ivan Guo, Email: ivan.guo@monash.edu, Address: Centre for Quantitative Finance and Investment Strategies, School of Mathematics, Monash University, Wellington Rd, Clayton VIC 3800, Australia}
\author{Shijia Jin} 
\thanks{\hspace{0.12cm} Corresponding author: Shijia Jin, Email: shijia.jin@monash.edu, Address: School of Mathematics, Monash University, Wellington Rd, Clayton VIC 3800, Australia}
\begin{document}
\relscale{1.047}

\begin{abstract}
We propose a stochastic game modelling the strategic interaction between market makers and traders. From the trader's perspective, the conventional exogenous permanent price impact is replaced by the endogenous quoting strategies of the market makers. Conversely, from the market maker's perspective, order flows are no longer assumed to be exogenous, but are driven endogenously by the strategic traders. Characterizing the Nash equilibria via forward-backward stochastic differential equations (FBSDEs), we establish a local well-posedness result for the general game. For the specific `Almgren-Chriss-Avellaneda-Stoikov' model, the decoupling approach guarantees the global well-posedness of the FBSDEs by reducing it to a backward stochastic Riccati equation with $M_+$-matrix coefficients. Finally, by introducing small diffusion terms into the inventory processes as an approximation to the general game, we establish its global well-posedness. Simulations reveal a negative correlation between quotes and strategic orders, in contrast to the positive correlation observed between quotes and noise orders.\\

\noindent \textbf{Keywords:} Optimal execution; Market making; Stochastic differential game; Forward-backward stochastic differential equation; Decoupling field; Riccati equation

\end{abstract}

\maketitle
\setcounter{tocdepth}{1}
\tableofcontents

\section{Introduction}

\noindent From the liquidity-taking side, this paper formulates and solves a novel trade execution problem: how to design a trading schedule to liquidate a large position efficiently, when the trade impact is determined by market makers. We focus on a macroscopic time horizon (minutes to hours), so trading schedules can be represented by trading rates. From the liquidity-provision side, the paper also addresses a new market making problem: how to dynamically quote limit orders in the presence of competing market makers and strategic order flows. In summary, we propose a unified game-theoretic framework that links optimal execution and market making through strategic interactions. Because of the time scale, we follow the macroscopic framework of \cite{guo2023macroscopic} and \cite{guo2024macroscopicmarketmakinggames}, which adapts the seminal market-making model of \cite{avellaneda2008high} to a macroscopic setting. Both works study market making problems with noise traders only, excluding strategic traders. Consequently, our model emphasizes quote-driven markets and order-driven markets where the bid-ask spread is large relative to the tick size. See \cite{guo2023macroscopic} for further details on the macroscopic model.

Introduced by \cite{almgren2001optimal} and \cite{bertsimas1998optimal}, the optimal execution problem arises from the trade-off between trading slowly to reduce market friction costs and trading quickly to avoid adverse price fluctuations. The framework rests on two pillars: (1) a permanent price impact that embeds trading activity into price dynamics, and (2) a temporary price impact that captures instantaneous frictional costs, such as the cost of `walking the book' when submitting large orders over short intervals. Subsequent work has explored many variants of these impact functions. Nonlinear impact models are studied in \cite{almgren2003optimal}. Linear impact functions in game settings have attracted recent attention; see \cite{casgrain2018mean} and \cite{evangelista2020finite} for probabilistic approaches, and \cite{cardaliaguet2018mean} and \cite{huang2019mean} for PDE methods. Empirical evidence also points to transient components of permanent impact, which are examined in \cite{neuman2023trading} and \cite{schied2017state}.

For the optimal execution problem, our main contribution is the introduction of an \textit{endogenous permanent impact}. Permanent impact captures how trading activity affects prices, and prices in turn depend on how market makers requote their limit orders partially. Motivated by this feedback loop, we model a game between liquidity takers and liquidity providers, in which permanent impact is not imposed exogenously but emerges from market makers' quoting strategies. Specifically, if $g$ is some permanent impact functional and $\delta$ is a quoting strategy, we replace the classical term
\begin{equation*}
    P_t + g(v_{[0, t]}) \text{ \; by \; } P_t + \delta_t
\end{equation*}
in market-price modelling. Here, the process $P$ stands for a reference price and $v$ represents a trading schedule.

From the market-making perspective, this paper introduces a more comprehensive game framework among market makers by incorporating \textit{endogenous order flows}. Building on the literature initiated by \cite{ho1981optimal} and \cite{avellaneda2008high}, market making has been modelled as an optimization problem addressing two main issues: diminishing profit margins as transaction frequency increases, and rising inventory risk as positions deviate from zero. While \cite{campi2020optimal} models more realistic order flows via a hidden Markov chain and \cite{jusselin2021optimal} captures clustering and long-memory effects with general Hawkes processes, both treat order flow as exogenous. Recent game-theoretic studies (such as \cite{luo2021dynamic}, \cite{cont2022dynamics}, and \cite{guo2024macroscopicmarketmakinggames}) analyse how competition among market makers affects the quoting strategies, whereas the order flow continues to be modelled exogenously. Our contribution is to endogenize the order flow so that it interacts strategically with market makers within the game.

Our proposed game framework, illustrated in Figure \ref{figure}, optimizes the actions of all market makers and of strategic traders. Noise traders with exogenous objectives are included in the model but are not optimized; they form part of the stochastic environment. Treating these flows as exogenous allows us to derive order-flow-driven execution strategies such as the volume-weighted average price. See \cite{cartea2016incorporating} for related work on optimal execution with order flows.

\begin{figure}
    \centering
    \includegraphics[scale = 0.2]{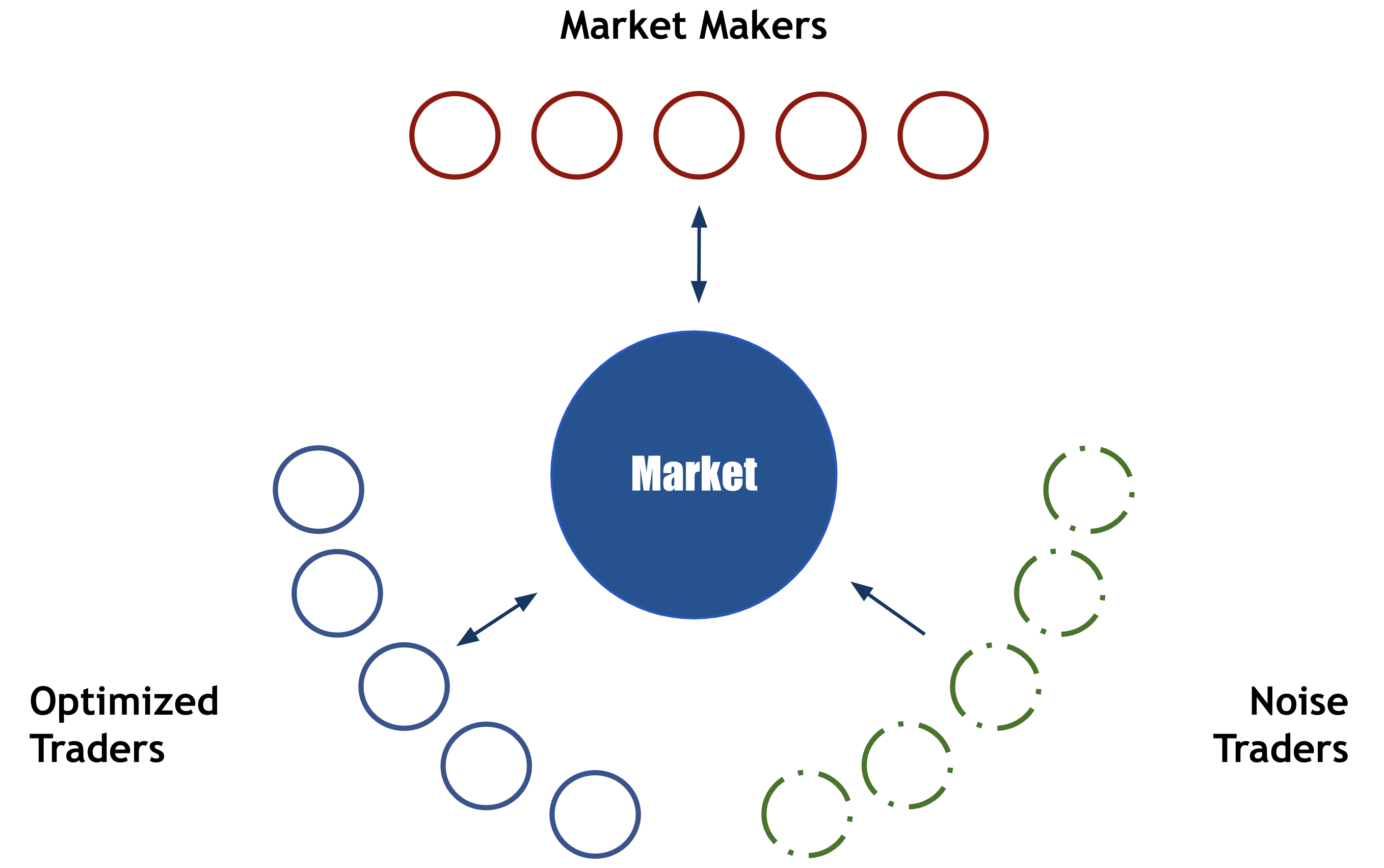}
    \caption{Three major components in the game}
    \label{figure}
\end{figure}

\kong

\textit{Outline:} In Section \ref{paper 3 section 2}, we begin with the optimal execution game, which sets up the framework for strategic traders. Moreover, while the existing literature predominantly employs individual temporary price impact, with the two-player analysis in \cite{voss2022two} as a notable exception, we adopt a linear aggregated temporary impact to bridge this gap. Our work can be viewed as a generalization of \cite{voss2022two} to an $N$-player environment (without tracking terms or constraints), and as an extension of execution games (e.g., \cite{evangelista2020finite}, \cite{drapeau2019fbsde}) to a setting with joint temporary impact and a more general formulation. Numerical examples show that such joint impact discourages the crowding of orders.

Continuing the optimal execution games, Section \ref{paper 3 section 3} incorporates the role of market makers. From the traders' perspective, the permanent price impact component is replaced by market makers’ quoting strategies: market prices are determined by those quotes. Simultaneously, we extend the previous linear specification and introduce a class of aggregate temporary impact functions. On the other hand, following the competition modelling of \cite{guo2024macroscopicmarketmakinggames}, each market maker competes against the best quotes of her rivals. Unlike previous market-making games, order flows here are endogenous, consisting of both noise traders and strategic traders. We apply the stochastic maximum principle to characterize Nash equilibria as a system of forward-backward stochastic differential equations (FBSDEs) and then establish a local well-posedness result.

In Section \ref{paper 3 section 4} we analyse the `Almgren-Chriss-Avellaneda-Stoikov' model under three specifications: (1) one strategic trader and two market makers, (2) a linear temporary impact for the trader as in \cite{almgren2001optimal}, and (3) exponential intensities for market-making competition as in \cite{avellaneda2008high}. Under the conditions above, the associated FBSDE system admits an explicit representation. Using the multidimensional decoupling method in \cite{guo2024macroscopicmarketmakinggames}, its global well-posedness reduces to the well-posedness of a backward stochastic Riccati equation (BSRE) with $M_+$-matrix coefficients. We establish a new well-posedness result in the deterministic case. Computing the equilibrium numerically, we observe the classical aggressive-then-taper pattern for the strategic trader. By contrast, market makers behave very differently: prices decline over the trading horizon, driven by the perfect-information assumption and the running inventory penalty.

Extending results beyond the `Almgren-Chriss-Avellaneda-Stoikov' model requires studying a broader class of BSREs that lie outside the scope of the existing literature. To address this difficulty, we introduce an approximation framework for the original game. Specifically, we add uncontrolled diffusion terms to the inventory dynamics of both traders and market makers, and we model noise trading rates by a Markov SDE. Furthermore, we also adopt the market-making competition style of \cite{luo2021dynamic} and \cite{cont2022dynamics}. The resulting FBSDE system is then a non-degenerate Markovian system, whose well-posedness can be derived by the result in \cite{nam2022coupled}. Simulations show that order flows play a role similar to that studied in \cite{cartea2016incorporating} for traders and in \cite{guo2023macroscopic} for market makers. Unlike the positive correlation we find between quotes and noise ask orders, we observe (and explain) a negative correlation between quotes and strategic orders.

The article is organized as discussed above. We start with optimal execution games in Section \ref{paper 3 section 2}. Section \ref{paper 3 section 3} adds the role of market makers, and the `Almgren-Chriss-Avellaneda-Stoikov' model is studied in Section \ref{paper 3 section 4}. Section \ref{paper 3 section 5} investigates the approximation game. All proofs are provided in the appendix.

\kong

\textit{Notation:} Throughout the present work, we fix $T > 0$ to represent the finite trading horizon. We denote by $(\Omega, \mathcal{F}, \mathbb{F} = (\mathcal{F}_t)_{0 \leq t \leq T}, \mathbb{P})$ a complete filtered probability space, with $\mathcal{F}_T = \mathcal{F}$. A $d$-dimensional Brownian motion $W$ is defined on such space, for a fixed positive integer $d$, and the filtration $\mathbb{F}$ is generated by $W$ and augmented. Let $\mathcal{G}$ represent an arbitrary $\sigma$-algebra contained in $\mathcal{F}$ and consider the following spaces:
\begin{align*}
    L^p(\Omega, \mathcal{G}) &:= \big\{X: X \text{ is } \mathcal{G} \text{-measurable and } \mathbb{E}|X|^p < \infty \big\};\\
    \mathbb{H}^p &:= \bigg\{ X: X \text{ is } \mathbb{F} \text{-progressively measurable and } \mathbb{E}\bigg[ \bigg( \int_0^T |X_t|^2 \, dt \bigg)^{p/2} \bigg] < \infty \, \bigg\};\\
    \mathbb{S}^p &:= \Big\{ X \in \mathbb{H}^p: \mathbb{E} \Big[ \sup_{0 \leq t \leq T} |X_t|^p \Big] < \infty \Big\};\\
    \mathbb{M} &:= \big\{ M : M_t \in L^2(\Omega, \mathcal{F}_t) \text{ for a.e. } t \in [0, T]\\ 
    &\hspace{4.15cm} \text{ and } \{M_t, \mathcal{F}_t\}_{0 \leq t \leq T}\text{ is a continuous martingale} \big\}.
\end{align*}

We use superscripts for enumerating purposes. For example, superscripts in states $Q^1, Q^2$ denote objects which are associated with player $1$ and player $2$ respectively. In particular, $Q^2$ is not to be confused with quadratic powers, which will be explicitly denoted with brackets like $(Q)^2$, or as $(Q^2)^2$ if necessary. For any vector-valued function $F$, the superscript in $F^i$ denotes the $i$-th entry. However, an exception to this superscript usage occurs when handling matrices, a context that will be evident. For matrices, the superscript `$*$' denotes the transpose. The symbol $C$ represents a positive constant that may vary across different lines. 

\vspace{0.2cm}

\section{Optimal Execution Game}
\label{paper 3 section 2}
\noindent This section is devoted to the $N$-player game of optimal execution, where the interaction appears in both linear permanent impact and linear temporary impact. While this section introduces the setup of traders who later participate in the more general game, it also fills a noteworthy gap in the literature by examining strategic interactions arising from aggregated temporary impact. Much of the optimal-execution game literature emphasises interactions only through permanent impact, with the notable exception of the two-player games studied in \cite{voss2022two}. Our work can be viewed as a generalisation of \cite{voss2022two} to an $N$-player setting (without tracking terms or constraints), and as an extension of execution games (e.g., \cite{evangelista2020finite}, \cite{drapeau2019fbsde}) to a setting with joint temporary impact and a more general formulation.

Let us consider a stochastic differential game comprising $N \in \mathbb{N}$ traders, trading a single financial asset. Let $\{P_t, \mathcal{F}_t\}_{t \in [0, T]}$---the reference price of the asset---be a square-integrable martingale. We index the players by $(i, \e)$ and the additional superscript $\e$ underlines the execution purpose. To trade $q_0^{i, \e} \in \mathbb{R}$ units of the asset, the inventory process $Q^{i, \e}$ of agent $(i, \e)$ reads
\begin{equation*}
    Q_t^{i,\e} = q_0^{i,\e} + \int_0^t v_s^{i,\e} \, ds,
\end{equation*}
where the trading rate $v^{i,\e} \in \mathbb{H}^2$ represents the control. The market price $S$ of the asset follows
\begin{equation}
    S_t = P_t + \int_0^t \bigg( \frac{\alpha_u}{N} \sum_{i=1}^{N} \, v_u^{i,\e} \bigg) \, du + \varpi_t \text{\; with \;}  \varpi_t := \int_0^t \alpha_u \, (a_u - b_u)\,du,
    \label{market price}
\end{equation}
where a bounded and positive process $\alpha \in \mathbb{S}^2$ represents the coefficient of the linear permanent price impact. Here, bounded and positive processes $a, b \in \mathbb{S}^2$ indicate respectively the ask and bid market order rates from the noise traders. Given the market price, the transaction price $\hat{S}^{i,\e}$ for agent $(i, \e)$ is given as
\begin{equation*}
    \hat{S}^{i, \e}_t = S_t + \frac{\beta_t}{N} \, \sum_{j = 1}^N v_t^{j, \e},
\end{equation*}
with a bounded and positive process $\beta \in \mathbb{S}^2$ specifying the coefficient of the linear temporary price impact.

\kong

\begin{remark}
    (1) The integral terms in \eqref{market price} capture the permanent price impact exerted by both strategic agents and noise traders. We defer the rationale for modelling the bid and ask market order rates ($a$ and $b$) as distinct processes, rather than relying solely on the order imbalance $a - b$, to the next section. From another perspective, the drift caused by the noise traders can also be interpreted as signals; see \cite{lehalle2019incorporating} for example.
    
    \indent (2) While market makers are not explicitly modelled as strategic agents in this game, their presence is implicit; notably, the evolution of the best quote is governed by the specified permanent price impact.

    (3) The subsequent discussion can also be applied to the case
    \begin{equation}
        \hat{S}^{i, \e}_t = S_t + \frac{\beta_t}{N} \, \Big( v_t^{i, \e} + \pi \sum_{j \neq i} v_t^{j, \e} \Big)
        \label{sect1 impact example}
    \end{equation}
    for some $\pi \in [0,2)$, at the cost of heavier notations. We will cover such cases in the next section.
\end{remark}

\kong

The agent $(i, \e)$ intends to maximize the objective functional
\begin{equation}
\begin{aligned}
    &J^{i, \e}(v^{i, \e}; v^{-i, \e})\\
    &:= \mathbb{E}\bigg[ - \int_0^T \hat{S}_t^{i, \e} \, v_t^{i, \e} \, dt + S_T \, Q_T^{i, \e} - \int_0^T \phi_t^{i, \e} \big( Q_t^{i, \e} \big)^2 \, dt - A^{i, \e} \big( Q_T^{i, \e} \big)^2 \bigg]\\
    & = \mathbb{E} \bigg[ - \int_0^T \Big(v_t^{i, \e} \, \frac{\beta_t}{N} \, \sum_{j = 1}^N v_t^{j, \e} \Big) \, dt - \int_0^T S_t \, dQ_t^{i, \e} + S_T \, Q_T^{i, \e} - \int_0^T \phi_t^{i, \e} \big( Q_t^{i, \e} \big)^2 \, dt - A^{i, \e} \big( Q_T^{i, \e} \big)^2 \bigg]\\
    &= P_0 \, q_0^{i, \e} + \mathbb{E} \bigg[ \int_0^T \Big( Q_t^{i, \e} \,  \frac{\alpha_t}{N} \, \sum_{j = 1}^{N} v_{t}^{j, \e} \Big) \, dt - \int_0^T \Big( v_t^{i, \e} \, \frac{\beta_t}{N} \, \sum_{j = 1}^N v_t^{j, \e} \Big) \,dt \\
    & \hspace{3cm} - \int_0^T v_t^{i, \e} \, \varpi_t \, dt + Q_T^{i, \e} \, \varpi_T - \int_0^T \phi_t^{i, \e} \big( Q_t^{i, \e} \big)^2 \, dt - A^{i, \e} \big( Q_T^{i, \e} \big)^2 \bigg].
    \label{obj fun sec 1}
\end{aligned}
\end{equation}
The above expression is obtained using integration by parts and the martingale property of $P$, as is standard in the optimal execution literature. Here, a positive bounded process $\phi^{i, \e} \in \mathbb{S}^2$ represents the running inventory penalty. The process $\phi^{i, \e}$ can be viewed as controlling the agent’s risk appetite, since large values of $\phi^{i, \e}$ impose a strong disincentive for agent $i$ to take on market exposure. This penalty can also be interpreted as the agent accounting for model uncertainty, as analysed in \cite{cartea2017algorithmic}. A positive bounded random variable $A^{i, \e} \in L^2(\Omega, \mathcal{F}_T)$ stands for the terminal liquidation penalty. We look for the Nash equilibrium in the following sense.

\kong

\begin{definition}
An admissible strategy profile $(\hat{v}^{k, \e})_{k = 1}^N \in \big(\mathbb{H}^2\big)^N$ is called
a Nash equilibrium if, for all $i$ and any admissible strategies $v \in \mathbb{H}^2$, it holds that
\begin{equation*}
    J^{i, \e}(v; \hat{v}^{-i, \e}) \leq J^{i, \e}(\hat{v}^{i, \e}; \hat{v}^{-i, \e}).
\end{equation*}
\end{definition}

\kong

The Hamiltonian of agent $i$ reads
\begin{equation*}
\begin{aligned}
    H^{i, \e}(t, q^{i, \e}, y^{i, \e}, v^{i, \e}; v^{-i, \e}) = v^{i, \e} \, y^{i, \e} &+ q^{i, \e} \, \frac{\alpha_t}{N} \, \sum_{j = 1}^N v^{j, \e} - v^{i, \e} \, \frac{\beta_t}{N} \sum_{j = 1}^N v^{j, \e} \\
    & - v^{i,\e} \, \varpi_t - \phi_t^{i, \e} \big( q^{i, \e} \big)^2.
\end{aligned}
\end{equation*}
The Hessian matrix of the function $(v^{i, \e}, q^{i, \e}) \mapsto H^{i, \e}(t, q^{i, \e}, y^{i, \e}, v^{i, \e}; v^{-i, \e})$ is then given by
\begin{equation*}
    \begin{bmatrix}
        -2 \, \cfrac{\beta_t}{N} & \cfrac{\alpha_t}{N} \\
        \cfrac{\alpha_t}{N} & -2 \, \phi_t^{i, \e}
    \end{bmatrix},
\end{equation*}
which is negative semi-definite if its determinant is non-negative. This motivates the following assumption that rules out abnormally large permanent impact coefficient.

\kong

\begin{assumption}
\label{paper 3 hamilt convex assumpt}
    The temporary impact coefficient $\beta$ is bounded away from zero in the sense that $\beta_t \geq C$ for some $C > 0$. Additionally, it holds for all $i, t$ that $ 4 \, N \beta_t \, \phi_t^{i, \e} \geq (\alpha_t)^2$. 
\end{assumption}

\kong

Given the concavity of $H$ with respect to $(q^{i, \e}, v^{i, \e})$, we are able to apply the stochastic maximum principle (see \cite{carmona2016lectures}, \cite{carmona2018probabilistic}) and hence turn to the Isaacs' condition: for any $i \in \{1, \dots, N\}$ and all $t \in [0, T]$, to maximize each Hamiltonian, the first-order condition yields
\begin{equation}
    v^{i, \e} = \frac{N}{2 \, \beta_t} \, y^{i, \e} + \frac{\alpha_t}{2 \, \beta_t} \, q^{i, \e} - \frac{N}{2 \, \beta_t} \, \varpi_t - \frac{1}{2} \sum_{k \neq i} v^{k, \e}.
    \label{paper 3 issac}
\end{equation}
Exploiting the linearity and the symmetry among agents, we solve \eqref{paper 3 issac} directly to obtain:
\begin{equation}
\begin{aligned}
    v^{i, \e} &= \frac{2N}{N + 1} \, \bigg( \frac{N}{2 \, \beta_t} \, y^{i, \e} + \frac{\alpha_t}{2 \, \beta_t} \, q^{i, \e} - \frac{N}{2 \, \beta_t} \, \varpi_t \bigg) - \frac{2}{N + 1} \, \sum_{k \neq i} \bigg(\frac{N}{2 \, \beta_t} \, y^{k, \e} + \frac{\alpha_t}{2 \, \beta_t} \, q^{k, \e} - \frac{N}{2 \, \beta_t} \, \varpi_t \bigg)\\
    & = \frac{N}{(N + 1) \, \beta_t} \, \Big(N \, y^{i, \e} - \sum_{k \neq i} y^{k, \e} \Big) + \frac{\alpha_t}{(N + 1) \, \beta_t} \, \Big(N \, q^{i,\e} - \sum_{k \neq i} q^{k, \e} \Big) - \frac{N}{(N + 1) \, \beta_t} \, \varpi_t.  
    \label{paper 3 solve issac}
\end{aligned}
\end{equation}
Introduce these $N$-by-$N$ matrices:
\begin{gather*}
    L := \begin{bmatrix}
        2A^{1,\e} & \cdots & 0\\
        \vdots & \ddots & \vdots\\
        0  & \cdots & 2A^{N, \e}
    \end{bmatrix},\\ 
    B:= \begin{bmatrix}
        N & -1 & \cdots & -1\\
        -1 & N & \cdots & -1\\
        \vdots & \vdots & \ddots & \vdots\\
        -1 & -1 & \cdots & N
    \end{bmatrix}, \hspace{0.5cm} 
    D_t : = \frac{N}{(N + 1) \, \beta_t} \,B, \hspace{0.5cm}
    E_t : = \frac{\alpha_t}{(N + 1) \, \beta_t} \, B,\\
    O :=\begin{bmatrix}
        1 & \cdots & 1\\
        \vdots & \ddots & \vdots\\
        1 & \cdots & 1
    \end{bmatrix}, \hspace{0.4cm}
    F_t : = \frac{- \alpha_t}{(N + 1) \, \beta_t} \, O, \hspace{0.4cm}
    G_t : = 
    \begin{bmatrix}
        2\phi_t^{1,\e} & \cdots & 0\\
        \vdots & \ddots & \vdots\\
        0  & \cdots & 2\phi_t^{N,\e}
    \end{bmatrix}
    - \frac{(\alpha_t)^2}{N(N+1) \, \beta_t} \,
    O.
\end{gather*}

\kong

\noindent Let us also write $\boldsymbol{Q}^\e_t := (Q_t^{1, \e}, \dots, Q_t^{N, \e})$, $\boldsymbol{Y}_t^\e := (Y_t^{1, \e}, \dots, Y_t^{N, \e})$, and $\w_t := (\varpi_t, \dots, \varpi_t)$. The Nash equilibrium has an FBSDE characterization as follows.

\kong

\begin{proposition}
\label{FBSDE char in Sec2}
    A strategy profile $(v^{i,\e})_{i = 1}^N \in (\mathbb{H}^2)^N$ forms a Nash equilibrium if and only if it admits the representation \eqref{paper 3 solve issac}, where $\boldsymbol{Q}^\e$ and $\boldsymbol{Y}^\e$, together with martingales $\boldsymbol{M}^\e$, solve the following FBSDE:
    \begin{equation}
        \left\{
        \begin{aligned}
        \;& d\boldsymbol{Q}^\e_t = D_t \, \boldsymbol{Y}^\e_t \, dt + E_t \, \boldsymbol{Q}^\e_t \, dt - \frac{N}{(N + 1) \, \beta_t} \, \w_t \, dt,\\
        & d\boldsymbol{Y}^\e_t = F_t \, \boldsymbol{Y}^\e_t \, dt + G_t \, \boldsymbol{Q}^\e_t \, dt + \frac{N \, \alpha_t}{(N + 1) \, \beta_t} \, \w_t \, dt + d\boldsymbol{M}_t^\e,\\
        & \boldsymbol{Q}^\e_0 = \boldsymbol{q}^\e_0, \quad \boldsymbol{Y}^\e_T = - L \, \boldsymbol{Q}^\e_T + \w_T.
        \end{aligned}
        \right.
    \label{paper 3 execut game FBSDE}
    \end{equation}
\end{proposition}

\kong

The equilibrium admits an explicit form in two simple cases: (1) there is no permanent impact; (2) agents are homogeneous in penalty parameters. In these cases, the adjoint process $\boldsymbol{Y}^\e$ can be represented by a linear function of $\boldsymbol{Q}^\e$, where the first-order coefficient matrix is determined by a BSRE.

\kong

\begin{theorem}
\label{paper 3 stochastic riccati approach}
    If either of the following holds:
    \kong
    \begin{itemize}
        \item[(1)] the permanent impact $\alpha_t = 0$;\\
        \vspace{-0.2cm}
        
        \item[(2)] penalty parameters are homogeneous in the sense that $\phi^{i,\e} = \phi$ and $A^{i,\e} = A$ for any $i$. In addition, it holds for all $t$ that $ 2(N+1) \, \beta_t \, \phi_t \geq (\alpha_t)^2$,
    \end{itemize}
    \kong
    then FBSDE \eqref{paper 3 execut game FBSDE} has a unique solution $(\boldsymbol{Q}^\e, \boldsymbol{Y}^\e, \boldsymbol{M}^\e) \in (\mathbb{S}^2 \times \mathbb{S}^2 \times \mathbb{M})^N$, which can be represented by
    \begin{equation*}
        \boldsymbol{Y}_t^\e = R_t \, \boldsymbol{Q}_t^\e + H_t
    \end{equation*}
    for some symmetric negative semi-definite $R_t$.
\end{theorem}

\kong

\begin{remark}
Throughout this article, when we say that an FBSDE system has a unique solution, as in Theorem \ref{paper 3 stochastic riccati approach}, we mean that every forward and backward process belong to $\mathbb{S}^2$, and every martingale term lies in $\mathbb{M}$.
\end{remark}

\kong

By imposing stricter conditions on the coefficients, the well-posedness of the equation \eqref{paper 3 execut game FBSDE} in the general setting can be derived by the method of continuation in \cite{peng1999fully}.

\begin{theorem}
\label{method of continuation}
If there exists $C > 0$ such that both of the following conditions hold:
\kong

\begin{itemize}
    \item  it holds for all $i$ that $A^{i,\e} > C$;\\
    \vspace{-0.2cm}

    \item  it holds for all $i, t$ that 
    \begin{equation}
        \phi_t^{i, \e} - \frac{(N + 1) \, (\alpha_t)^2}{8 N \, \beta_t} \geq C.
        \label{tech assumpt for cont}
    \end{equation}
\end{itemize}
    Then, system \eqref{paper 3 execut game FBSDE} has a unique solution. 
\end{theorem}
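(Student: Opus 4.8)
The plan is to read \eqref{paper 3 execut game FBSDE} as a fully coupled \emph{linear} FBSDE with bounded, $\mathbb F$-adapted and (trivially) Lipschitz coefficients, and to obtain well-posedness by verifying the monotonicity condition underlying the method of continuation (in the form employed in \cite{evangelista2020finite}); since $\mathbb F$ is Brownian, the martingale term is $\boldsymbol M^\e_t=\int_0^t \boldsymbol Z^\e_s\,dW_s$. Everything will hinge on the positivity of the single block matrix
\begin{equation*}
\mathcal M_t:=\begin{bmatrix} G_t & \tfrac12\,(E_t+F_t)\\ \tfrac12\,(E_t+F_t) & D_t\end{bmatrix},
\end{equation*}
together with the positivity of the terminal matrix $L$. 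These two facts will simultaneously yield uniqueness, the a priori estimate that drives the continuation, and (as announced before the theorem) an affine Lipschitz decoupling field $\boldsymbol Y^\e_t=R_t\,\boldsymbol Q^\e_t+H_t$. The standing integrability assumptions hold because $\alpha,\beta,\phi^{i,\e}$ are bounded with $\beta_t\ge C$, so $D_t,E_t,F_t,G_t,L$ are all bounded.

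The first real step is the monotonicity identity. Taking two solutions with the same data and writing $\hat{\boldsymbol Q}=\boldsymbol Q^\e-\bar{\boldsymbol Q}^\e$, $\hat{\boldsymbol Y}=\boldsymbol Y^\e-\bar{\boldsymbol Y}^\e$, Itô's product rule applied to $\langle\hat{\boldsymbol Q}_t,\hat{\boldsymbol Y}_t\rangle$ (the forward part carries no diffusion, so no cross-variation appears) together with $\hat{\boldsymbol Q}_0=0$ and $\hat{\boldsymbol Y}_T=-L\,\hat{\boldsymbol Q}_T$ gives, after taking expectations,
\begin{equation*}
-\,\mathbb E\big[\hat{\boldsymbol Q}_T^{*}\,L\,\hat{\boldsymbol Q}_T\big]=\mathbb E\int_0^T \begin{bmatrix}\hat{\boldsymbol Q}_t\\ \hat{\boldsymbol Y}_t\end{bmatrix}^{*}\mathcal M_t\begin{bmatrix}\hat{\boldsymbol Q}_t\\ \hat{\boldsymbol Y}_t\end{bmatrix}dt .
\end{equation*}
Since $L=\mathrm{diag}(2A^{i,\e})\succ 0$ (because $A^{i,\e}>C$), the left-hand side is non-positive, so once $\mathcal M_t\succeq c\,I$ is established uniformly the integrand forces $\hat{\boldsymbol Q}=\hat{\boldsymbol Y}\equiv 0$, i.e.\ uniqueness. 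The same estimate, read along the interpolating family connecting \eqref{paper 3 execut game FBSDE} to a decoupled endpoint, is precisely the uniform a priori bound needed to propagate solvability, giving existence; the sign it produces (positive on $\langle\cdot\rangle$, decreasing on the terminal map) is the version matched to the monotone terminal condition.

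The heart of the argument, and the step I expect to be the main obstacle, is to prove $\mathcal M_t\succeq c\,I$ under the stated hypotheses. Writing $B=(N+1)I-O$ and using $O^2=NO$ together with $E_t=\tfrac{\alpha_t}{N}D_t$, the diagonal block $D_t=\tfrac{N}{(N+1)\beta_t}B$ satisfies $\lambda_{\min}(B)=1$, whence $D_t\succeq\tfrac{N}{(N+1)\|\beta\|_\infty}I\succ 0$ (and Theorem \ref{paper 3 inverse_norm_matrix} gives $\|B^{-1}\|_\infty\le1$, controlling $D_t^{-1}$). By the Schur complement it then remains to bound $G_t-\tfrac14(E_t+F_t)D_t^{-1}(E_t+F_t)$ from below. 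Because $E_t+F_t=\tfrac{\alpha_t}{(N+1)\beta_t}(B-O)$ is a polynomial in $O$, the subtracted term equals $a_t I+\tilde b_t O$, so the residual has the shape $\mathrm{diag}(2\phi^{i,\e}_t-a_t)-c_t\,O$ with $c_t=\tfrac{\alpha_t^2}{4N\beta_t}\ge0$; the elementary bound $(\sum_i x_i)^2\le N\sum_i x_i^2$ then reduces uniform positivity to the scalar inequalities $2\phi^{i,\e}_t-(a_t+Nc_t)\ge c'$. A direct computation in the $O$-algebra collapses $a_t+Nc_t$ to $\tfrac{(N+1)\alpha_t^2}{4N\beta_t}$ (the binding direction being the constant vector $\mathbf 1$, which explains the factor $N{+}1$), so the requirement becomes exactly $\phi^{i,\e}_t-\tfrac{(N+1)\alpha_t^2}{8N\beta_t}\ge c'/2$, i.e.\ the second hypothesis.

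Having verified that $\mathcal M_t$ is uniformly positive definite and that the terminal map $\boldsymbol x\mapsto -L\boldsymbol x+\w_T$ is strictly monotone, I would invoke the continuation theorem to obtain a unique solution $(\boldsymbol Q^\e,\boldsymbol Y^\e,\boldsymbol M^\e)$ in the stated spaces, and record that the monotonicity simultaneously delivers the Lipschitz (here affine) decoupling field promised in the preceding discussion. The anticipated difficulty is entirely algebraic: carrying the $B$–$O$ Schur-complement computation through with heterogeneous running penalties $\phi^{i,\e}$ while keeping the constants sharp enough to match the threshold $\tfrac{(N+1)\alpha_t^2}{8N\beta_t}$.
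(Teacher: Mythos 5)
Your proposal is correct and takes essentially the same approach as the paper: the paper's minimization of the monotonicity quadratic form over $\Delta y$ (with $V_t=-D_t^{-1}(E_t+F_t)/2$) is exactly your Schur-complement reduction, both arrive at the same reduced matrix $\mathrm{diag}(2\phi^{i,\e}_t)-\tfrac{(\alpha_t)^2}{4N\,\beta_t}\,(I+O)$ with the same threshold, and both then conclude by the continuation method of \cite{peng1999fully}. The only cosmetic differences are that you certify positivity of this matrix via the bound $x^*Ox\le N|x|^2$ whereas the paper uses diagonal dominance together with Theorem \ref{paper 3 inverse_norm_matrix}, and that you omit the paper's supplementary verification that the monotonicity also yields a Lipschitz decoupling field, which is not required for the well-posedness claim itself.
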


\kong

\noindent The only non-trivial condition is \eqref{tech assumpt for cont}. However, in practice the permanent impact coefficient $\alpha$ is much smaller than the temporary impact coefficient $\beta$. Hence, the condition essentially requires $\phi$ to be bounded below by a small constant.

\kong

\begin{figure}[h]
    \centering
    \includegraphics[width=0.8\textwidth]{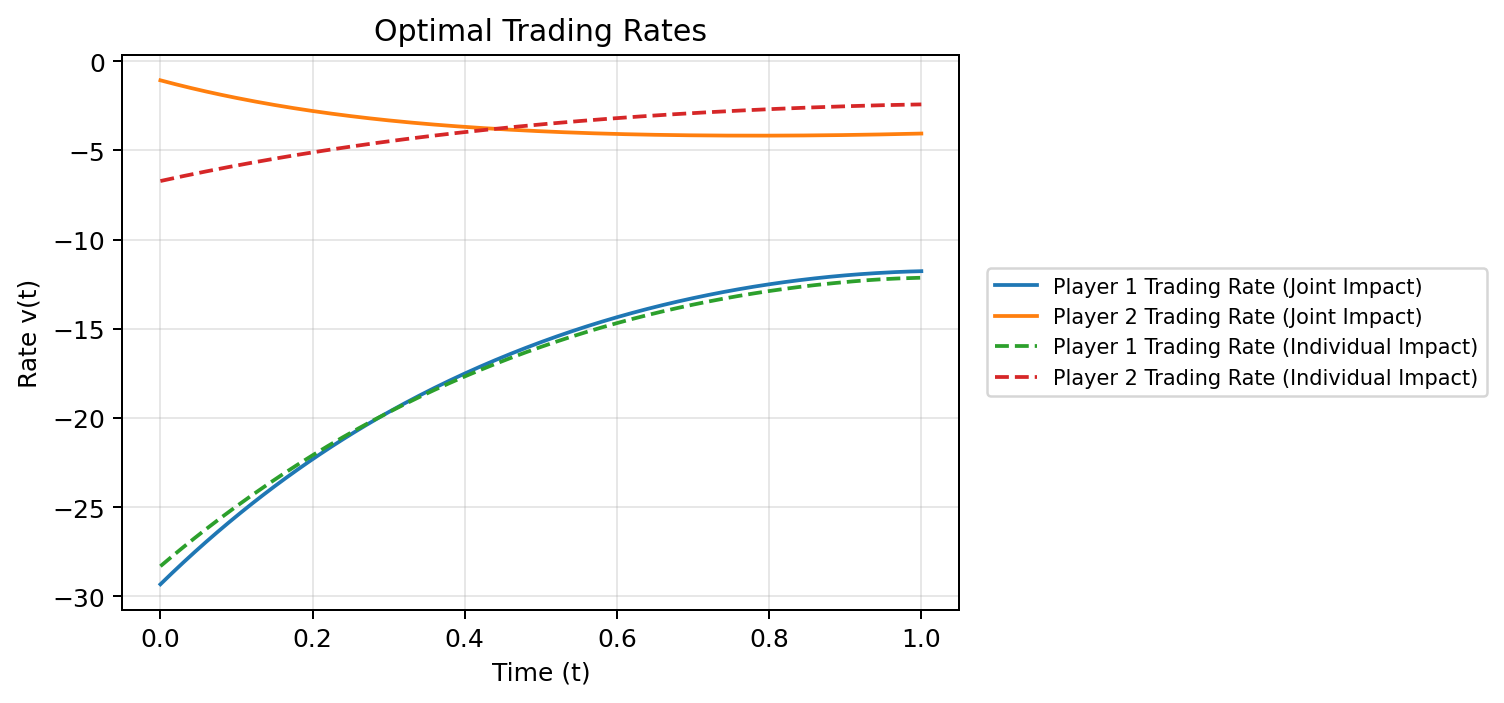}
    \caption{Equilibrium trading rates in the joint and individual impact cases}
    \label{sec 1 equili rate}
\end{figure}

\begin{example}
    In this example, we numerically study how joint linear temporary impact can notably alter the equilibrium trading strategies, even when players share homogeneous risk appetites and trade in the same direction. For parameters, we set $T = 1$, $\alpha_t = 0.01, \beta_t = 0.1$, and $a_t = b_t = 0$ for all $t$. We consider two players with initial inventories $q_0^{1, \e} = 18$ and $q_0^{2, \e} = 4$. They share the same running and terminal penalty parameters: $\phi_t = 0.1$ for all $t$ and $A = 1$. Because all coefficients are deterministic, the equation \eqref{paper 3 execut game FBSDE} reduces to a deterministic system, which we solve via the boundary-value problem solver in SciPy. In addition to the case with joint temporary impact, we also consider the individual temporary-impact case (i.e., setting $\pi = 0$ in \eqref{sect1 impact example}), as is common in the execution-game literature.

    As shown by the dashed lines in Figure \ref{sec 1 equili rate}, both players exhibit the classical equilibrium trading pattern (see \cite{evangelista2020finite} and \cite{drapeau2019fbsde}). They begin with an intensive liquidation phase due to the running penalties, and then gradually reduce their trading intensity to avoid incurring high individual temporary-impact costs. In the joint-impact case the pattern differs. As shown by the solid lines in Figure \ref{sec 1 equili rate}, player 2 begins with low liquidation rates and then gradually increases her trading intensity. This arises because player 1, who has the larger liquidation target, is comparatively more impatient and follows the classical aggressive-then-taper strategy. Player 2, being more patient, optimally avoids the initial crowded phase---where joint impact costs are high---and accelerates later when market activity subsides. This behaviour is an equilibrium response: player 1's slightly higher initial trading intensity (compared with the individual-impact case) makes delayed acceleration profitable for player 2. More broadly, the introduction of joint temporary impact smooths market activity. Figure \ref{sec 1 total rate} shows that the total trading rates of the two players are more stable than under the individual-impact case.
\end{example}

\begin{figure}[h]
    \centering
    \includegraphics[width=0.6\textwidth]{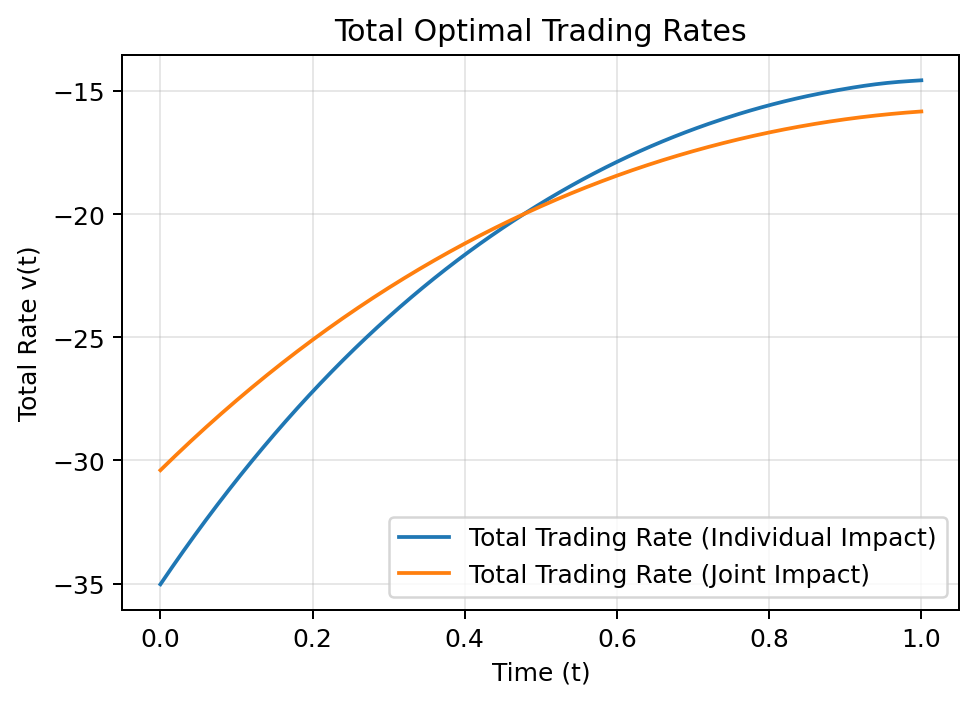}
    \caption{Equilibrium total rates in the joint and individual impact cases}
    \label{sec 1 total rate}
    
\end{figure}

\kong

\section{Connection with Market Making}
\label{paper 3 section 3}
\noindent Here, we study the connection between optimal execution problems and macroscopic market making models (see \cite{guo2023macroscopic} and \cite{guo2024macroscopicmarketmakinggames}). We propose a stochastic game that encompasses both liquidity takers and liquidity providers. This framework departs from the conventional approach of relying on exogenous components---such as permanent price impact functions and pre-determined order flows---commonly used in classical market making and optimal execution models. Instead, we incorporate genuine strategic behaviours.

Let us still consider $N$ traders indexed by $(i, \e)$. The agent $(i, \e)$ intends to trade $q_0^{i, \e} \in \mathbb{R} \backslash \{ 0 \}$ amount of assets. Now, only buy market orders are allowed when $q_0^{i, \e} < 0$; similarly only sell market orders are permitted for the case $q_0^{i, \e} > 0$.  Consequently, the admissible control space of agent $(i,\e)$ is defined as
\begin{equation*}
    \mathbb{A}^{i, \e} := \Big\{ v \in \mathbb{H}^2 : \, 0 \leq v_t \leq \tilde{\xi} \text{\, if \,} q_0^{i,\e} < 0 \text{\, and \,}  -\tilde{\xi} \leq v_t \leq 0 \text{\, if \,} q_0^{i,\e} > 0 \Big\}
\end{equation*}
for some constants $\tilde{\xi} > 0$. 

\kong

\begin{remark}
    (1) The upper bound $\tilde{\xi}$ in $\mathbb{A}^{i, \e}$ is adopted to ensure the Lipschitz property of the resulting FBSDE system. In the next section, we demonstrate how this term can be removed in specific cases.

    (2) For convenience, we impose the constraint that liquidators may only sell and acquirers may only buy. To relax this restriction, one can introduce a two-dimensional control for each trader---one component for buy orders and another for sell orders---allowing agents to submit both types of orders and to switch sides.
\end{remark}

\kong

\noindent Defining $\boldsymbol{v}^\e := (v^{1, \e}, \dots, v^{N, \e})$, the inventory and cash then follow
\begin{align*}
    Q_t^{i,\e} &= q_0^{i, \e} + \int_0^t v_u^{i, \e} \, du,\\
    X_t^{i, \e} &= - \int_0^t S_u^{i,\e} \, v_u^{i, \e} \, du - \int_0^t \lambda^i \big( \boldsymbol{v}^\e_u \big) \, du.
\end{align*}
Here, process $S^{i, \e}$ represents the market price in view of agent $(i, \e)$, to be specified later, and functions $(\lambda^i)_{i = 1}^N \in \Upsilon$ stand for the cost induced by the aggregate temporary price impact. We introduce the class $\Upsilon$ as follows:

\kong

\begin{definition}
\label{paper 3 temp price impact}
Functions $(f^i)_{i = 1}^N$ belong to the class of temporary price impact $\Upsilon$ if the following conditions hold for all $i$:

\kong

\begin{itemize}
    \item the function $f^i : \mathbb{R}^{N} \to \mathbb{R}$ is twice continuously differentiable;\\
    \vspace{-0.2cm}

    \item  the function $f^i$ is strongly convex in the $i$-th entry, i.e., there exists $C > 0$ such that $\partial^2 f^i / \partial (u^i)^2 \geq C$;\\
    \vspace{-0.2cm}

    \item the Hessian matrix of $f^i$ is diagonally dominant and further satisfies
    \begin{equation*}
        \inf_{\boldsymbol{u} \in \mathbb{R}^{N}} \bigg\{ \frac{\partial^2 f^i}{\partial (u^i)^2}( \boldsymbol{u}) - \sum_{j \neq i} \Big| \frac{\partial^2 f^i}{\partial u^i \partial u^j}( \boldsymbol{u}) \Big| \bigg\} > 0.
    \end{equation*}
\end{itemize}
\end{definition}

\kong

\begin{remark}
    For agent $(i, \e)$, the function $f^i$ returns her temporary impact cost induced by the trading activities of herself and the others. The following type of functions is in $\Upsilon$:
    \begin{equation*}
        f^i(\boldsymbol{v}) = v^i \, \big( v^i + g^i(v^{-i}) \big),
    \end{equation*}
    where function $g^i: \mathbb{R}^{N - 1} \to \mathbb{R}$ satisfies
    \begin{equation*}
        \sup_{\boldsymbol{u} \in \mathbb{R}^{N - 1}} \bigg\{ \sum_{j = 1}^{N - 1} \Big| \frac{\partial g^i}{\partial u^j}( \boldsymbol{u} ) \Big| \bigg\} < 2.
    \end{equation*}
    A typical example would be $g(\boldsymbol{u}) = \frac{1}{N-1}\sum_{j = 1}^{N - 1} u^j$, which is equivalent to \eqref{sect1 impact example} when $\pi = (N - 1)^{-1}$. Intuitively, this means that other players have a weaker effect on the temporary impact than the player under consideration. While in Section \ref{paper 3 section 2}, these effects are identical across all players. 
    
    Motivated by the power-law temporary impact studied in \cite{said2017market}, we consider the individual cost function of the form
    \begin{equation*}
        f^i(\boldsymbol{v})
        =
        \big((v^i)^2 + \varepsilon^2 \big)^{d/2}
        -\varepsilon^d + \kappa \, (v^i)^2,
    \end{equation*}
    where $d \in [1, 2]$ and $\kappa > 0$ is a regularization coefficient. To account for other players, we can add the term $\eta \, v^i \frac{1}{N - 1} \sum_{j \neq i} v^j$ for some positive $\eta < 2\kappa$.
    
    The final condition in Definition \ref{paper 3 temp price impact} is adopted for technical reasons. The intuition is that the joint influence of all the others on $f^i$ is dominated by the one of the $i$-th entry.
\end{remark}

\kong

Simultaneously, there are $\tilde{N}$ market makers indexed by $(i, \m)$ providing the liquidity to the asset. Here, the additional superscript $\m$ emphasizes the market making purpose. Given any quoting strategy $\boldsymbol{\delta}^i = (\delta^{i,a}, \delta^{i,b}) \in \mathbb{A}^\m \times \mathbb{A}^\m$ with 
\begin{equation*}
    \mathbb{A}^\m := \big\{ \delta \in \mathbb{H}^2 : \, |\delta_t| \leq \xi \, \text{ for all } t\in[0,T] \big\}
\end{equation*}    
for some constant $\xi > 0$, the market maker $(i,\m)$ offers price $P_t + \delta_t^{i,a}$ to sell the asset and $P_t - \delta_t^{i,b}$ to buy at time $t$. Due to the competition with other market makers, player $(i, m)$ captures only a fraction of both the ask and bid order flows. Following the competition model in \cite{guo2024macroscopicmarketmakinggames}, this fraction depends on the distances to the best quotes offered by others and also on a decreasing function $\Lambda \in \boldsymbol{\Lambda}$. We now define the class $\boldsymbol{\Lambda}$.

\begin{definition}[\cite{gueant2017optimal}]
\label{paper 3 general_inten}
A function $\Lambda:\mathbb{R}\to\mathbb{R_+}$ belongs to the class of intensity functions $\boldsymbol{\Lambda}$ if:

\vspace{0.1cm}

\begin{itemize}
    \item the function $\Lambda$ is twice continuously differentiable;\\
    \vspace{-0.2cm}
    
    \item the function $\Lambda$ is strictly decreasing and hence $\Lambda'(x)<0$ for any $x\in\mathbb{R}$;\\
    \vspace{-0.2cm}
    
    \item $\lim_{x \to \infty}\Lambda(x) = 0 \,$ and $\, -\infty < \inf_{x \in \mathbb{R}} \frac{\Lambda(x) \, \Lambda''(x)}{( \Lambda'(x) )^2} \leq \sup_{x \in \mathbb{R}} \frac{\Lambda(x) \, \Lambda''(x)}{( \Lambda'(x) )^2} \leq 1$.
\end{itemize}
\label{paper 3 inten_assu}
\end{definition}

\noindent Take the ask side as an example. The fraction obtained by agent $(i, \m)$ at time $t$ is given by $\Lambda(\delta_t^{i,a} - \bar{\delta}_t^{i,a})$, where $\bar{\delta}_t^{i,a} := \min_{k \neq i} \delta_t^{k,a}$ and its bid part is similarly defined. The difference $\delta_t^{i,a} - \bar{\delta}_t^{i,a}$ measures the distance between her quote and the best quote among the others, and the function $\Lambda$ maps this distance to the fraction of order flow she captures. The optimal control problem for a single market maker (as studied in \cite{guo2023macroscopic}, \cite{avellaneda2008high}, \cite{gueant2017optimal}) can be regarded as a special case, in which the best offers from the other participants are assumed to be the reference price throughout. Therefore, the inventory and cash of agent $(i, \m)$ read
\begin{align*}
    X_t^{i, \m} &= \int_0^t \big( P_u + \delta_u^{i,a} \big) \, \tilde{a}_u \, \Lambda \big(\delta_u^{i,a} - \bar{\delta}_u^{i,a} \big) \, du - \int_0^t \big( P_u - \delta_u^{i,b} \big) \, \tilde{b}_u \, \Lambda \big(\delta_u^{i,b} - \bar{\delta}_u^{i,b} \big) \, du,\\
    Q_t^{i, \m} &= q_0^{i, \m} - \int_0^t \tilde{a}_u \, \Lambda \big(\delta_u^{i,a} - \bar{\delta}_u^{i,a} \big) \, du + \int_0^t \tilde{b}_u \, \Lambda \big(\delta_u^{i,b} - \bar{\delta}_u^{i,b} \big) \, du.
\end{align*}
Here, total order flow processes $\tilde{a}, \tilde{b}$ are given by
\begin{equation}
\begin{aligned}
        \tilde{a}_t &= a_t + \sum_{i = 1}^N v_t^{i,\e} \, \mathbb{I}(q_0^{i,e} < 0),\\
        \tilde{b}_t &= b_t - \sum_{i = 1}^N v_t^{i,\e} \, \mathbb{I}(q_0^{i,e} > 0),
        \label{total order flow model}
\end{aligned}
\end{equation}
where $a, b \in \mathbb{S}^2$ are bounded positive processes describing the buying/selling rate from the noise traders as in the previous section.

\kong

\begin{remark}
    Since order flows are modelled continuously, the discrete Avellaneda-Stoikov framework of \cite{avellaneda2008high} is not suitable here. We therefore adopt its macroscopic counterpart developed in \cite{guo2023macroscopic}. For the competition among market makers, we follow the approach of \cite{guo2024macroscopicmarketmakinggames}, in which each market maker competes with the one offering the best price. Alternative formulations (in \cite{cont2022dynamics} and \cite{luo2021dynamic}) are discussed in a later section.\end{remark}
\kong

Concerning the objective functional, on the market making side, the agent $(i, \m)$ aims at maximizing the objective functional
\begin{equation*}
\begin{aligned}
    J&^{i, \m} ( \boldsymbol{\delta}^i ; \boldsymbol{\delta}^{-i}, \boldsymbol{v}^\e )\\
    :&= \mathbb{E} \bigg[ X_T^{i, \m} + P_T \, Q_T^{i, \m} - \int_0^T \phi_t^{i, \m} \, \big( Q_t^{i, \m} \big)^2 \, dt - A^{i, \m} \, \big( Q_T^{i, \m} \big)^2 \bigg]\\
     &= \mathbb{E} \bigg[\int_0^T \Big( \delta_t^{i,a} \, \tilde{a}_t \, \Lambda \big( \delta_t^{i,a} - \bar{\delta}_t^{i,a} \big) + \delta_t^{i,b} \, \tilde{b}_t \, \Lambda \big( \delta_t^{i,b} - \bar{\delta}_t^{i,b} \big) \Big) \, dt - \int_0^T \phi_t^{i, \m} \, \big( Q_t^{i, \m} \big)^2 \, dt - A^{i, \m} \, \big( Q_T^{i, \m} \big)^2 \bigg].
\end{aligned}
\end{equation*}
The functional $J^{i, \m}$ depends on $\boldsymbol{v}^{\e}$ through the aggregated order flows $(\tilde{a}, \tilde{b})$. The process $\phi^{i, \m}$ represents the running inventory penalty and the random variable $A^{i, \m}$ denotes the terminal penalty. For market makers, the running penalty captures the volatility risk associated with holding the asset, while the terminal penalty can be interpreted as the overnight risk. Here, the market maker values her end-of-day position at the reference price rather than the market price. This prevents illusory gains or losses generated by the temporary order-imbalance impact.

To connect the quoting strategies with the optimal-execution component, we make the following assumption.

\kong

\begin{assumption}
Throughout, the best bid and ask prices are offered by the $\tilde{N}$ market makers considered. Specifically, the best ask and bid prices are given by $P_t + \min_j\delta^{j, a}_t$ and $P_t - \min_j\delta^{j, b}_t$ respectively at time $t$.
\end{assumption}

\kong

\noindent Consequently, the market price perceived by trader $(i, \e)$ reads
\begin{equation}
    S_t^{i,\e} = P_t + \mathbb{I}(q_0^{i,e} < 0) \, \min_{j} \delta_t^{j, a} - \mathbb{I}(q_0^{i,e} > 0) \, \min_{j} \delta_t^{j, b}.
    \label{paper 3 different market price}
\end{equation}
The agent $(i,\e)$ aims at maximizing the objective functional analogous to \eqref{obj fun sec 1}:
\begin{equation*}
\begin{aligned}
    J^{i, \e} &( v^{i, \e} ; v^{-i, \e}, (\boldsymbol{\delta}^{j})_{j})\\
    :& = \mathbb{E}\bigg[ X_T^{i, \e} + P_T \, Q^{i, \e}_T - \int_0^T \phi_t^{i, \e} \, \big( Q_t^{i,\e} \big)^2 \, dt - A^{i, \e} \, \big( Q_T^{i, \e} \big)^2 \bigg]\\
    &= P_0 \, q_0^{i, \e} - \mathbb{E} \bigg[ \int_0^T \Big( \mathbb{I}(q_0^{i, \e} < 0) \min_{j} \delta^{j, a}_t - \mathbb{I}(q_0^{i,\e} > 0)  \min_{j} \delta^{j, b}_t \Big) \, v_t^{i,\e} \, dt \bigg]\\
    & \hspace{1.5 cm} - \mathbb{E} \bigg[ \int_0^T \lambda^i \big( \boldsymbol{v}_u^{\e} \big) \, du + \int_0^T \phi_t^{i, \e} \, \big( Q_t^{i, \e} \big)^2 \, dt + A^{i, \e} \, \big( Q_T^{i, \e} \big)^2 \bigg].
\end{aligned}
\end{equation*}

We seek the Nash equilibrium in the following sense:

\kong

\begin{definition}
An admissible strategy profile $\big( \hat{\boldsymbol{v}}^\e, (\hat{\boldsymbol{\delta}}^j)_{j} \big) \in \big( \Pi_{i = 1}^N \mathbb{A}^{i, \e} \big) \times (\mathbb{A}^\m \times \mathbb{A}^\m)^{\tilde{N}}$ is called
a Nash equilibrium if: (1) for all $k$ and any admissible strategies $\boldsymbol{\delta} \in \mathbb{A}^\m \times \mathbb{A}^\m$, it holds that
\begin{equation*}
    J^{k, \m} \big(\boldsymbol{\delta}; \hat{\boldsymbol{\delta}}^{-k}, \hat{\boldsymbol{v}}^\e \big) \leq J^{k, \m} \big( \hat{\boldsymbol{\delta}}^k; \hat{\boldsymbol{\delta}}^{-k}, \hat{\boldsymbol{v}}^\e \big);
\end{equation*}

\noindent (2) for all $k$ and any admissible strategies $v \in \mathbb{A}^{k, \e}$, it holds that
\begin{equation*}
    J^{k, \e} \big(v; \hat{v}^{-k, \e}, (\hat{\boldsymbol{\delta}}^{j})_{j} \big) \leq J^{k, \e} \big( \hat{v}^{k, \e}; \hat{v}^{-k, \e}, (\hat{\boldsymbol{\delta}}^{j})_{j} \big).
\end{equation*}
\end{definition}

\kong

\begin{remark}
In contrast to classical optimal-execution formulations, we substitute the permanent price impact term with the best price tactically offered by market makers. Unlike standard market-making models, the order flow in our framework is endogenous and depends on the strategic actions of traders rather than being exogenously specified. If a market maker does not post the best price, her orders can still be consumed with a smaller rate, causing the temporary price impact to executors and noise traders.
\end{remark}

\kong

\noindent The stochastic maximum principle yields an FBSDE characterization of the Nash equilibrium.

\kong

\begin{theorem}
\label{paper 3 big maximum principle}
    An admissible strategy profile $\big( \boldsymbol{v}^\e, (\boldsymbol{\delta}^j)_{j} \big)$ forms a Nash equilibrium if and only if
    \begin{equation}
        v^{j, \e}_t = \varphi^j(\boldsymbol{Y}_t^\e, \boldsymbol{Y}_t^\m), \quad \delta^{i, a}_t = \psi^{i, a}(\boldsymbol{Y}_t^\m), \text{\; and \;} \delta^{i, b}_t = \psi^{i, b}(\boldsymbol{Y}_t^\m)
        \label{general game feedback}
    \end{equation}
    for all $i, j$, where functions $\psi^a, \psi^b: \mathbb{R}^{\tilde{N}} \to \mathbb{R}^{\tilde{N}}$ and $\varphi: \mathbb{R}^{N} \times \mathbb{R}^{\tilde{N}} \to \mathbb{R}^N$ are some Lipschitz functions. The notation $\varphi^{j}$ denotes the $j$-th component of $\varphi$. Here, the adjoint processes \(\boldsymbol{Y}^\m\) and \(\boldsymbol{Y}^\e\), together with the inventory processes, solve the FBSDE system:
    \begin{equation}
    \left\{
    \begin{aligned}
     dQ_t^{i, \m} &= - \hat{a}_t \, \Lambda \big(\psi^{i,a}(\boldsymbol{Y}_t^\m) - \bar{\psi}^{i,a}(\boldsymbol{Y}_t^\m) \big ) \, dt + \hat{b}_t \, \Lambda \big(\psi^{i,b}(\boldsymbol{Y}_t^\m) - \bar{\psi}^{i,b}(\boldsymbol{Y}_t^\m) \big ) \, dt,\\
    \, dY_t^{i, \m} &= 2\phi_t^{i, \m} \, Q_t^{i, \m} \, dt + dM_t^{i, \m},\\
    Q_0^{i, \m} &= q_0^{i, \m}, \quad Y_T^{i, \m} = -2A^{i,\m} \, Q_T^{i, \m};
    \end{aligned}
    \right.
    \label{paper 3 big FBSDE market maker}
    \end{equation}
    \begin{equation}
        \left\{
    \begin{aligned}
     dQ_t^{j, \e} &= \varphi^j(\boldsymbol{Y}_t^\e, \boldsymbol{Y}_t^\m) \, dt,\\
    \, dY_t^{j, \e} &= 2\phi_t^{j, \e} \, Q_t^{j, \e} \, dt + dM_t^{j, \e},\\
    Q_0^{j, \e} &= q_0^{j, \e}, \quad Y_T^{j, \e} = -2A^{j, \e} \, Q_T^{j, \e},
    \end{aligned}
    \right.
    \label{paper 3 big FBSDE optimal execution}
\end{equation}
for any $i, j$, where $\bar{\psi}^{i, a}(\boldsymbol{y}) = \min_{k \neq i} \psi^{k, a}(\boldsymbol{y})$ and $\hat{a}_t, \hat{b}_t$ are defined as
\begin{equation*}
\begin{aligned}
    \hat{a}_t &= a_t + \sum_{i = 1}^N \varphi^i(\boldsymbol{Y}_t^\e, \boldsymbol{Y}_t^\m) \, \mathbb{I}(q_0^{i,e} < 0),\\
    \hat{b}_t &= b_t - \sum_{i = 1}^N \varphi^i(\boldsymbol{Y}_t^\e, \boldsymbol{Y}_t^\m) \, \mathbb{I}(q_0^{i,e} > 0).
\end{aligned}
\end{equation*}
\end{theorem}

\noindent Despite the complexity, the truncation coefficients $\xi, \tilde{\xi}$ guarantee the local well-posedness of \eqref{paper 3 big FBSDE market maker}-\eqref{paper 3 big FBSDE optimal execution}.

\kong

\begin{proposition}
    \label{local wellpose}
    If $T$ is sufficiently small, then system \eqref{paper 3 big FBSDE market maker}-\eqref{paper 3 big FBSDE optimal execution} has a unique solution.
\end{proposition}

\vspace{0.2cm}

\section{Almgren-Chriss-Avellaneda-Stoikov Model}
\label{paper 3 section 4}
\noindent Built upon the previous models, a particular example is presented in this section. In regard to the liquidity consuming part, we consider a single trader and set the linear temporary impact $\lambda^1(u) = \beta_t \, |u|^2$ as in the second section, for some bounded and positive $\beta \in \mathbb{S}^2$ that is also bounded away from $0$. Then, the cash account of agent $(1, \e)$ becomes
\begin{equation*}
    X_t^{1, \e} = - \int_0^t \big(S_u^{1, \e} + \beta_u \, v_u^{1, \e} \big) \, v_u^{1, \e} \, du.
\end{equation*}
Because the transaction price comprises the market price and a linear temporary impact term---similar to \cite{almgren2001optimal}, we consider this trader to be of the \textit{`Almgren-Chriss type.'}

Concerning the liquidity providing side, there are two market makers competing for the order flow under the exponential intensity $\Lambda(\delta) = \exp(-\gamma \, \delta)$ as in \cite{avellaneda2008high}, for some $\gamma > 0$. The inventory process of agent $(i, \m)$ reads
\begin{equation*}
    Q_t^{i, \m} = q_0^{i, \m} - \int_0^t \tilde{a}_u \, \exp\Big(-\gamma \big( \delta_u^{i,a} - \bar{\delta}_u^{i,a} \big) \Big) \, du + \int_0^t \tilde{b}_u \, \exp \Big(-\gamma \big(\delta_u^{i,b} - \bar{\delta}_u^{i,b} \big) \Big) \, du.
\end{equation*}
The work of \cite{guo2023macroscopic} can be viewed as a continuous analogue of \cite{avellaneda2008high}. Moreover, the market maker's control problem can be regarded as a special case when the best price offered by others is taken to be the reference price throughout. Therefore, we regard this market maker as being of the \textit{`Avellaneda-Stoikov type.'}

The individual quadratic temporary cost and the exponential intensity inspire the name \textit{`Almgren-Chriss-Avellaneda-Stoikov model.'} Besides, we consider the homogeneous penalty parameters for tractability:
\begin{equation*}
    \phi^{1,\e} = \phi^{i,\m} = \phi \text{ \; and \; } A^{1,\e} = A^{i,\m} = A
\end{equation*}
for all $i$. For convenience, let us set $q_0^{1,\e} < 0$ and $q_0^{1,\m} \geq q_0^{2,\m}$. An advantage of this model is that Lipschitz mappings $\psi^a, \psi^b$, and $\varphi$ in the implicit function theorem can be explicitly derived. We introduce the FBSDE characterization of the equilibrium as follows. The truncation coefficients $\xi$ and $\tilde{\xi}$ will be removed due to the boundedness of the solution.

\kong

\begin{proposition}
\label{paper 3 first character}
    For sufficiently large truncation levels $\xi$ and $\tilde{\xi}$, an admissible strategy profile $(v^{1, \e}, (\boldsymbol{\delta}^j)_{j = 1}^2)$ forms a Nash equilibrium if and only if
    \begin{equation}
        \delta_t^{i, a} = \frac{1}{\gamma} + Y_t^{i, \m}, \quad \delta_t^{i, b} = \frac{1}{\gamma} - Y_t^{i, \m}, \text{\; and \;} v_t^{1, \e} = \frac{1}{2 \, \beta_t} \, \bigg( Y_t^{1, \e} - Y_t^{1, \m} -\frac{1}{\gamma} \bigg) \vee 0,
    \label{paper 3 big game feedback}
\end{equation}
    where adjoint processes solve the FBSDE system
    \begin{gather}
    \left\{
    \begin{aligned}
     dQ_t^{1, \m} &= - \hat{a}_t \, \exp\big(-\gamma \, (Y_t^{1, \m} -  Y_t^{2, \m}) \big) \, dt + b_t \, \exp\big(-\gamma \, (Y_t^{2, \m} - Y_t^{1, \m}) \big) \, dt,\\
    \, dY_t^{1, \m} &= 2 \phi_t \, Q_t^{1, \m} \, dt + dM_t^{1, \m},\\
    Q_0^{1, m} &= q_0^{1, \m}, \quad Y_T^{1, \m} = -2A \, Q_T^{1, \m};
    \end{aligned}
    \right.
    \label{paper 3 big FBSDE 1}\\
    \left\{
    \begin{aligned}
     dQ_t^{2, \m} &= - \hat{a}_t \, \exp\big(-\gamma \, (Y_t^{2, \m} -  Y_t^{1, \m}) \big) \, dt + b_t \, \exp\big(-\gamma \, (Y_t^{1, \m} - Y_t^{2, \m}) \big) \, dt,\\
    \, dY_t^{2, \m} &= 2 \phi_t \, Q_t^{2, \m} \, dt + dM_t^{2, \m},\\
    Q_0^{2, m} &= q_0^{2, \m}, \quad Y_T^{2, \m} = -2A \, Q_T^{2, \m};
    \end{aligned}
    \right.\\
    \left\{
    \begin{aligned}
     dQ_t^{1, \e} &= \frac{1}{2 \, \beta_t} \, \bigg(Y_t^{1, \e} - Y_t^{1, \m} - \frac{1}{\gamma} \bigg) \vee 0 \,dt,\\
    \, dY_t^{1, \e} &= 2\phi_t \, Q_t^{1, \e} \, dt + dM_t^{1, \e},\\
    Q_0^{1, \e} &= q_0^{1, \e}, \quad Y_T^{1, \e} = -2A \, Q_T^{1, \e}.
    \end{aligned}
    \right.
    \label{paper 3 big FBSDE 2}
\end{gather}
Here, the total ask order rate $\hat{a}$ is defined as 
\begin{equation*}
    \hat{a}_t = a_t + \frac{1}{2 \, \beta_t} \, \bigg(Y_t^{1, \e} - Y_t^{1, \m} -\frac{1}{\gamma} \bigg) \vee 0.
\end{equation*}
\end{proposition}

\kong

\noindent It is worth noting that only the adjoint process of market maker $(1, \m)$ appears in the system \eqref{paper 3 big FBSDE 2} of the strategic trader. This is because market maker $(1, \m)$ provides the best ask quote among the two market makers.\\

To solve system \eqref{paper 3 big FBSDE 1}--\eqref{paper 3 big FBSDE 2}, note that the adjoint processes \(Y^{1, \e},Y^{1, \m},Y^{2, \m}\) enter only through their differences and that all players share identical risk parameters. We are motivated to introduce the following notations:
\begin{equation*}
\begin{aligned}
    \begin{bmatrix}
        \mathcal{Q}^{1,\m} \\
        \mathcal{Q}^{2, \m} \\
        \mathcal{Q}^{1, \e} \\
    \end{bmatrix}
    :=
    \begin{bmatrix}
        1 & 0 & 0\\
        1 & -1 & 0\\
        -1 & 0 & 1\\
    \end{bmatrix}
    \begin{bmatrix}
        Q^{1,\m} \\
        Q^{2, \m} \\
        Q^{1, \e} \\
    \end{bmatrix} \text{\; and \;}
    \begin{bmatrix}
        \mathcal{Y}^{1,\m} \\
        \mathcal{Y}^{2, \m} \\
        \mathcal{Y}^{1, \e} \\
    \end{bmatrix}
    :=
     \begin{bmatrix}
        1 & 0 & 0\\
        1 & -1 & 0\\
        -1 & 0 & 1\\
    \end{bmatrix}
    \begin{bmatrix}
        Y^{1,\m} \\
        Y^{2, \m} \\
        Y^{1, \e} \\
    \end{bmatrix}.
\end{aligned}
\end{equation*}
By direct calculations, we can see $(\mathcal{Q}^{1, \e}, \mathcal{Y}^{1, \e}, \mathcal{M}^{1, \e})$ and $(\mathcal{Q}^{i, \m}, \mathcal{Y}^{i, \m}, \mathcal{M}^{i, \m})_{i = 1}^2$ satisfy another FBSDE system:
\begin{equation}
\begin{aligned}
    d\mathcal{Q}_t^{1, \m} &= - \bigg( a_t +\frac{1}{2 \, \beta_t} \, \Big( \mathcal{Y}_t^{1, \e} - \frac{1}{\gamma} \Big) \vee 0 \bigg) \, \exp(-\gamma \, \mathcal{Y}_t^{2, \m}) \, dt + b_t \, \exp(\gamma \, \mathcal{Y}_t^{2, \m}) \, dt,\\
     d\mathcal{Q}_t^{2, \m} &= \bigg( a_t + b_t + \frac{1}{2 \, \beta_t} \, \Big( \mathcal{Y}_t^{1, \e} - \frac{1}{\gamma} \Big) \vee 0 \bigg) \, \Big( \exp(\gamma \, \mathcal{Y}_t^{2, \m}) - \exp(-\gamma \, \mathcal{Y}_t^{2, \m}) \Big) \, dt,\\
    d\mathcal{Q}_t^{1, \e} &= \bigg( \frac{1}{2 \, \beta_t} \, \Big( \mathcal{Y}_t^{1, \e} - \frac{1}{\gamma} \Big) \vee 0 \bigg) \, \Big( 1 + \exp( - \gamma \, \mathcal{Y}_t^{2, \m}) \Big) \, dt \\
    &\hspace{4.5cm} + a_t \, \exp( - \gamma \, \mathcal{Y}_t^{2,\m}) \, dt - b_t \, \exp( \gamma \, \mathcal{Y}_t^{2,\m}) \, dt,
\label{paper 3 equiv big FBSDE}
\end{aligned}
\end{equation}
with $\mathcal{Q}_0^{1, \m} = q_0^{1, \m}, \mathcal{Q}_0^{2, \m} = q_0^{1, \m} - q_0^{2, \m} \geq 0$, and $\mathcal{Q}_0^{1, \e} = q_0^{1, \e} - q_0^{1, \m}$. Again, we omit the backward equations because they are more straightforward. On account of the non-singularity of the transformation matrix, systems \eqref{paper 3 big FBSDE 1}-\eqref{paper 3 big FBSDE 2} and \eqref{paper 3 equiv big FBSDE} are equivalent from the perspective of well-posedness. Further, since we already know that $(Y^{1, \m}, Y^{2, \m}, Y^{1, \e})$ are bounded, the same is true for $(\mathcal{Y}^{1, \m}, \mathcal{Y}^{2, \m}, \mathcal{Y}^{1, \e})$. We thus regard \eqref{paper 3 equiv big FBSDE} as a Lipschitz FBSDE system. Let us introduce the following matrix definition.

\kong

\begin{definition}
    The class of \textit{$Z$-matrices} are those matrices whose off-diagonal entries are less than or equal to zero. An \textit{$M_+$-matrix} is a $Z$-matrix whose row (or column) sums are non-negative.
\end{definition}

The definition of a $Z$-matrix is standard. An $M_+$-matrix is also an $M$-matrix (see Definition 2.5.2 in \cite{horn1994topics}), which can be regarded as a generalization of positive definite matrices with non-positive off-diagonals. Our study of $M_+$-matrices also extends the discussion of $M_0$-matrices in \cite{guo2024macroscopicmarketmakinggames}. 

The following theorem transfers the well-posedness of the FBSDE system \eqref{paper 3 equiv big FBSDE} to the one of BSRE. In one-dimensional cases, such an equation is known as the \textit{characteristic BSDE} introduced by \cite{ma2015well}. The proof is based on the multidimensional decoupling approach for FBSDEs. Although there are several general treatments (e.g., \cite{fromm2013existence} and \cite{ankirchner2020optimal}), we refer the reader to the short review in Section 5 of \cite{guo2024macroscopicmarketmakinggames}, on which our proof relies heavily. Briefly, the decoupling approach states that a Lipschitz relation between the backward and forward variables, with a non-exploding Lipschitz coefficient, ensures global well-posedness of the associated FBSDE.

\kong

\begin{proposition}
\label{Riccati with M matrix}
    The system \eqref{paper 3 big FBSDE 1}-\eqref{paper 3 big FBSDE 2} has a unique solution if the BSRE 
    \begin{equation}
        dR_t = \big( 2\phi_t \, I - R_t \, G_t \, R_t \big) \, dt + dM_t, \quad R_T = - 2A \, I,
        \label{paper 3 M+ BSRE}
    \end{equation}
    has a unique bounded solution $R$, where $I \in \mathbb{R}^{2 \times 2}$ denotes the identity matrix. Here, the adapted matrix-valued process $G$ is continuous, bounded, and is an $M_+$-matrix for all $t$.
\end{proposition}

\kong

In the deterministic setting---when order flows and penalty parameters are non-random---the following theorem presents the corresponding global well-posedness result.

\kong

\begin{theorem}
\label{global well-posed M+}
If the order flows $(a, b)$ and penalty parameters $(\phi, A)$ are deterministic, then BSRE \eqref{paper 3 M+ BSRE} has a unique bounded solution, such that $R_t$ has non-positive entries and is diagonally dominant for all $t$.
\end{theorem}

\begin{figure}
    \centering
    \includegraphics[width=.98\linewidth]{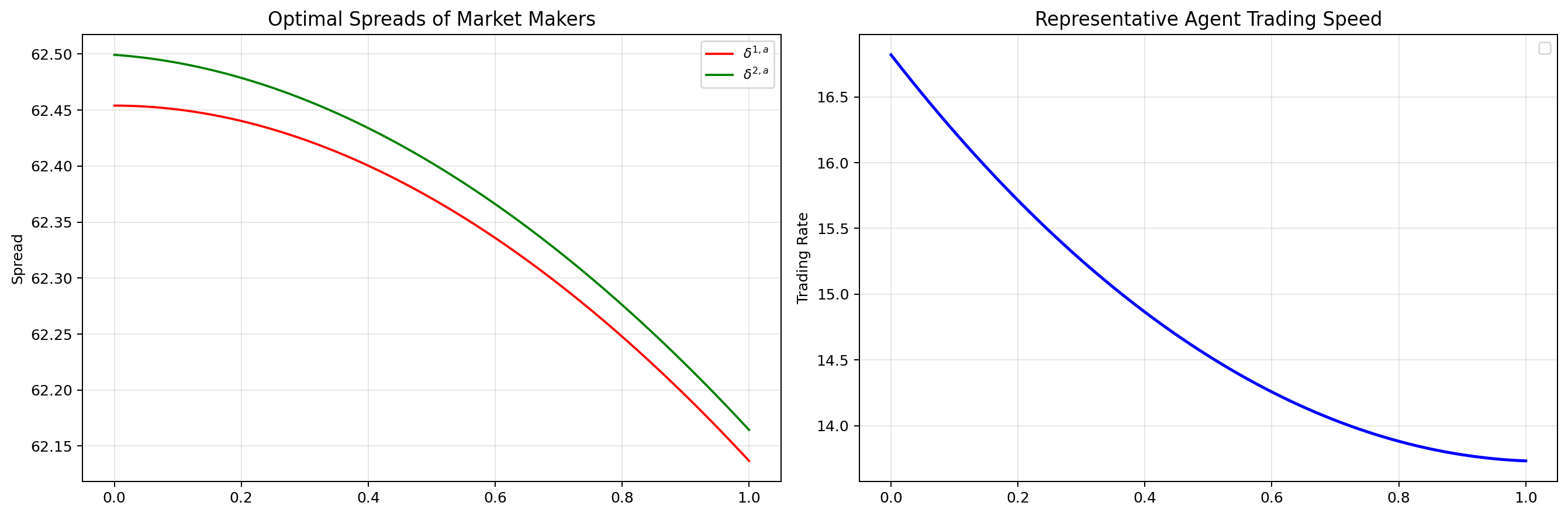}
    \caption{Equilibrium strategies of the trader and market makers}
    \label{fig equil strat}
\end{figure}

\begin{example}
\label{sec 4 example}
    We numerically study the equilibrium strategies of one trader and two market makers. For parameters, we set $T = 1$, $\beta_t = 0.1$, and $a_t = b_t = 0$ for all $t$. We consider a single trader with initial inventory $q_0^{1, \e} = -31$. On the other hand, there are two market makers providing the liquidity for the trader. Their initial inventories are $q_0^{1, \m} = 0$ and $q_0^{2, \m} = 1$. As the coefficient in function $\Lambda(\delta) = \exp(-\gamma \, \delta)$, we set $\gamma = 1$. Finally, all three agents share the same running and terminal penalty parameters: $\phi_t = 0.02$ for all $t$ and $A = 2$. Because equations \eqref{paper 3 big FBSDE 1}-\eqref{paper 3 big FBSDE 2} are deterministic, we still solve it using a boundary-value problem solver.
    
    Figure \ref{fig equil strat} shows the equilibrium strategy profile. It exhibits an ordering property and the distance between the two quotes shrinks over time, as discussed in \cite{guo2024macroscopicmarketmakinggames}. However, the price is monotone decreasing even though the order imbalance created by the trader increases. To understand this, first note that the game is of perfect information and that all market orders are submitted by the strategic trader. Since $ \delta^{1,a} = \gamma^{-1} + Y_t^{1,\m} $, we can interpret $Y^{1, \m}$ as a risk factor. In the deterministic setting, we have
    \begin{equation}
        Y_t^{1,\m} = -2A\, Q_T^{1,\m} - \int_t^T 2\phi_s\, Q_s^{1,\m} \, ds.
        \label{adjoint form}
    \end{equation}
    Therefore, the risk factor decreases over time as the remaining running penalty diminishes, leading to a decline in the price. Moreover, the concavity of the price path is driven by the monotone increase in the market maker's inventory. Although the price declines throughout the trading window, the terminal best ask price is power-law increasing with concavity in respect to both the trader's absolute initial inventory and the actual volume traded, as illustrated in Figure \ref{fig price impact}.

\begin{figure}
    \centering
    \includegraphics[width=0.64\linewidth]{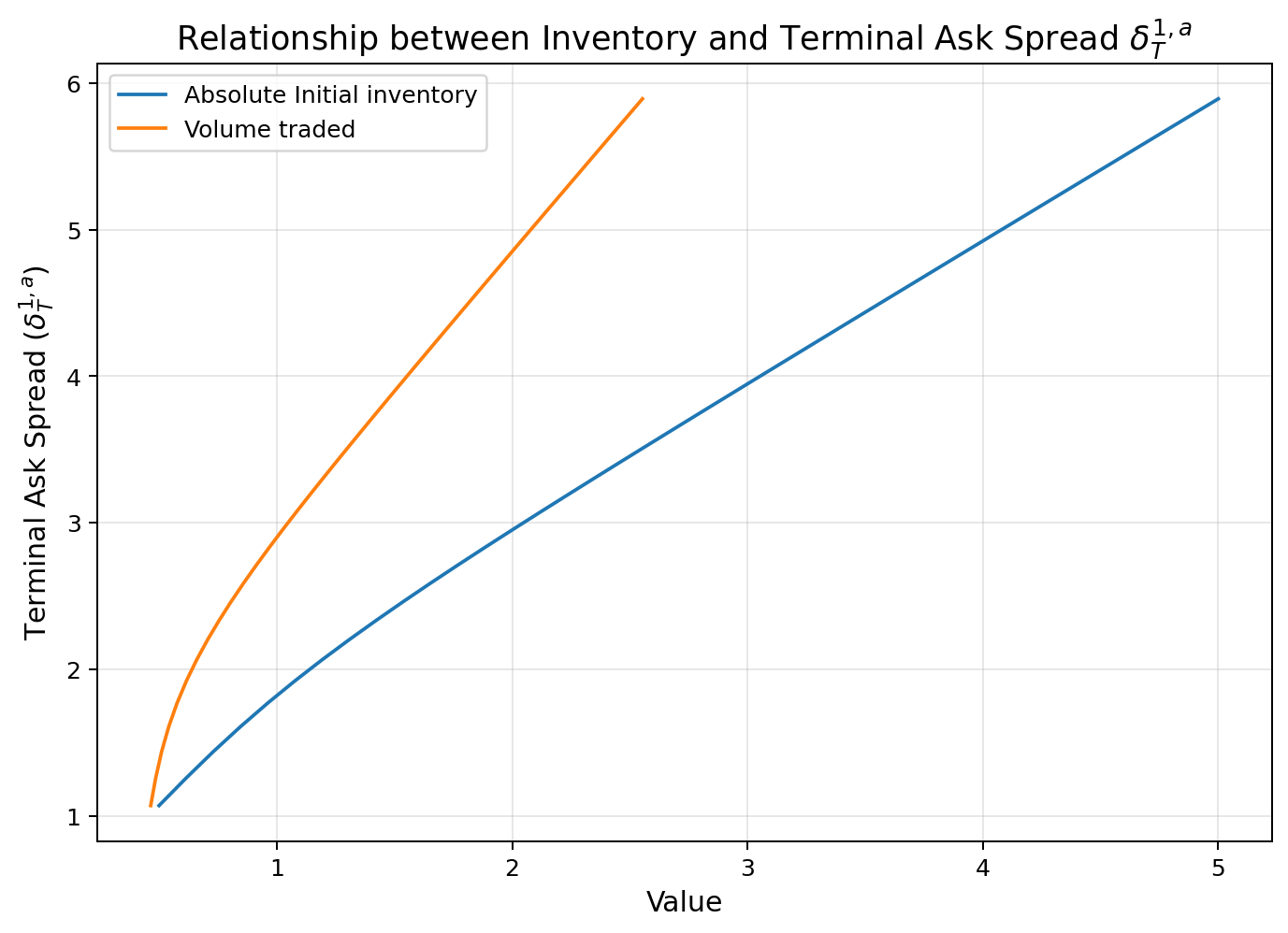}
    \caption{Relationship between volume and terminal ask spread $\delta_T^{1,a}$}
    \label{fig price impact}
\end{figure}

    From an economic standpoint, we may interpret it as follows: the market maker initially posts a high price, knowing that the trader’s urgency level (see the next paragraph) is greatest at the outset. As this urgency declines, she must quote increasingly competitive prices to induce additional trading.
    
    The trader’s strategy follows the classical convex shape familiar from the optimal-execution literature. Since $Y^{1, \e}$ has a similar form to \eqref{adjoint form}, we regard it as an urgency level. Let us compute that 
    \begin{equation*}
        dv_t^{1, \e} = \frac{1}{2 \, \beta_t} \, \big( dY_t^{1, \e} - dY_t^{1, \m} \big) = \frac{\phi_t}{\beta_t} \, \big(Q_t^{1, \e} - Q_t^{1, \m} \big) \, dt.
    \end{equation*}
    The convexity and monotonic decline of the trading rate follow from the facts that $Q^{1, \e}$ is increasing, $Q^{1, \m}$ is decreasing, and $Q_t^{1, \e} \leq Q_t^{1, \m}$ as observed. Given that the terminal penalty is the same and the volume is approximately transferred from the trader to the market maker, it is not profitable for the market maker to hold a more negative position than the trader. In other words, the trader's urgency level decreases faster than the market maker's risk factor.

    Finally, we examine the trader's objective functional values (negative trading cost minus inventory penalties) as the market makers' initial inventories vary. According to Figure \ref{fig obj value}, the contour lines approximately follow the family of lines $q_0^{1, \m} + q_0^{2, \m} = C$, $C \in \mathbb{R}$. As these values increase toward the upper-right, the outcome becomes more favourable for the trader when the market holds greater long positions, since the market makers are then more inclined to sell. Moreover, for fixed $q_0^{1, \m}$ the objective value increases with $q_0^{2, \m}$, reflecting intensified market-making competition and thus better prices, consistent with \cite{guo2024macroscopicmarketmakinggames}.

\end{example}

\begin{figure}
    \centering
    \includegraphics[width=0.6\linewidth]{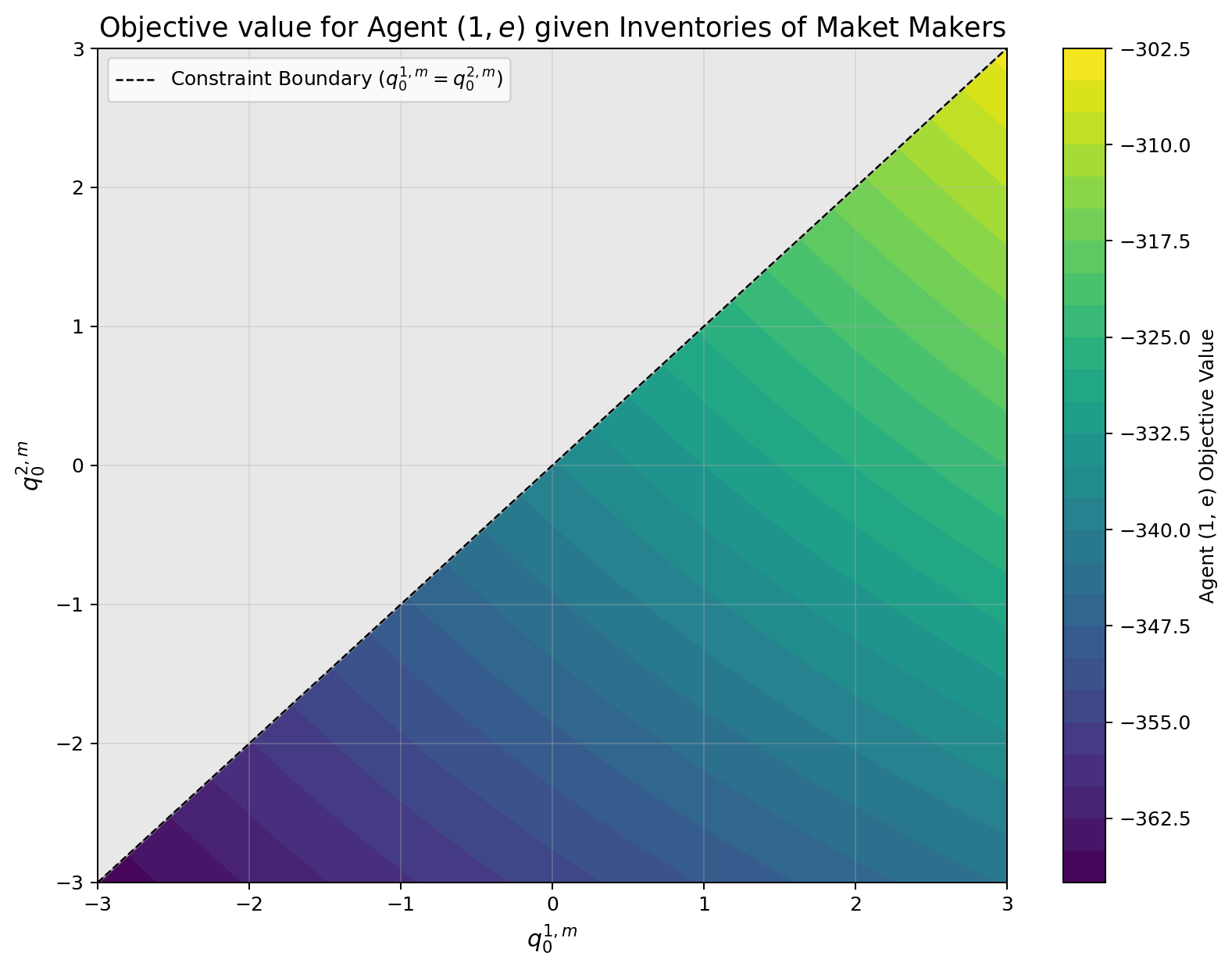}
    \caption{Trader’s objective functional values under various initial conditions}
    \label{fig obj value}
\end{figure}

\vspace{0.2cm}

\section{Approximation Game with Diffusive Inventories}
\label{paper 3 section 5}
\noindent The challenge in the general stochastic game between executors and market makers lies in the multidimensional and highly nonlinear system \eqref{paper 3 big FBSDE market maker}-\eqref{paper 3 big FBSDE optimal execution}. Even in the Almgren-Chriss-Avellaneda-Stoikov model, when considering multiple traders and market makers, the BSRE characterizing the well-posedness of the FBSDE system is beyond the scope of current literature. With this motivation, we introduce an approximate game to circumvent this issue. Specifically, we add uncontrolled and independent diffusion terms to the inventory processes. Economically, the presence of a non-zero Brownian component in inventories is supported by \cite{carmona2023optimal}. Mathematically, introducing such diffusion terms renders the FBSDE system non-degenerate, a setting that is comparatively better understood.

On the other hand, the market-making competition discussed in the preceding sections adheres to the style in \cite{guo2024macroscopicmarketmakinggames}. Competition among market makers within stochastic differential games has been rarely studied, with two notable recent exceptions \cite{cont2022dynamics} and \cite{luo2021dynamic}. Their approach will be considered in this section. Let us start by the following definition. For any $i$, function $\zeta^i$ returns the portion of the order flow captured by market maker $(i, \m)$.

\kong

\begin{definition}
 A set of functions $(\zeta^i)_{i = 1}^{\tilde{N}}$ belongs to the class $\beth$ if the following holds for all $i$:   
\begin{itemize}
    \item function $\zeta^i: \mathbb{R}^{\tilde{N}} \to \mathbb{R}_+$ is twice continuously differentiable;\\
    \vspace{-0.2cm}
    
    \item for all $(\delta^1, \dots, \delta^{\tilde{N}}) \in \mathbb{R}^{\tilde{N}}$, it holds that
    \begin{equation*}
        \frac{\partial \zeta^i}{\partial \delta^i} < 0, \quad \frac{\partial \zeta^i}{\partial \delta^j} \geq 0, \quad \text{and \;} \zeta^i \, \frac{\partial^2 \zeta^i}{\partial (\delta^i)^2} \leq 2 \, \Big( \frac{\partial \zeta^i}{\partial \delta^i} \Big)^2
    \end{equation*}
    for any $j \neq i$;\\
    \vspace{-0.2cm}
    
    \item for all $(\delta^1, \dots, \delta^{\tilde{N}}) \in \mathbb{R}^{\tilde{N}}$, it holds that
    \begin{equation}
        \Big(\frac{\partial \zeta^i}{\partial \delta^{i}} \Big)^{-2} \, \bigg[ 2 \, \Big( \frac{\partial \zeta^i}{\partial \delta^{i}} \Big)^2 - \zeta^i \, \frac{\partial^2 \zeta^i}{\partial (\delta^{i})^2} - \sum_{k \neq i} \Big| \frac{\partial \zeta^i}{\partial \delta^{i}} \, \frac{\partial \zeta^i}{\partial \delta^{k}} - \zeta^i \, \frac{\partial^2 \zeta^i}{\partial \delta^{i} \partial \delta^{k}} \Big|  \bigg] \geq C
        \label{paper 3 market maker growth cond}
    \end{equation}
    for some $C > 0$.
\end{itemize}
\label{paper 3 inten_assu extend}
\end{definition}

\kong

\begin{remark}
    Compared with the discussion of Section \ref{paper 3 section 3}, the Definition \eqref{paper 3 inten_assu extend}---borrowed and slightly revised from \cite{luo2021dynamic} and \cite{cont2022dynamics}---essentially introduces a different description of market makers' competition. Indeed, in the previous section the agent competes with the one who provides the best price throughout the time, while the above definition specifies that the agent competes with all other agents in a pre-determined manner.
    
    The intuition follows similarly: the portion decreases with respect to the gap of the agent $i$, but increases with respect to the gaps of her competitors. Inequality \eqref{paper 3 market maker growth cond} is a technical condition for the implicit function theorem. 
    
    The first example would be
    \begin{equation}
        \zeta^i(\delta^1, \dots, \delta^{\tilde{N}}) = \frac{e^{-\varsigma \, \delta^i + o^i}}{\sum_{k = 1}^{\tilde{N}} e^{-\varsigma \, \delta^k + o^k}}
        \label{exponential comp}
    \end{equation}
    for constants $\varsigma>0$ and $(o^i)_{i=1}^{\tilde{N}} \in \mathbb{R}^{\tilde{N}}$, which further satisfies the market clearance condition since $\sum_{j = 1}^{\tilde{N}} \zeta^j = 1$. More flexible examples can be
    \begin{equation*}
        \zeta^i(\delta^1, \dots, \delta^{\tilde{N}}) = \Lambda^i(\delta^i) \, \mathcal{G}^i(\delta^{-i}).
    \end{equation*}
    Here, function $\Lambda^i: \mathbb{R} \to \mathbb{R}_+$ is a twice continuously differentiable decreasing function, such that 
    \begin{equation*}
        \frac{\Lambda^i \, \frac{\partial^2 \Lambda^i}{\partial \delta^2}}{\big( \frac{\partial \Lambda^i}{\partial \delta} \big)^2} \leq 2 - \epsilon
    \end{equation*}
    for some $\epsilon > 0$, and function $\mathcal{G}^i: \mathbb{R}^{\tilde{N}- 1} \to \mathbb{R}_+$ is twice continuously differentiable and increasing in all variables, satisfying $\inf_{\boldsymbol{\delta}}\mathcal{G}^i(\boldsymbol{\delta})>0$.

    Although we focus on the competition specified by $\beth$ here, the following discussion still applies to the competition style used in the previous sections.
    
\end{remark}

\kong

\noindent Consider $\tilde{N}$ market makers indexed by $(i, \m)$. Given any admissible strategy $\boldsymbol{\delta}^i \in \mathbb{A}^\m \times \mathbb{A}^\m$ as before, let us write $\boldsymbol{\delta}^a:=(\delta^{1, a}, \dots, \delta^{\tilde{N}, a})$ and $\boldsymbol{\delta}^b$ is similarly defined. The inventory and cash of market maker $(i, \m)$ read
\begin{align*}
    X_t^{i, \m} &= \int_0^t \big( P_u + \delta_u^{i,a} \big) \, \tilde{a}_u \, \zeta^i \big( \boldsymbol{\delta}^a_u \big) \, du - \int_0^t \big( P_u - \delta_u^{i,b} \big) \, \tilde{b}_u \, \zeta^i \big( \boldsymbol{\delta}^b_u \big) \, du,\\
    Q_t^{i, \m} &= q_0^{i, \m} - \int_0^t \tilde{a}_u \, \zeta^i \big( \boldsymbol{\delta}^a_u \big) \, du + \int_0^t \tilde{b}_u \, \zeta^i \big( \boldsymbol{\delta}^b_u \big) \, du  + \epsilon \, W_t^{i, \m},
    \nonumber   
\end{align*}
where $\epsilon > 0$ is some constant and $\vec{W}^\m := (W^{1, \m}, \dots, W^{\tilde{N}, \m})$ consists of $\tilde{N}$ independent Brownian motions. Processes $\tilde{a}, \tilde{b}$ will be specified later on.

\kong

\begin{remark}
    We introduce uncontrolled and independent diffusion terms into the inventory dynamics. Financially, the presence of a non-zero Brownian component in inventories is documented in \cite{carmona2023optimal}. We include an independent Brownian perturbation to capture exogenous inventory shocks from uncontrollable external order flow and other exogenous factors. Mathematically, adding these diffusion terms renders the FBSDE non-degenerate and connects our analysis to a broader literature, notably the seminal work \cite{delarue2002existence}.
\end{remark}

\kong

\noindent The player $(i, \m)$ aims at maximizing the analogous objective functional
\begin{equation*}
\begin{aligned}
    &J^{i, \m} ( \boldsymbol{\delta}^i ; \boldsymbol{\delta}^{-i}, \boldsymbol{v}^\e )\\
    :&=\mathbb{E} \bigg[ X_T^{i, \m} + P_T \, Q_T^{i, \m} - \int_0^T \phi_t^{i, \m} \, \big( Q_t^{i, \m} \big)^2 \, dt - A^{i, \m} \, \big( Q_T^{i, \m} \big) ^2 \bigg]\\
     &= P_0 \,q_0^{i, \m} + \epsilon \, \mathbb{E}\big[ P_T \, W_T^{i, \m} \big]\\
     & \qquad + \mathbb{E} \bigg[ \int_0^T \Big( \delta_t^{i, a} \, \tilde{a}_t \, \zeta^i \big( \boldsymbol{\delta}^a_t \big) + \delta_t^{i, b} \, \tilde{b}_t \, \zeta^i \big( \boldsymbol{\delta}^b_t \big) \Big) \, dt - \int_0^T \phi_t^{i, \m} \, \big( Q_t^{i, \m} \big)^2 \, dt - A^{i, \m} \, \big( Q_T^{i, \m} \big)^2 \bigg].
\end{aligned}
\end{equation*}

The market also includes $N$ traders indexed by $(i, \e)$. Intending to trade $q_0^{i, \e}$ amount of assets, the inventory and cash of agent $(i, \e)$ now follow 
\begin{align*}
    Q_t^{i,\e} &= q_0^{i, \e} + \int_0^t v_u^{i, \e} \, du + \epsilon \, W_t^{i,\e},\\
    X_t^{i, \e} &= -\int_0^t \Big( S_u^{i,\e} \, v_u^{i,e} + \lambda^i(\boldsymbol{v}_u^{\e}) \Big) \, du,
\end{align*}
where $v^{i, \e}\in\mathbb{A}^{i,\e}$ represents the trading strategy, $S^{i,\e}$ indicates the market price in view of agent $(i,\e)$---see \eqref{paper 3 different market price}, and $(\lambda^i)_{i=1}^N \in \Upsilon$ denotes the joint temporary impact cost. Here, process $\vec{W}^\e := (W^{1, \e}, \dots, W^{N, \e})$ consists of $N$ independent Brownian motions that are also independent of $\Vec{W}^\m$. The agent $(i, \e)$ aims at maximizing the functional
\begin{equation*}
\begin{aligned}
    &J^{i, \e}(v^i ; v^{-i}, (\boldsymbol{\delta}^{j})_{j})\\
    :& = \mathbb{E}\bigg[ X_T^{i, \e} + P_T \, Q^{i, \e}_T - \int_0^T \phi_t^{i, \e} \, \big( Q_t^{i, \e} \big)^2 \, dt - A^{i, \e} \, \big( Q_T^{i, \e} \big)^2 \bigg]\\
    & = P_0 \, q_0^{i, \e} + \epsilon \, \mathbb{E} \big[ P_T \, W_T^{i, \e} \big]\\
    &\hspace{1.5cm} - \mathbb{E} \bigg[ \int_0^T \Big( \mathbb{I}(q_0^{i,\e} < 0) \, \min_{j} \delta^{j, a}_t - \mathbb{I}(q_0^{i, \e} > 0) \, \min_{j} \delta^{j, b}_t \Big) \, v_t^{i, \e} \, dt \bigg]\\
    & \hspace{2.4cm} - \mathbb{E} \bigg[  \int_0^T \lambda^i(\boldsymbol{v}_t^{\e}) \, dt + \int_0^T \phi_t^{i, \e} \, \big( Q_t^{i,\e} \big)^2 \, dt + A^{i, \e} \, \big(Q_T^{i, \e} \big)^2 \bigg].
\end{aligned}
\end{equation*}

Finally, total market order rates $\tilde{a}$ and $\tilde{b}$ are defined as
\begin{equation*}
\begin{aligned}
        \tilde{a}_t &= \kappa^{a}(L_t) + \sum_{i = 1}^N v_t^{i,\e} \, \mathbb{I}(q_0^{i,e} < 0),\\
        \tilde{b}_t &= \kappa^{b}(L_t) - \sum_{i = 1}^N v_t^{i,\e} \, \mathbb{I}(q_0^{i,e} > 0),
\end{aligned}
\end{equation*}
where $L$ is the (unique) solution of the stochastic differential equation
\begin{equation}
    dL_t = \Gamma(t, L_t) \, dt + \Sigma(t, L_t) \, d\vec{W}_t^0, \qquad L_0 = l_0 \in \mathbb{R}^{\hat{N}},
    \label{paper 3 noise sde}
\end{equation}
for continuous functions $\Gamma : [0, T] \times \mathbb{R}^{\hat{N}} \to \mathbb{R}^{\hat{N}}$ and $\Sigma : [0, T] \times \mathbb{R}^{\hat{N}} \to \mathbb{R}^{{\hat{N}} \times {\hat{N}}}$ subject to the assumptions below. Here, process $\vec{W}^0$ is a ${\hat{N}}$-dimensional Brownian motion independent of both $\Vec{W}^\e$ and $\Vec{W}^\m$. Properties of functions $\kappa^a, \kappa^b: \mathbb{R}^{\hat{N}} \to \mathbb{R}_+$ will be specified in the following assumption. Moreover, the original Brownian motion $W$ is now defined as $W := (\Vec{W}^0, \Vec{W}^\e, \Vec{W}^\m)$. Compared with the earlier order flow modelling, processes $a_t$ and $b_t$ are replaced by $\kappa^a(L_t)$ and $\kappa^b(L_t)$ respectively. The process $L$ serves as the driving process that determines the order flows. Mathematically, this modification ensures a Markovian structure.

\kong

\begin{assumption}
(1) Functions $\Gamma$ and $\Sigma$ also satisfy the following conditions:
\kong
\begin{itemize}
\item for $G\in\{ \Gamma, \Sigma \}$, there exists a constant $C > 0$ such that
\begin{equation*}
    |G(t,x)-G(t,y)| \leq C \, |x - y|;
\end{equation*}
for all $t\in[0,T]$ and any $x, y \in \mathbb{R}^{\hat{N}}$;\\
\vspace{-0.2cm}

\item function $\Sigma$ is uniformly elliptic in the sense that, for any $t \in [0, T]$ and $x\in \mathbb{R}^{\hat{N}}$, it holds that $(\Sigma \Sigma^*)(t, x) \geq C \, I_{\hat{N}}$ in the sense of symmetric matrices, where $C > 0$ is some constant and $I_{\hat{N}}$ is the ${\hat{N}}$-dimensional identity matrix;\\
\vspace{-0.2cm}

\item there exists $C > 0$ such that $|\Gamma(t, x)| \leq C$ for all $t$ and $x$.
\end{itemize}

\kong

(2) If $G \in \{ \kappa^a, \kappa^b \}$, then $G$ is a positive-valued function for which there exists a constant $C > 0$ such that
\begin{equation*}
    |G(x) - G(y)| \leq C \, |x-y| \text{ \; and \; } |G(x)| \leq C
\end{equation*}
for all $x, y \in \mathbb{R}^{\hat{N}}$.\\
\vspace{-0.2cm}

(3) In this section we assume all penalty parameters---namely the $\phi$s and $A$s---are deterministic. One may generalize by taking the penalty parameters to be functions of $L$,  at the expense of heavier notation.
\label{paper 3 sde_assumption}
\end{assumption}

\kong

Our goal is to find the Nash equilibrium. Although diffusion terms have been added to the inventory dynamics, they are not controlled and in fact have no influence on any agent's Hamiltonian. Hence, the Nash equilibrium can be similarly characterized by an FBSDE system.

\kong

\begin{proposition}
\label{sec 5 sto max prin}
    An admissible strategy profile $(\boldsymbol{v}^\e, (\boldsymbol{\delta}^j)_{j}) \in \big( \Pi_{i = 1}^N \mathbb{A}^{i, \e} \big) \times (\mathbb{A}^\m \times \mathbb{A}^\m)^{\tilde{N}}$ forms a Nash equilibrium if and only if
    \begin{equation*}
        v^{j, \e}_t = \varphi^j(\boldsymbol{Y}_t^\e, \boldsymbol{Y}_t^\m), \quad \delta^{i,a}_t = \psi^{i, a}(\boldsymbol{Y}_t^\m), \text{\quad and \quad} \delta^{i, b}_t = \psi^{i, b}(\boldsymbol{Y}_t^\m)
    \end{equation*}
    for all $i$ and $j$, where $\psi^a, \psi^b: \mathbb{R}^{\tilde{N}} \to \mathbb{R}^{\tilde{N}}$ and $\varphi: \mathbb{R}^{N} \times \mathbb{R}^{\tilde{N}} \to \mathbb{R}^N$ are some Lipschitz functions. Here, adjoint processes $\boldsymbol{Y}^\m$ and $\boldsymbol{Y}^\e$ solve the FBSDE system
    \begin{equation}
    \left\{
    \begin{aligned}
     dQ_t^{i, \m} &= - \hat{a}_t \, \zeta^i \big(\psi^{a}(\boldsymbol{Y}_t^\m) \big) \, dt + \hat{b}_t \, \zeta^i \big( \psi^{b}(\boldsymbol{Y}_t^\m) \big) \, dt + \epsilon \, dW_t^{i, \m},\\
    \, dY_t^{i, \m} &= 2\phi_t^{i, \m} \, Q_t^{i, \m} \, dt + dM_t^{i, \m},\\
    Q_0^{i, \m} &= q_0^{i, \m}, \quad Y_T^{i, \m} = -2A^{i, \m} \, Q_T^{i, \m};
    \end{aligned}
    \right.
    \label{paper 3 big FBSDE market maker approx}
    \end{equation}
    \begin{equation}
        \left\{
    \begin{aligned}
     dQ_t^{j, \e} &= \varphi^j(\boldsymbol{Y}_t^\e, \boldsymbol{Y}_t^\m) \, dt + \epsilon \, dW_t^{j, \e},\\
    \, dY_t^{j, \e} &= 2\phi_t^{j, \e} \, Q_t^{j, \e} \, dt + dM_t^{j, \e},\\
    Q_0^{j, \e} &= q_0^{j, \e}, \quad Y_T^{j, \e} = -2A^{j,\e} \, Q_T^{j, \e},
    \end{aligned}
    \right.
    \label{paper 3 big FBSDE optimal execution approx}
\end{equation}
for all $i$ and $j$, where order flows $\hat{a}, \hat{b}$ are defined as
\begin{equation*}
\begin{aligned}
    \hat{a}_t &= \kappa^a(L_t) + \sum_{i = 1}^N \varphi^i(\boldsymbol{Y}_t^\e, \boldsymbol{Y}_t^\m) \, \mathbb{I}(q_0^{i,e} < 0),\\
    \hat{b}_t &= \kappa^b(L_t) - \sum_{i = 1}^N \varphi^i(\boldsymbol{Y}_t^\e, \boldsymbol{Y}_t^\m) \, \mathbb{I}(q_0^{i,e} > 0).
\end{aligned}
\end{equation*}
\end{proposition}

\kong

Although the FBSDE system \eqref{paper 3 big FBSDE market maker approx}-\eqref{paper 3 big FBSDE optimal execution approx} is non-Markovian due to the noise $L$, the extended system \eqref{paper 3 noise sde}-\eqref{paper 3 big FBSDE optimal execution approx} is Markovian if we regard \eqref{paper 3 noise sde} as a $\hat{N}$-dimensional trivial FBSDE. Moreover, the extended system is non-degenerate because of the additional diffusion terms. The combination of non-degeneracy and Markovian nature enables us to apply tools from a more extensive literature.

\kong

\begin{theorem}
\label{sec 5 well pose}
    System \eqref{paper 3 noise sde}-\eqref{paper 3 big FBSDE optimal execution approx} has a unique solution.
\end{theorem}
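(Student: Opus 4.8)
The plan is to recast \eqref{paper 3 noise sde}--\eqref{paper 3 big FBSDE optimal execution approx} as a single fully-coupled FBSDE of non-degenerate Markovian type and then invoke \cite{nam2022coupled}. I would collect the forward state into $\boldsymbol{X} := (L, \boldsymbol{Q}^\m, \boldsymbol{Q}^\e) \in \mathbb{R}^{3N}$ and the adjoint state into $\boldsymbol{Y} := (\boldsymbol{Y}^\m, \boldsymbol{Y}^\e)$, regarding \eqref{paper 3 noise sde} as a trivial backward equation (zero driver and zero terminal datum) so that $L$ enters only through the coefficients. Driven by the independent Brownian motions $\vec{W}^0, \vec{W}^\m, \vec{W}^\e$, the diffusion matrix of $\boldsymbol{X}$ is block diagonal, $\mathrm{diag}\big(\Sigma(t, L_t),\, \epsilon I_N,\, \epsilon I_N\big)$. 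Since $\Sigma \Sigma^* \ge C^{-1} I_N$ by Assumption \ref{paper 3 sde_assumption} and $\epsilon > 0$, this matrix is uniformly elliptic, which is the crucial non-degeneracy hypothesis; it is this feature, rather than any monotonicity or smallness of $T$, that delivers \emph{global} well-posedness for a coupling with no exploitable monotone structure.

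Second, I would verify that every coefficient is Lipschitz with at most linear growth. The forward drifts depend on the state only through $L$ and the adjoint variables: the drift of $Q^{i,\m}$ is $-\hat{a}_t\,\zeta^i(\psi^a(\boldsymbol{y}^\m)) + \hat{b}_t\,\zeta^i(\psi^b(\boldsymbol{y}^\m))$ and that of $Q^{j,\e}$ is $\varphi^j(\boldsymbol{y}^\e, \boldsymbol{y}^\m)$. Here $\varphi, \psi^a, \psi^b$ are the bounded Lipschitz maps produced in the preceding theorem (the truncations by $\xi, \tilde\xi$ keeping them bounded), each $\zeta^i$ is continuously differentiable (Definition \ref{paper 3 inten_assu extend}) and, evaluated at the bounded arguments $\psi^a, \psi^b \in [-\xi,\xi]^N$, is bounded and Lipschitz there, while $\kappa^a, \kappa^b$ and $\Gamma$ are bounded and Lipschitz by Assumption \ref{paper 3 sde_assumption}. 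Consequently $\hat{a}_t, \hat{b}_t$ and hence both forward drifts are bounded and Lipschitz in $(L, \boldsymbol{Y})$. On the backward side the drivers $2\phi_t^{i}\,Q_t^{i}$ are linear in the forward component with deterministic bounded coefficient $\phi$ (Assumption \ref{paper 3 sde_assumption}(3)) and independent of the martingale integrand, and the terminal maps $Q_T \mapsto -2A^{i} Q_T$ are linear; thus all data are Lipschitz of linear growth.

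With non-degeneracy and these Lipschitz bounds in hand, the result of \cite{nam2022coupled} for coupled FBSDEs of non-degenerate Markovian type applies and yields a unique solution together with a Lipschitz decoupling field $\boldsymbol{Y}_t = u(t, \boldsymbol{X}_t)$ whose spatial Lipschitz constant is bounded on $[0,T]$ --- the sense in which the solution is bounded. I would finish by translating back: the trivial backward component is identically zero and plays no role, so the restriction of the solution to the genuine coordinates solves \eqref{paper 3 big FBSDE market maker approx}--\eqref{paper 3 big FBSDE optimal execution approx}, and uniqueness for the augmented system transfers verbatim.

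The main obstacle I anticipate is the precise matching of hypotheses to \cite{nam2022coupled}: the terminal conditions and backward drivers are Lipschitz but \emph{unbounded} (linear in $Q$), whereas several non-degenerate Markovian well-posedness theorems are phrased for bounded terminal data, so care is needed to use (or adapt) the version permitting linear growth, in which the uniform ellipticity supplies the a priori gradient bound on the decoupling field independently of that growth. A secondary, more routine point is confirming uniform ellipticity of the full $3N$-dimensional block-diagonal diffusion simultaneously with the coupling, which hinges precisely on the mutual independence of $\vec{W}^0, \vec{W}^\m, \vec{W}^\e$ making the diffusion block diagonal and the Markovian structure genuine.
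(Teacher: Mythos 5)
Your proposal follows essentially the same route as the paper's proof: augment the system by treating \eqref{paper 3 noise sde} as a trivial FBSDE component to obtain a non-degenerate Markovian system, verify that the truncations $\xi, \tilde{\xi}$ and Assumption \ref{paper 3 sde_assumption} make all forward drifts bounded and Lipschitz in $(L, \boldsymbol{Y}^\e, \boldsymbol{Y}^\m)$, and then invoke \cite{nam2022coupled}. The obstacle you flag at the end---linear-growth (unbounded) terminal data and drivers---is resolved in the paper exactly as you anticipate: conditions (F1) and (B1) of Theorem 2.3 in \cite{nam2022coupled} accommodate bounded linear coefficients $h$ and $k(t)$, with uniqueness obtained from the integrability of $\boldsymbol{X}$ via Lemma 4.1 there.
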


\kong

\begin{example}
    Numerically, we study the influences of noise order flows on the equilibrium strategies of a single trader and two market makers. For parameters, we set $T = 1$ and $\epsilon = 10^{-3}$. We consider a single trader with initial inventory $q_0^{1, \e} = -5$ and $\beta_t = 0.1$ as the coefficient of the linear temporary impact. On the other hand, the initial inventories of two market makers are $q_0^{1, \m} = 0$ and $q_0^{2, \m} = -0.2$. Regarding the competition, we consider a simple case of \eqref{exponential comp}:
    \begin{equation*}
        \zeta^1(\delta^1, \delta^2) = \frac{\exp(-\delta^1)}{\exp(-\delta^1) + \exp(-\delta^2)}.
    \end{equation*}
    The running and terminal penalty parameters of three players are $\phi^{1, \e}_t = 0.04$, $A^{1, \e}=4$ and $\phi^{1, \m}_t = \phi_t^{2, \m}=0.01$, $A^{1, \m} = A^{2, \m} = 1$.

    In terms of order flows, we model $L$ as a one-dimensional market sentiment process, which satisfies a bounded mean-reverting SDE:
    \begin{equation*}
        dL_t = -2 \, \text{tanh}(L_t) \, dt + 0.2 \, dW_t^0; \quad L_0 = 0.
    \end{equation*}
    Since $L_0 = 0$, the market sentiment is zero on average. For functions $\kappa^a$ and $\kappa^b$, we adopt logistic-styles functions:
    \begin{equation*}
        \kappa^a(\ell) = 1 + \frac{8}{1 + \exp(-\ell)} \quad \text{and} \quad \kappa^b(\ell) = 1 + \frac{8}{1 + \exp(\ell)}.
    \end{equation*}
    The first constant term denotes the base trading rate, while the second term captures the additional rate induced by sentiment; higher sentiment increases the ask order rate and decreases the bid rate.
    
    To solve the multidimensional FBSDE, we implement a Deep BSDE solver, which effectively circumvents the curse of dimensionality associated with traditional grid-based methods. Following the architecture proposed by \cite{han2018solving}, the solver reformulates the FBSDE as a stochastic control problem, where the initial value of the adjoint process $Y_0$ and the control process $(Z_t)_{t \in [0, T]}$ are approximated by multi-layer perceptrons. The forward inventory process $Q$ and the backward adjoint process $Y$ are evolved using an Euler-Maruyama discretization over the interval $[0, T]$. The network parameters $\theta$ are calibrated by minimizing a loss function defined by the mean squared error between the simulated terminal value $Y_T^{(\theta)}$ and terminal condition prescribed in the FBSDE.  We simulate $10^4$ equilibria. Figure \ref{sec 5 simulations} shows a few simulated trading rates and spreads of market maker $(1, \m)$, who posts a lower ask price, together with their mean paths across all simulations. Note that the mean strategies match the results in the previous section; see Figure \ref{fig equil strat}. Our goal is to understand the deviation of each path from its mean.

    \begin{figure}
        \centering
        \includegraphics[width=0.495\linewidth]{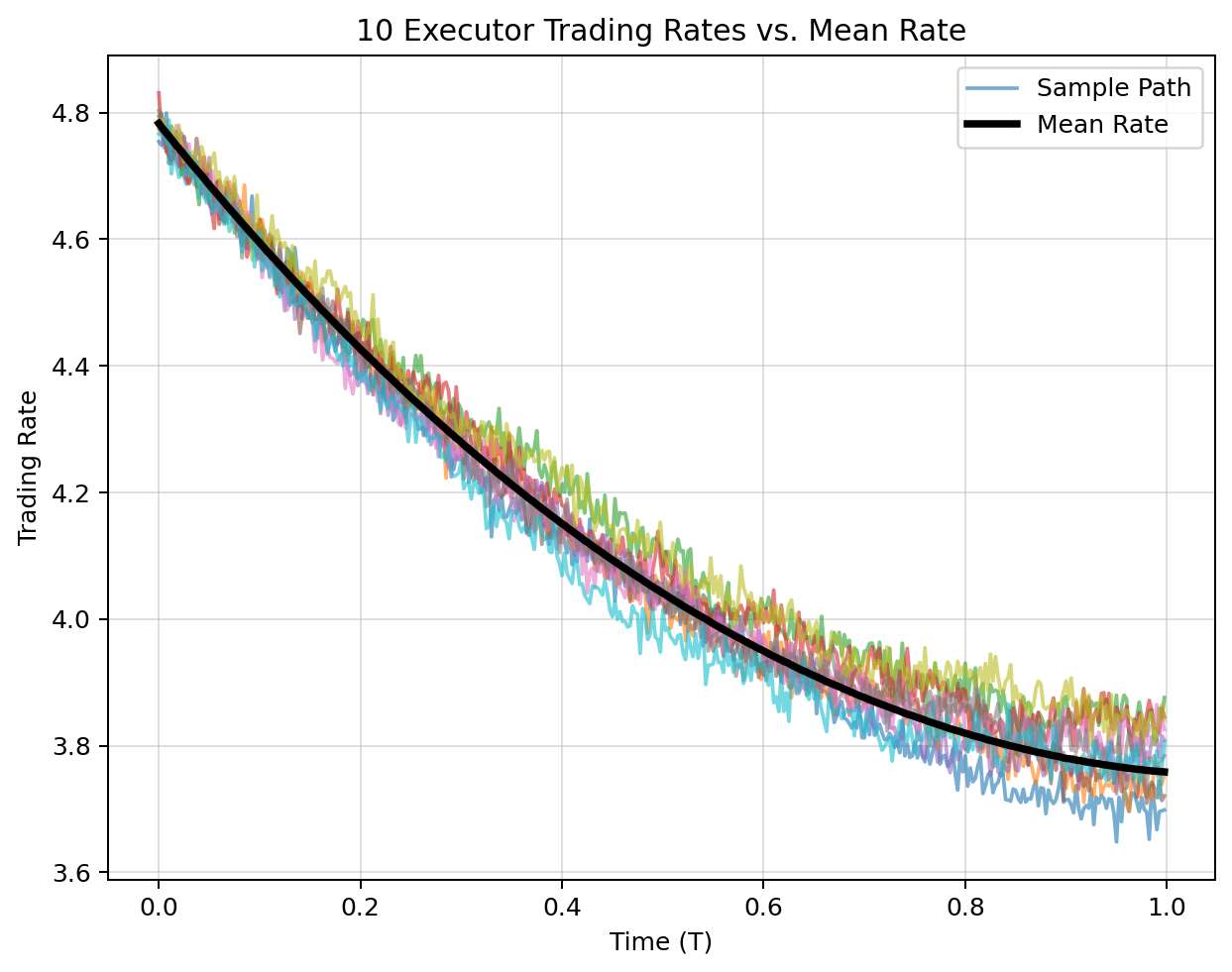}
        \includegraphics[width=0.495\linewidth]{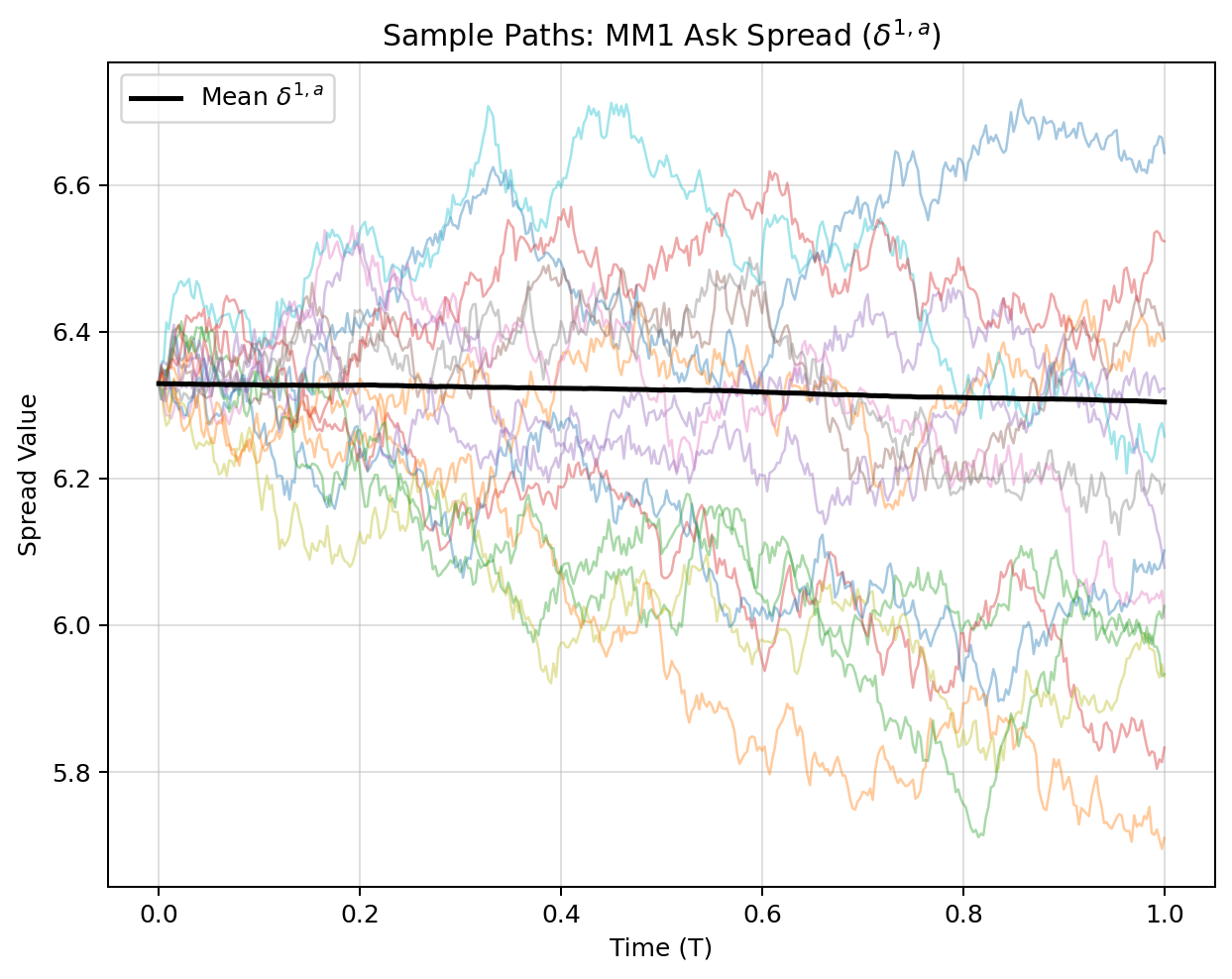}
        \caption{Simulated equilibrium trading rate and spread with mean values}
        \label{sec 5 simulations}
    \end{figure}
    
    Figure \ref{sec 5 correlations} reveals the correlations among trading rate $v^{1, \e}$, ask spread $\delta^{1, a}$, and noise ask rate $(\kappa^a(L_t))_t$ at different times. We first note the following observations.

    \begin{itemize}
        \item The spread $\delta^{1, a}$ is highly positively correlated with the noise ask rate, although this correlation decays slightly over time. This behaviour is consistent with control-theoretic perspective of market making in Example 2.12 of \cite{guo2023macroscopic}. The spread is shown to depend on the conditional expectation of future weighted order imbalance, adjusted by the current inventory level. Our simulations indicate that, when $L$ is small, the expected imbalance varies approximately linearly with the current noise ask rate.\\

        \vspace{-0.3cm}

        \item The correlation between trading rate $v^{1, \e}$ and the noise ask rate is negative: its magnitude increases rapidly at early times and then decays slightly near the end. This pattern agrees with \cite{cartea2016incorporating}, where optimal execution is formulated as a control problem. In that work, the execution strategy is adjusted by a weighted average of the future expected order imbalance. The initially small absolute correlation reflects the influence of the running penalty, while the late-time decay is driven by the terminal penalty.\\

        \vspace{-0.3cm}

        \item The correlation between the trading rate $v^{1, \e}$ and the spread $\delta^{1, a}$ is negative, with magnitude increasing over time. Notably, even though both $v^{1, \e}$ and the noise rate are both market ask orders, their respective correlations with the spread are remarkably distinct.
    \end{itemize}

    Recall the optimal control form \eqref{paper 3 big game feedback}. We provide a unified interpretation. Initially, the trader's execution is largely dictated by the running penalty, necessitating a steady volume regardless of market fluctuations. However, as her urgency level $Y^{1, \e}$ (see Example \ref{sec 4 example}) decays, she becomes more sensitive to prevailing market conditions. When the noise ask rate is high, the trader anticipates a subsequent rise in both the market makers' inventories and the price. Mathematically, a rising price reduces the trading profit margin as per \eqref{paper 3 big game feedback}. Economically, she anticipates that the eventual mean-reversion of the noise flow---combined with the market makers' risk aversion---will lead to a contraction in their positions and a subsequent compression of their spreads. Consequently, she strategically defers execution to capture these more favourable future prices.

    \begin{figure}
        \centering
        \includegraphics[width=0.65\linewidth]{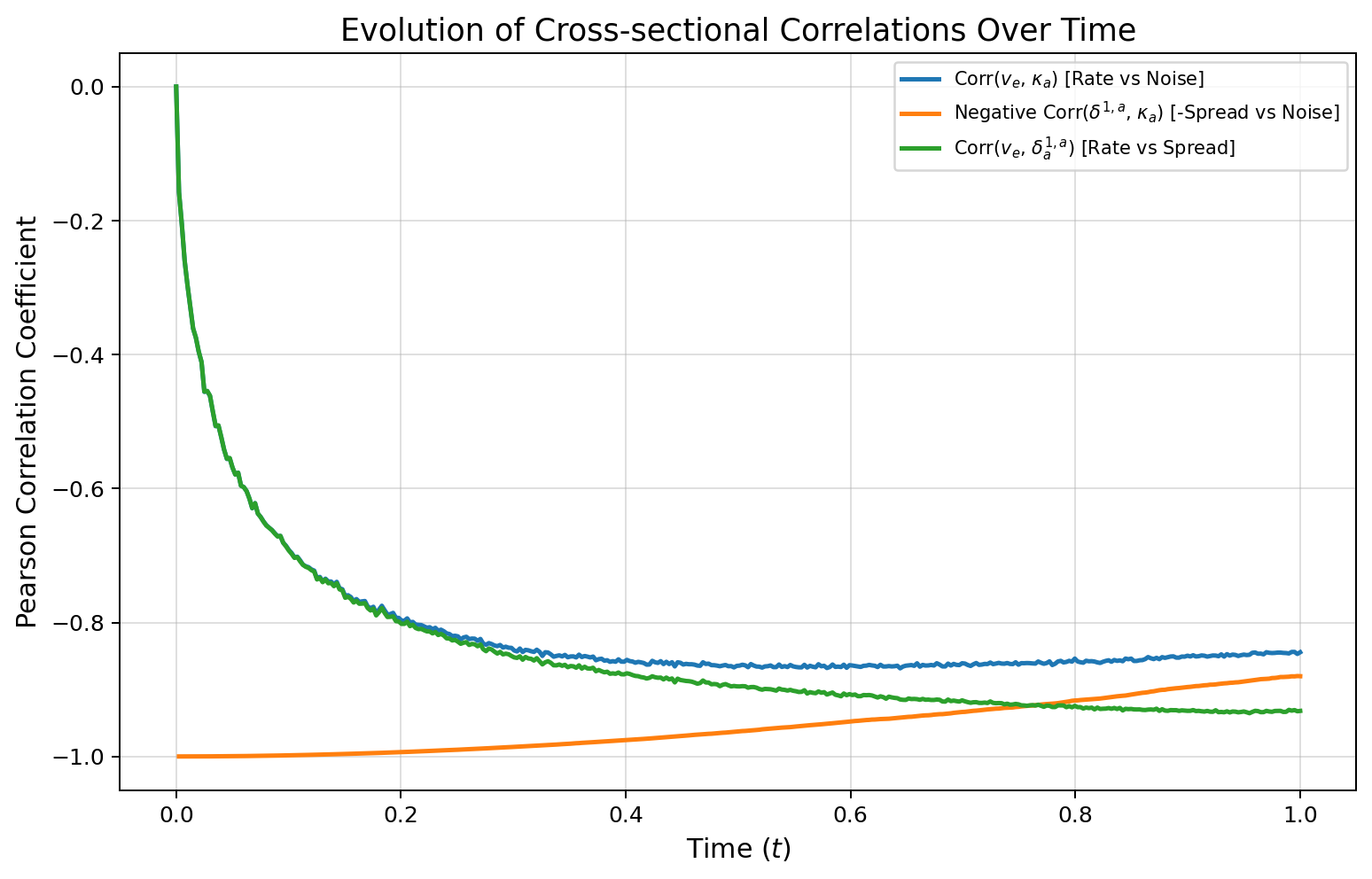}
        \caption{Correlations among trading rate, ask spread, and noise ask rate}
        \label{sec 5 correlations}
    \end{figure}

    The left panel of Figure \ref{high low cost} compares the volume-weighted average price (VWAP) for both strategic and noise traders, ignoring temporary impact. The two average costs are highly correlated and nearly identical. Although the trader follows markedly different execution strategies, she effectively tracks the VWAP of the market. Indeed, the strategic trader slightly outperforms the market. With lower mean and variance, she attains a lower price than noise traders in over $67\%$ of simulations, and the $95\%$ confidence interval for the regression slope is $[0.968,0.969]$. To examine cost variation, we consider cases in which both average prices exceed $6.6$ and cases in which both are below $6.1$. The right panel displays sample paths and the mean trajectories for these two regimes. The results indicate that cost variation is driven by market sentiment: higher sentiment, together with elevated noise ask rates, produces higher average prices.

    \begin{figure}
        \centering
        \includegraphics[width=0.495\linewidth]{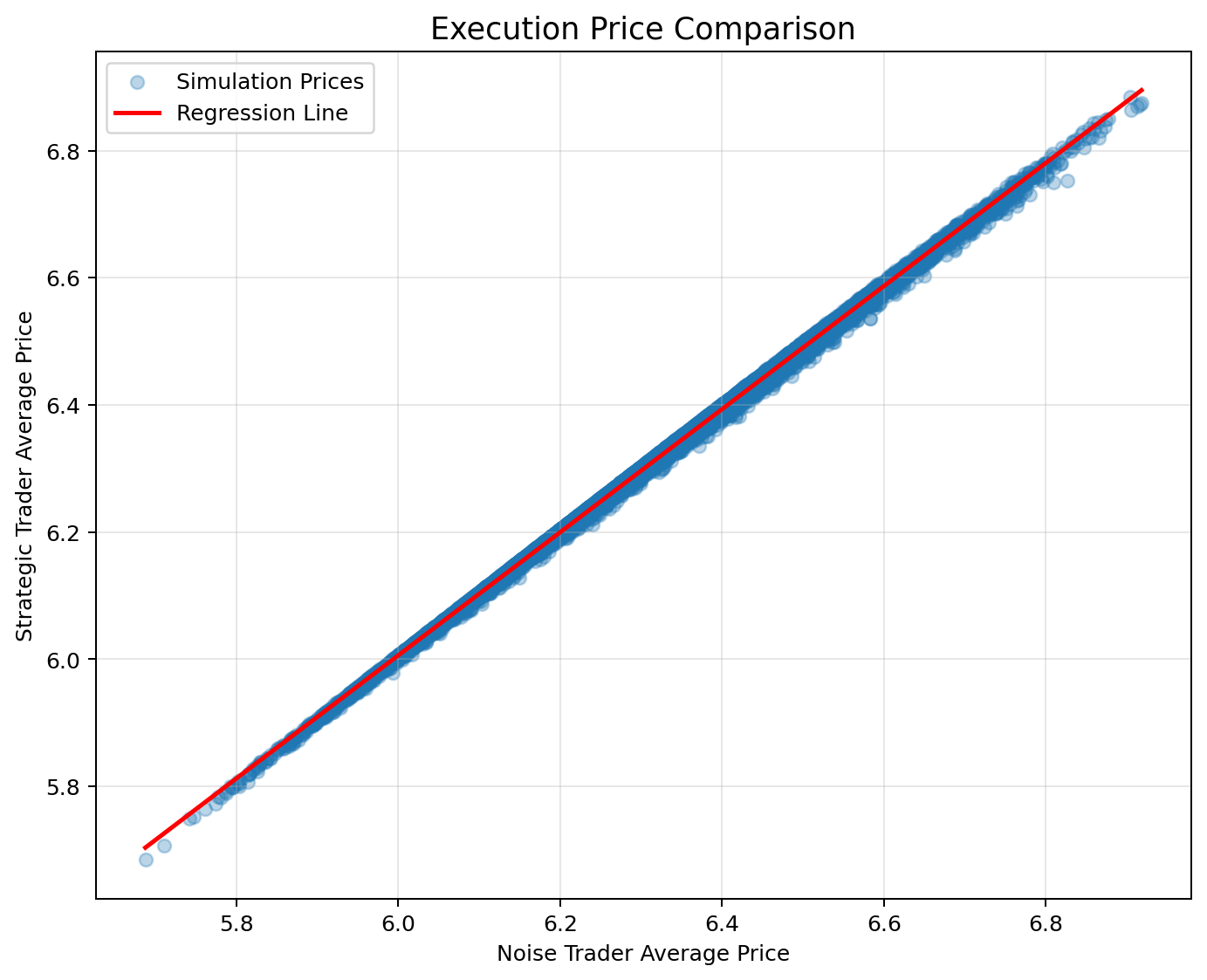}
        \includegraphics[width=0.495\linewidth]{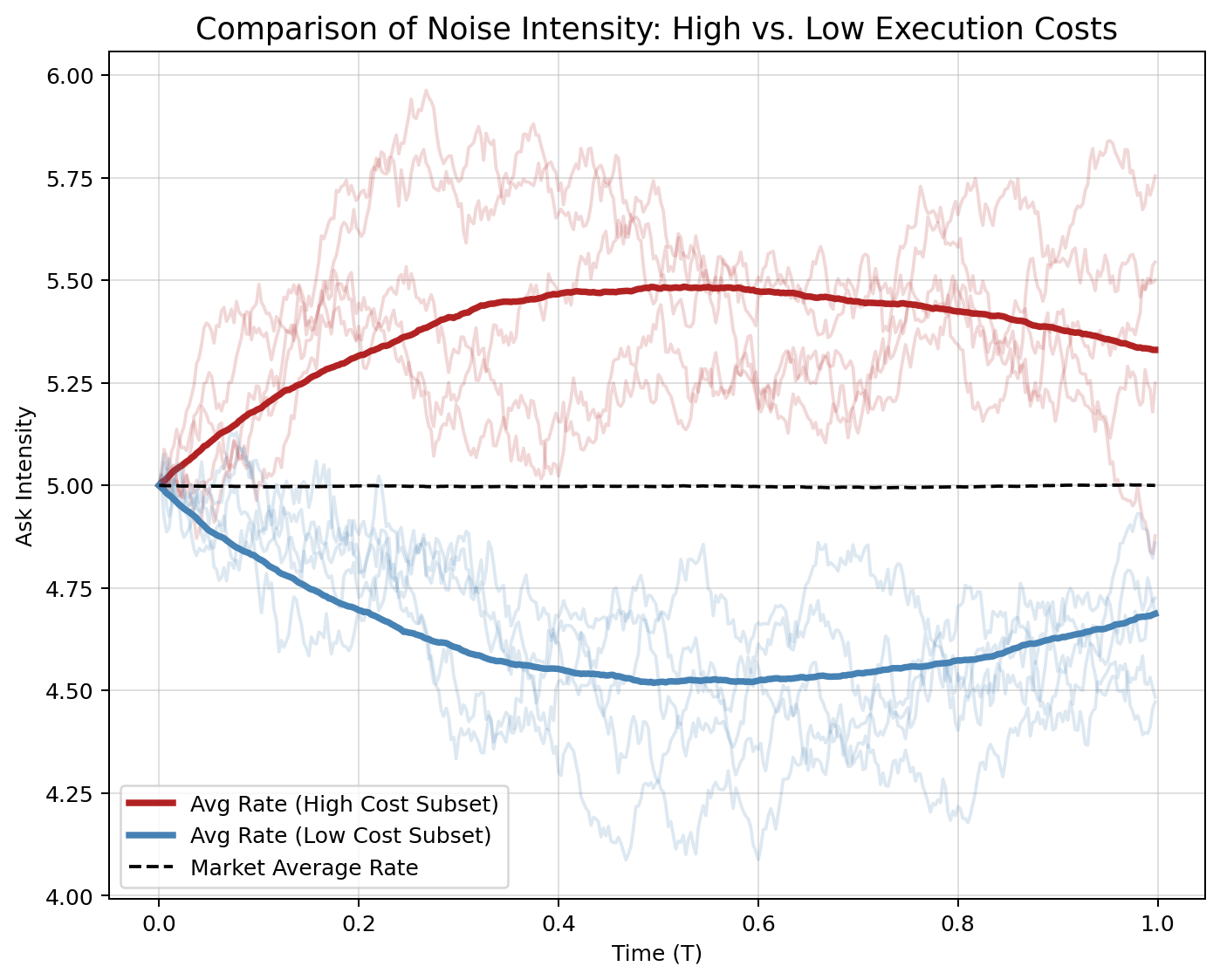}
        \caption{Average execution cost per unit for strategic and noise traders}
        \label{high low cost}
    \end{figure}
\end{example}

\vspace{0.2cm}

\section{Appendix: Proofs in Section \ref{paper 3 section 2}}

\subsection{Proof of Proposition \ref{FBSDE char in Sec2}}

\begin{proof}
    This result is a direct consequence of the stochastic maximum principle. Since the forward equation and terminal condition are similar, we just sketch the proof for the backward equation. The partial derivative of the Hamiltonian $H^{i, \e}$ in $q^{i, \e}$ reads
    \begin{equation*}
        \frac{\partial H^{i,\e}}{\partial q^{i,\e}} = \frac{\alpha_t}{N} \, \sum_{j=1}^N v^{j, \e} -  2\phi_t^{i, \e} \, q^{i, \e}.
    \end{equation*}
Because $v^{i, \e}$ is given by \eqref{paper 3 solve issac}, we then have
\begin{equation*}
\begin{aligned}
    \frac{\partial H^{i,\e}}{\partial q^{i,\e}} &= \frac{\alpha_t}{N} \sum_{j=1}^N \bigg( \frac{N}{(N + 1)\, \beta_t} \,
    \Big( N \, y^{j, \e} - \sum_{k \neq j} y^{k, \e} \Big)\\
    & \hspace{2cm} + \frac{\alpha_t}{(N + 1) \, \beta_t} \, \Big( N \, q^{j, \e} - \sum_{k \neq j} q^{k, \e} \Big) - \frac{N}{(N + 1) \, \beta_t} \, \varpi_t \bigg) -  2\phi_t^{i,\e} \, q^{i,\e}\\
    &= \frac{\alpha_t}{(N + 1) \, \beta_t} \, \sum_{j=1}^N y^{j,\e} + \frac{(\alpha_t)^2}{N (N + 1) \, \beta_t} \, \sum_{j=1}^N q^{j,\e} - \frac{N \, \alpha_t}{(N + 1) \, \beta_t}\, \varpi_t - 2\phi_t^{i, \e} \, q^{i, \e}.
\end{aligned}
\end{equation*}
The backward part of \eqref{paper 3 execut game FBSDE} is just the vector representation for the above equation.
\end{proof}

\kong

\subsection{Proof of Theorem \ref{paper 3 stochastic riccati approach}}

The following standard properties of the matrix exponential turn out to be essential in studying the well-posedness. We summarize them in the following proposition without the proof.

\kong

\begin{proposition}
    The matrix exponential has the following properties:
    \kong
    
    \begin{itemize}
        \item If the square matrix $X$ is symmetric, then the matrix exponential $e^X$ is positive definite;\\
        \vspace{-0.2cm}

        \item If $X Y =Y X$, then $e^X \, e^Y = e^{X+Y}$. Consequently, the inverse of $e^X$ is $e^{-X}$;\\
        \vspace{-0.2cm}

        \item Given $X(t)$ as a square matrix of differentiable functions, then
        \begin{equation}
            \frac{d}{dt} e^{X(t)} = \frac{d}{dt} X(t) \, e^{X(t)} = e^{X(t)} \, \frac{d}{dt} X(t)
            \label{paper 3 matrix exp deriv}
        \end{equation}
        if and only if $X(t)$ and $\frac{d}{dt} X(t)$ commute. In particular, suppose $X(t) = \int_0^t x(s) \, ds$, then \eqref{paper 3 matrix exp deriv} holds if $x(t_1) \, x(t_2) = x(t_2) \, x(t_1) $ for any $t_1$ and $t_2$.
    \end{itemize}
\end{proposition}

\kong

\begin{proof}
    We start with the second case. To solve \eqref{paper 3 execut game FBSDE}, the linear structure suggests the affine ansatz
    \begin{equation}
        \boldsymbol{Y}^\e_t = R_t \, \boldsymbol{Q}^\e_t + H_t,
        \label{paper 3 affine ansatz}
    \end{equation}
    where $R$ is a matrix-valued process. By matching the coefficients, process $R$ and $H$ solve the following coupled BSDE system:
    \begin{align}
        dR_t &= \big( G_t + F_t \, R_t - R_t \, E_t - R_t \, D_t \, R_t \big) \, dt + dM_t^1, \label{BSRE} \\
        dH_t &= \bigg( -R_t\, D_t \, H_t + R_t \, \frac{N}{(N + 1)\, \beta_t}\, \w_t + F_t \, H_t + \frac{N \, \alpha_t}{(N + 1)\, \beta_t}\, \w_t \bigg) \, dt + dM_t^2, \nonumber
    \end{align}
    such that $R_T = - L$ and $H_T = \w_T$. The BSRE \eqref{BSRE} for $R$ is non-symmetric in the sense of Riccati equation. Provided with a bounded process $R$, process $H$ is the unique solution of a Lipschitz BSDE. Therefore, it suffices to prove the existence and uniqueness of a bounded solution to the BSRE \eqref{BSRE}. 
    
    By definition, we can see for any $s,t \in [0,T]$ that $F_s \, F_t = F_t \, F_s$ and $E_s \, E_t = E_t \, E_s$. The property of matrix exponential yields
    \begin{equation*}
    \begin{aligned}
        d \big( e^{-\int_0^t F_u \,du} \, R_t \, e^{\int_0^t E_u \,du} \big) = \big( - e^{-\int_0^t F_u \,du} \,F_t \big) \, R_t \, e^{\int_0^t E_u \,du} &+ e^{-\int_0^t F_u \,du} \, (dR_t) \, e^{\int_0^t E_u \,du} \\
        &+ e^{-\int_0^t F_u \,du} \, R_t \, \big( E_t \, e^{\int_0^t E_u \,du} \big).
    \end{aligned}
    \end{equation*}
    Defining $\tilde{R}_t := e^{-\int_0^t F_u \,du} \, R_t \, e^{\int_0^t E_u \,du}$, the non-singularity property of matrix exponentials allows us to equivalently study the Riccati equation for $\tilde{R}$. Direct calculations yield:
    \begin{equation}
        d\tilde{R}_t = \big( e^{-\int_0^t F_u \,du} \, G_t \, e^{\int_0^t E_u \,du} - \tilde{R}_t \, e^{-\int_0^t E_u \,du} \, D_t \, e^{\int_0^t F_u \,du} \, \tilde{R}_t \big) \, dt + d\tilde{M}_t,
        \label{paper 3 execut game aux}
    \end{equation}
    such that $\tilde{R}_T = - e^{-\int_0^T F_u \,du} \, L \, e^{\int_0^T E_u \,du}$.
    
    To study \eqref{paper 3 execut game aux}, we proceed to show the symmetry and positive definiteness of the coefficient matrices. First, let us comment that $G_t, D_t$, and $A$ are positive semi-definite given the conditions of the theorem, due to symmetry, positive diagonal entries, and diagonal dominance. Note that $D_t, E_t,$ and $F_t$ are matrices of type ($\mathcal{S})$: (1) all the diagonal entries are the same; (2) all the off-diagonal entries are also the same. Because such type of matrices is closed under multiplications, we can see that $e^{-\int_0^t E_u \,du} \, D_t \, e^{\int_0^t F_u \,du}$ is also symmetric for all $t$. Further, consider the transformation 
    \begin{equation*}
    \begin{aligned}
        e^{-\int_0^t E_u \,du} \, D_t \, e^{\int_0^t F_u \,du} &= e^{-\int_0^t E_u \,du} \, D_t \, e^{\int_0^t E_u \,du} \, e^{-\int_0^t E_u \,du} \, e^{\int_0^t F_u \,du}\\
        & = e^{-\int_0^t E_u \,du} \, D_t \, e^{\int_0^t E_u \,du} \, e^{-\int_0^t (E_u - F_u) \,du}.
    \end{aligned}
    \end{equation*}
    The last equality is true because $\int_0^t E_u \,du$ and $\int_0^t F_u \,du$ commute:
    \begin{equation*}
    \begin{aligned}
        \int_0^t E_u \,du \, \int_0^t F_u \,du &= \bigg( \int_0^t \frac{\alpha_u}{(N + 1) \, \beta_u} \,du \bigg) \, \bigg( \int_0^t \frac{-\alpha_u }{(N + 1) \, \beta_u} \,du \bigg) \, B \,O\\
        &= \bigg( \int_0^t \frac{\alpha_u}{(N + 1) \, \beta_u} \,du \bigg) \, \bigg( \int_0^t \frac{-\alpha_u }{(N + 1) \, \beta_u} \,du \bigg) \, O \, B = \int_0^t F_u \,du \, \int_0^t E_u \,du.
    \end{aligned}
    \end{equation*}
    Because $e^{-\int_0^t (E_u - F_u) \, du}$ is symmetric and positive definite, there exists a unique symmetric matrix $K$ such that $ e^{-\int_0^t (E_u - F_u) \, du} = K \, K$. We also observe that the eigenvalues of 
    \begin{equation*}
        e^{-\int_0^t E_u \,du} \, D_t \, e^{\int_0^t F_u \,du} = e^{-\int_0^t E_u \,du} \, D_t \, e^{\int_0^t E_u \,du} \, K \, K
    \end{equation*}
    are the same with
    \begin{equation}
        K \, e^{-\int_0^t E_u \,du} \, D_t \, e^{\int_0^t E_u \,du} \, K.
        \label{paper 3 eigen prop}
    \end{equation}
    The fact that matrix \eqref{paper 3 eigen prop} is positive semi-definite follows from: for any $x\in\mathbb{R}^N$, it holds
    \begin{equation*}
        x^* \, \Big( K \, e^{-\int_0^t E_u \,du} \, D_t \, e^{\int_0^t E_u \,du} \, K \Big) \, x = (K \, x)^* \, \Big( e^{-\int_0^t E_u \,du} \, D_t \, e^{\int_0^t E_u \,du} \Big) \, (K\, x) \geq 0. 
    \end{equation*}
    Here, the last inequality is true because $e^{-\int_0^t E_u \,du} \, D_t \, e^{\int_0^t E_u \,du}$ is not only symmetric but also has non-negative eigenvalues that are the same with $D_t$. Now, we can conclude that 
    \begin{equation*}
        e^{-\int_0^t E_u \,du} \, D_t \, e^{\int_0^t F_u \,du}
    \end{equation*}
    is symmetric and positive semi-definite. The same is true for both
    \begin{equation*}
        e^{-\int_0^t F_u \,du} \, G_t \, e^{\int_0^t E_u \,du} \text{\; and \;} e^{-\int_0^T F_u \,du} \, L \, e^{\int_0^T E_u \, du}
    \end{equation*}
    through similar arguments.

    Since the BSRE \eqref{paper 3 execut game aux} is of symmetric and positive semi-definite type, the existence of a unique bounded, symmetric, and negative semi-definite $\tilde{R}$ is studied in \cite{peng1992stochastic}. We proceed to show that the solution $\tilde{R}$ is of type $(\mathcal{S})$. For convenience let us define 
    \begin{gather*}
        \tilde{D}_t = e^{-\int_0^t E_u \,du} \, D_t \, e^{\int_0^t F_u \,du}, \quad \tilde{G}_t = e^{-\int_0^t F_u \,du} \, G_t \, e^{\int_0^t E_u \,du}, \text{\; and \;} \tilde{L} = e^{-\int_0^T F_u \,du} \, L \, e^{\int_0^T E_u \,du},
    \end{gather*}
    while noticing that they are all of type $(\mathcal{S})$. For any $i, j \in \{1, \dots, N\}$, denote by $E_{i j}$ the permutation matrix of the $i$-th row and $j$-th row:
    \begin{equation*}
        E_{ij} = 
        \begin{bmatrix}
            & \vdots & & \vdots\\
            \cdots & 0 & \cdots & 1 & \cdots\\
            & \vdots & & \vdots\\
            \cdots & 1 & \cdots & 0 & \cdots\\
            & \vdots & & \vdots
        \end{bmatrix}.
    \end{equation*}
    Consider the transform $\hat{R}_t = E_{ij} \, \tilde{R}_t \, E_{ij}$. Being aware of
    \begin{equation*}
        E_{ij} \, \tilde{R}_t \, \tilde{D}_t \, \tilde{R}_t \, E_{ij} = \big( E_{ij} \, \tilde{R}_t \, E_{ij} \big)  \big( E_{ij} \, \tilde{D}_t \, E_{ij} \big) \big( E_{ij} \, \tilde{R}_t \, E_{ij} \big) = \hat{R}_t \, \tilde{D}_t \, \hat{R}_t, 
    \end{equation*}
    we find that $\hat{R}$ solves
    \begin{equation*}
        d\hat{R}_t = \big( \tilde{G}_t - \hat{R}_t \, \tilde{D}_t \, \hat{R}_t \big) \, dt + d\hat{M}_t
    \end{equation*}
    such that $\hat{R}_T = - \tilde{L}$, which is the same as \eqref{paper 3 execut game aux}. The uniqueness of the solution implies $\tilde{R} = \hat{R} = E_{ij} \, \tilde{R} \, E_{ij}$. The arbitrariness of $i$ and $j$ guarantees that $\tilde{R}_t$ is of type ($\mathcal{S}$). Consequently, by the definition $R_t = e^{\int_0^t F_u \, du} \tilde{R}_t \, e^{-\int_0^t E_u \, du}$, we obtain that $R_t$ is a symmetric matrix with non-positive eigenvalues through a similar argument as above. 
    
    Given the well-posedness of $R$ and $H$, a solution $(\boldsymbol{Q}^\e, \boldsymbol{Y}^\e, \boldsymbol{M}^\e)$ can be constructed by plugging \eqref{paper 3 affine ansatz} back to system \eqref{paper 3 execut game FBSDE}. With respect to the uniqueness, we apply the idea of decouple field; see \cite{ankirchner2020optimal} for a short review. Due to the boundedness of $R$, the linear form \eqref{paper 3 affine ansatz} defines a regular decoupling field (see Definition 2.4 in \cite{ankirchner2020optimal}). Subsequently, the existence of a regular decoupling field guarantees the uniqueness of the solution, by Theorem 2.6 in \cite{ankirchner2020optimal}. This completes the proof of the second case. The proof for the first case follows by similar arguments as above.
\end{proof}

\kong

\subsection{Proof of Theorem \ref{method of continuation}}

The following result of matrix algebra turns out to be useful.

\kong

\begin{theorem}[\cite{varah1975lower}]
\label{paper 3 inverse_norm_matrix}
Assume $\boldsymbol{A}\in\mathbb{R}^{n\times n}$ is strictly diagonally dominant (by rows) matrix and set the `gap' $\alpha=\min_{1\leq k\leq n}\{|\boldsymbol{a}_{kk}-\sum_{j\neq k}|\boldsymbol{a}_{kj}|\}$. Then, $\|\boldsymbol{A}^{-1}\|_{\infty}\leq 1\,/\,\alpha$, where $\|\cdot\|_\infty$ is the matrix norm induced by the vector $\infty$-norm.
\end{theorem}

\kong

\begin{proof}
    Let us verify the monotone condition in the method of continuation ((H2.3) in \cite{peng1999fully}). It suffices to show that there exists a constant $r > 0$ such that
    \begin{equation}
    \begin{aligned}
        r \, |\Delta q|^2 &\leq \big( -F_t \, \Delta y + G_t \, \Delta q \big)^* \, \Delta q + \big( D_t \, \Delta y - E_t \, \Delta q \big)^* \, \Delta y\\
        &= \Delta q^* \, G_t \, \Delta q + \Delta y^* \, D_t \, \Delta y - \Delta q^* \, (E_t + F_t) \, \Delta y,\\
            \label{paper 3 monotone condit}
        r \, |\Delta q|^2 & \leq \Delta q^*\, L \, \Delta q
    \end{aligned}
    \end{equation}
    holds for any $\Delta q, \Delta y \in \mathbb{R}^N$. The second inequality is a direct consequence of the positive lower bound for each $A^{i, \e}$. Because matrix $D_t$ is symmetric positive definite, the right hand side of \eqref{paper 3 monotone condit} is convex with respect to $\Delta y$. The first-order condition yields
    \begin{equation}
    \begin{aligned}
         \Delta q^* \, G_t \, \Delta q + \Delta y^* \, D_t \, \Delta y &- \Delta q^* \, (E_t + F_t) \, \Delta y \\
         & \geq \Delta q^* \, G_t \, \Delta q + (V_t \, \Delta q)^* \, D_t \, (V_t \, \Delta q) - \Delta q^* \, (E_t + F_t) \, (V_t \, \Delta q)\\
         & = \Delta q^* \, \big( G_t + V_t^* \, D_t \, V_t - (E_t + F_t) \, V_t \big) \, \Delta q,
    \end{aligned}
    \label{paper 3 simple monotone condit}
    \end{equation}
    where $V_t = D_t^{-1} \, (E_t + F_t)/2$. We know $V_t$ is symmetric because $D_t, E_t,$ and $F_t$ are all of type $(\mathcal{S})$. It then follows
    \begin{equation*}
    \begin{aligned}
         V_t^* \, D_t \, V_t - (E_t +F_t) \, V_t &= \frac{1}{4} \, (E_t +F_t) \, D_t^{-1} \, (E_t +F_t) - \frac{1}{2} \, (E_t +F_t) \, D_t^{-1} \, (E_t +F_t)\\
         &=-\frac{1}{4} \, \frac{(\alpha_t)^2}{N(N + 1)\, \beta_t} \, \big( B - 2\,O + O\, B^{-1}\,O \big)\\
         &=-\frac{1}{4} \, \frac{(\alpha_t)^2}{N(N + 1)\, \beta_t} \, \big(B + (N - 2) \, O \big).
    \end{aligned}
    \end{equation*}
    Here, we have applied the fact that the column sum of $B^{-1}$ is $1$ and thus $O B^{-1} = O$. Consequently, we have
    \begin{equation*}
        G_t + V_t^* \, D_t \, V_t - (E_t + F_t) \, V_t =
        \begin{bmatrix}
        2\phi_t^{1,\e} & \cdots & 0\\
        \vdots & \ddots & \vdots\\
        0  & \cdots & 2\phi_t^{N,\e}
    \end{bmatrix}
    - \frac{(\alpha_t)^2}{4N \, \beta_t} \,
    \begin{bmatrix}
        2& \cdots & 1\\
        \vdots & \ddots & \vdots\\
        1  & \cdots & 2
    \end{bmatrix}.
    \end{equation*}
    Based on the conditions of the theorem, we can see that the symmetric matrix above has positive diagonal entries and is diagonally dominant by some constant $C > 0$ for all $t$, establishing the positive definiteness. In view of Theorem \ref{paper 3 inverse_norm_matrix} and the equivalence of matrix norms, the smallest eigenvalue is bounded away from $0$ with respect to $t$. Hence, a suitable $r > 0$ can be found. Therefore, the monotone condition guarantees the well-posedness of the equation.
\end{proof}

\vspace{0.2cm}

\section{Appendix: Proofs in Section \ref{paper 3 section 3}}

\subsection{Proof of Theorem \ref{paper 3 big maximum principle}}

The following result of non-smooth implicit function theorem will be used in the proof.

\begin{proposition}[\cite{guo2024macroscopicmarketmakinggames}]
\label{general_implicit}
Assume that $F:\mathbb{R}^n\times\mathbb{R}^m\to \mathbb{R}^n$ is a locally Lipschitz mapping such that:

\kong

\begin{itemize}
    \item [(1)] For every $y\in\mathbb{R}^m$, the functional $\varphi_y:\mathbb{R}^n\to\mathbb{R}$ given by the formula
    \begin{equation*}
        \varphi_y(x):=\frac{1}{2}\,|F(x,y)|^2
    \end{equation*}
    is coercive, i.e., $\lim_{|x|\to\infty}\varphi_y(x)=\infty$;\\
    
    \item[(2)] Any matrix in the set $\partial_xF(x,y)$ is of maximal rank for all $(x,y)\in\mathbb{R}^n\times\mathbb{R}^m$;\\
    
    \item[(3)] Define the function $\tilde{F}:\mathbb{R}^{m+n}\to\mathbb{R}^{m+n}$ by
    \begin{equation*}
        \tilde{F}(x,y)=\big(y, \;F(x,y)\big)
    \end{equation*}
    and let $\tilde{S}$ denote the unit sphere in $\mathbb{R}^{m+n}$. There exists a constant $\upsilon>0$ such that the distance between $\partial \tilde{F}(x,y)\,\tilde{S}$ and $0$ is at least $\upsilon$, for all $(x,y)\in\mathbb{R}^n\times\mathbb{R}^m$. Here, set $\partial \tilde{F}(x,y)\,\tilde{S}$ is defined by
    \begin{equation*}
        \partial \tilde{F}(x,y) \, \tilde{S} := \big\{ U \, v \, : \, U \in \tilde{F}(x,y) \text{\, and \,} v \in \tilde{S} \big\}
    \end{equation*}
    and the distance refers to $\inf\{|U\,v| \, : \, U \in \tilde{F}(x,y) \text{\, and \,} v \in \tilde{S} \}$.
\end{itemize}
\kong

\noindent Then, there exists a unique Lipschitz function $f:\mathbb{R}^m\to\mathbb{R}^n$ such that $F(x,y)=0$ and $x=f(y)$ are equivalent in the set $\mathbb{R}^n\times\mathbb{R}^m$. 
\end{proposition}

\noindent We refer the reader to the appendix of \cite{guo2024macroscopicmarketmakinggames} for the definition and key properties of the Clarke generalized derivatives.

\begin{proof}
    We start with the necessary condition. Suppose that $( \boldsymbol{v}^\e, (\boldsymbol{\delta}^j)_{j})$ forms a Nash equilibrium. Equipped with trading strategies $\boldsymbol{v}^\e$, the $\tilde{N}$ market makers solve a market making game studied in \cite{guo2024macroscopicmarketmakinggames}. Based on Theorem 3.9 in \cite{guo2024macroscopicmarketmakinggames}, there exist Lipschitz mappings $\psi^a, \psi^b : \mathbb{R}^{\tilde{N}} \to \mathbb{R}^{\tilde{N}}$ such that the equilibrium strategy satisfies
    \begin{equation*}
        \delta_t^{i, a} = \psi^{i, a}(\boldsymbol{Y}_t^\m) \text{ \; and \; } \delta_t^{i, b} = \psi^{i, b}(\boldsymbol{Y}_t^\m), 
    \end{equation*}
    where the adjoint process $\boldsymbol{Y}^\m$ solves the FBSDE
    \begin{equation*}
        \left\{
        \begin{aligned}
         dQ_t^{i, \m} &= - \tilde{a}_t \, \Lambda \big(\psi^{i,a}(\boldsymbol{Y}_t^\m) - \bar{\psi}^{i,a}(\boldsymbol{Y}_t^\m) \big) \, dt + \tilde{b}_t \, \Lambda \big(\psi^{i,b}(\boldsymbol{Y}_t^\m) - \bar{\psi}^{i,b}(\boldsymbol{Y}_t^\m) \big ) \, dt,\\
        \, dY_t^{i, \m} &= 2\phi_t^{i, \m} \, Q_t^{i, \m} \, dt + dM_t^{i, \m},\\
        Q_0^{i, \m} &= q_0^{i, \m}, \quad Y_T^{i, \m} = -2A^{i,\m} \, Q_T^{i,\m},
        \end{aligned}
        \right.
    \end{equation*}
    for all $i \in \{1, \dots, \tilde{N}\}$, where $\bar{\psi}^{i, a}(\boldsymbol{y}) = \min_{k \neq i} \psi^{k, a}(\boldsymbol{y})$. 
    
    Given quoting strategies $(\boldsymbol{\delta}^j)_{j} = ( \psi^{a}(\boldsymbol{Y}_t^\m), \psi^{b}(\boldsymbol{Y}_t^\m) )_{t \in [0, T]}$, all traders participate in the optimal execution game as in Section \ref{paper 3 section 2}. Different from the previous section, the Hamiltonian of trader $(i,\e)$ now reads
    \begin{equation*}
    \begin{aligned}
         H^{i, \e}(t, q^{i, \e}, y^{i, \e}, v^{i, \e}; v^{-i, \e}) &= v^{i,\e} \, y^{i,\e} \\
         &- \Big[ \mathbb{I}(q_0^{i,\e} < 0) \,
    \min_{j} \psi^{j, a}(\boldsymbol{Y}_t^\m) - \mathbb{I}(q_0^{i,\e} > 0) \, \min_{j} \psi^{j, b}(\boldsymbol{Y}_t^\m) \Big] \, v^{i, \e} \\
    &- \lambda^i \big( \boldsymbol{v}^\e \big) - \phi_t^{i, \e} \, \big( q^{i,\e} \big)^2.
    \end{aligned}
    \end{equation*}
    Observing that (1) the variables $q^{i,\e}$ and $v^{i,\e}$ enter the Hamiltonian separately, and (2) the function $\lambda^{i}$ is convex in its $i$-th argument, it follows that the Hamiltonian is concave in $(q^{i, \e}, v^{i, \e})$ unconditionally. We can thus apply the stochastic maximum principle directly. Let us define
    \begin{equation*}
    \tilde{y}^{i, \e} = y^{i,\e} - \mathbb{I}(q_0^{i, \e} < 0) \, 
    \min_{j} \psi^{j, a}(\boldsymbol{Y}_t^\m) + \mathbb{I}(q_0^{i,\e} > 0) \, \min_{j} \psi^{j, b}(\boldsymbol{Y}_t^\m)
    \end{equation*}
    and denote by $\iota^i(\cdot \, ; v^{-i, \e})$ the inverse function of $( \partial \lambda^i / \partial u^{i})(\dots v^{i - 1, \e}, \, \cdot \, , v^{i+1, \e} \dots)$. To maximise all the Hamiltonians simultaneously, the first order condition yields
    \begin{equation*}
    \begin{aligned}
        \mathfrak{F}^{i}(\boldsymbol{v}^\e, \tilde{y}^{i, \e}) := v^{i, \e} - \Big[  \mathbb{I}(q_0^{i,\e} &< 0) \, \iota^i(\tilde{y}^{i,\e}; v^{-i, \e} ) \vee 0 \wedge \tilde{\xi}\\
        &+ \mathbb{I}(q_0^{i, \e} > 0) \, \iota^i( \tilde{y}^{i,\e}; v^{-i,\e} ) \vee (-\tilde{\xi}) \wedge 0 \Big] = 0
        \end{aligned}
    \end{equation*}
    for all $i$. Since $t$ is a fixed time, we omit the dependence of $\mathfrak{F}^i$ and $\tilde{y}^{i,\e}$ on $t$. Setting $\mathfrak{F}=(\mathfrak{F}^1, \dots, \mathfrak{F}^N): \mathbb{R}^N \times \mathbb{R}^N \to \mathbb{R}^N$ and $\boldsymbol{\tilde{y}}^{\e}=(\tilde{y}^{1,\e}, \dots, \tilde{y}^{N,\e})$, we obtain the condition
    \begin{equation*}
        \mathfrak{F}(\boldsymbol{v}^\e, \boldsymbol{\tilde{y}}^{\e}) = 0.
    \end{equation*}
    Due to the truncation by $\tilde{\xi}$ and $0$, the Brouwer fixed-point theorem ensures that, for every $\boldsymbol{\tilde{y}}^\e \in \mathbb{R}^N$, there exists $\boldsymbol{v}^\e \in \mathbb{R}^N$ such that the above equation holds. 
    
    To show the Lipschitz dependence of $\boldsymbol{v}^\e$ on $\boldsymbol{\tilde{y}}^\e$, we utilize the implicit function theorem \ref{general_implicit}. The first coercive condition is a direct consequence of the truncation. By techniques for computing derivatives of inverse functions, we can have
    \begin{align*}
        \frac{\partial \iota^i}{\partial u^{i}}(u^{i}; u^{-i}) &= \bigg[ \frac{ \partial^2 \lambda^i}{\partial (u^{i})^2} \big( \dots, u^{i - 1}, \iota^i(u^{i}; u^{-i}), u^{i + 1}, \dots \big)\bigg]^{-1}\\
        \frac{\partial \iota^i}{\partial u^{k}}(u^{i}; u^{-i}) &= (-1) \, \frac{ \partial^2 \lambda^i}{\partial u^{i} \, \partial u^{k}} \, \big/ \, \frac{ \partial^2 \lambda^i}{\partial (u^{i})^2} \big( \dots, u^{i - 1}, \iota^i(u^{i}; u^{-i}), u^{i + 1}, \dots \big)
    \end{align*}
    for any $k\neq i$. Whenever differentiable, the partial derivative of $\mathfrak{F}^i$ with respect to $v^{k,\e}$ is either
    \begin{equation*}
    \begin{aligned}
        \frac{\partial \mathfrak{F}^i}{\partial v^{k, \e}} &= -\mathbb{I}(q_0^{i, \e} < 0) \, \frac{\partial \iota^i}{\partial u^{k}}(\tilde{y}^{i, \e}; v^{-i,\e}) - \mathbb{I}(q_0^{i, \e} > 0) \, \frac{\partial \iota^i}{\partial u^{k}}(\tilde{y}^{i,\e}; v^{-i,\e})\\
        &= (-1) \, \big[ \mathbb{I}(q_0^{i, \e} < 0) + \mathbb{I}(q_0^{i, \e} > 0) \big] \, \frac{ \partial^2 \lambda^i}{\partial u^{i} \, \partial u^{k}} \, \big/ \, \frac{ \partial^2 \lambda^i}{\partial (u^{i})^2}
    \end{aligned}
    \end{equation*}
    or $0$, for any $k \neq i$. For all $(\boldsymbol{v}^\e, \boldsymbol{\tilde{y}}^\e)$, it follows that
    \begin{equation*}
         \frac{\partial \mathfrak{F}^i}{\partial v^{i, \e}} - \sum_{k \neq i} \Big| \frac{\partial \mathfrak{F}^i}{\partial v^{k, \e}} \Big| \geq 1 - \sum_{k \neq i} \Big| \frac{ \partial^2 \lambda^i}{\partial u^{i}\, \partial u^{k}} \, \big/ \, \frac{ \partial^2 \lambda^i}{\partial (u^{i})^2} \Big| > C
    \end{equation*}
    for some $C > 0$, based on the properties in Definition \ref{paper 3 temp price impact}. Since the above inequality still holds after any convex combination, we can conclude the non-singularity of the generalized Jacobian of $\mathfrak{F}$ with respect to $\boldsymbol{v}^\e$, due to the strict diagonal dominance.

    To check the last condition, let us define $\mathfrak{G}: \mathbb{R}^{2N} \to \mathbb{R}^{2N}$ by 
    \begin{equation*}
    \mathfrak{G}(\boldsymbol{v}^\e, \boldsymbol{\tilde{y}}^\e) = ( \boldsymbol{\tilde{y}}^\e, \, \mathfrak{F}(\boldsymbol{v}^\e, \boldsymbol{\tilde{y}}^\e) )
    \end{equation*}
    Whenever $\mathfrak{G}$ is differentiable, the Jacobian matrix $\nabla \mathfrak{G}$ has the form
\begin{equation}
    \nabla \mathfrak{G}(\boldsymbol{v}^\e, \boldsymbol{\tilde{y}}^\e) =
\begin{bmatrix}
I & 0\\
\nabla_{\boldsymbol{\tilde{y}}^\e} \mathfrak{F}(\boldsymbol{v}^\e, \boldsymbol{\tilde{y}}^\e) & \nabla_{\boldsymbol{v}^\e} \mathfrak{F}(\boldsymbol{v}^\e, \boldsymbol{\tilde{y}}^\e)
\end{bmatrix}
,
\label{paper 3 block_matrix}
\end{equation}
where $I\in\mathbb{R}^{N\times N}$ is the identity matrix. Recalling the derivative of inverse functions, the Jacobian $\nabla_{\boldsymbol{\tilde{y}}^\e} \mathfrak{F}(\boldsymbol{v}^\e, \boldsymbol{ \tilde{y}}^\e )$ is a diagonal matrix with 
\begin{equation*}
\begin{aligned}
    \big[ \nabla_{\boldsymbol{\tilde{y}}^\e} \mathfrak{F}(\boldsymbol{v}^\e, \boldsymbol{\tilde{y}}^\e) \big]_{ii} &= -\mathbb{I}(q_0^{i,\e} < 0) \, \frac{\partial \iota^i}{\partial \tilde{y}^{i, \e}}(\tilde{y}^{i,\e}; v^{-i,\e}) - \mathbb{I}(q_0^{i,\e} > 0) \, \frac{\partial \iota^i}{\partial \tilde{y}^{i, \e}}(\tilde{y}^{i,\e}; v^{-i,\e})\\
        &= (-1) \, \big[ \mathbb{I}(q_0^{i,\e} > 0) + \mathbb{I}(q_0^{i, \e} < 0) \big] \, \big/ \, \frac{\partial^2 \lambda^i}{\partial (u^{i})^2}
\end{aligned}        
\end{equation*}
or $0$, for all $i$. Definition \ref{paper 3 temp price impact} then implies that the absolute value of $\big[ \nabla_{\boldsymbol{\tilde{y}}^\e} \mathfrak{F}(\boldsymbol{v}^\e, \boldsymbol{\tilde{y}}^\e) \big]_{ii}$ is less than some constant $C$ for all $(\boldsymbol{v}^\e, \boldsymbol{\tilde{y}}^\e)$. In view of the singular value decomposition (SVD), we now show the smallest singular value $\underline{\sigma}(\boldsymbol{v}^\e, \boldsymbol{\tilde{y}}^\e)$ of $\nabla\mathfrak{G} (\boldsymbol{v}^\e, \boldsymbol{\tilde{y}}^\e)$ is uniformly bounded away from $0$. First, note that the smallest singular value of a matrix is the reciprocal of the $2$-norm of its inverse:
\begin{equation*}
    \underline{\sigma}(\boldsymbol{v}^\e, \boldsymbol{\tilde{y}}^\e) =\frac{1}{\|\nabla\mathfrak{G}(\boldsymbol{v}^\e, \boldsymbol{\tilde{y}}^\e)^{-1}\|_2},
\end{equation*}
where we use $\|\cdot\|_p$ to indicate the matrix norm induced by the vector $p$-norm. Compute that
\begin{equation*}
    \nabla\mathfrak{G}(\boldsymbol{v}^\e, \boldsymbol{\tilde{y}}^\e)^{-1} =
    \begin{bmatrix}
I & 0\\
-\nabla_{\boldsymbol{v}^\e} \mathfrak{F}(\boldsymbol{v}^\e, \boldsymbol{\tilde{y}}^\e)^{-1} \, \nabla_{\boldsymbol{\tilde{y}}^\e} \mathfrak{F}(\boldsymbol{v}^\e, \boldsymbol{\tilde{y}}^\e) & \nabla_{\boldsymbol{v}^\e} \mathfrak{F}(\boldsymbol{v}^\e, \boldsymbol{\tilde{y}}^\e)^{-1}
\end{bmatrix}.
\end{equation*}
the triangle inequality and the matrix norm inequality $\|\cdot\|_2\leq \sqrt{N}\,\|\cdot\|_\infty$ yield
\begin{equation*}
\begin{aligned}
    \| \nabla\mathfrak{G}(\boldsymbol{v}^\e, \boldsymbol{\tilde{y}}^\e)^{-1} \|_2 &\leq \|I\|_2+ \| \nabla_{\boldsymbol{v}^\e} \mathfrak{F}(\boldsymbol{v}^\e, \boldsymbol{\tilde{y}}^\e)^{-1} \, \nabla_{\boldsymbol{\tilde{y}}^\e} \mathfrak{F}(\boldsymbol{v}^\e, \boldsymbol{\tilde{y}}^\e) \|_2 + \| \nabla_{\boldsymbol{v}^\e} \mathfrak{F}(\boldsymbol{v}^\e, \boldsymbol{\tilde{y}}^\e)^{-1} \|_2\\
    &\leq 1 + \| \nabla_{\boldsymbol{v}^\e} \mathfrak{F}(\boldsymbol{v}^\e, \boldsymbol{\tilde{y}}^\e)^{-1} \|_2 \, \| \nabla_{\boldsymbol{\tilde{y}}^\e} \mathfrak{F}(\boldsymbol{v}^\e, \boldsymbol{\tilde{y}}^\e) \|_2 + \| \nabla_{\boldsymbol{v}^\e} \mathfrak{F}(\boldsymbol{v}^\e, \boldsymbol{\tilde{y}}^\e)^{-1} \|_2\\
    &\leq 1 + C \,\| \nabla_{\boldsymbol{v}^\e} \mathfrak{F}(\boldsymbol{v}^\e, \boldsymbol{\tilde{y}}^\e)^{-1} \|_\infty.
\end{aligned}
\end{equation*}
In view of Theorem \ref{paper 3 inverse_norm_matrix}, one can obtain a uniform lower bound of the singular value from
\begin{equation}
\begin{aligned}
    \underline{\sigma} (\boldsymbol{v}^\e, \boldsymbol{\tilde{y}}^\e) &\geq \big( 1 + C \, \| \nabla_{\boldsymbol{v}^\e} \mathfrak{F}(\boldsymbol{v}^\e, \boldsymbol{\tilde{y}}^\e)^{-1} \|_\infty \big)^{-1}\\
    &\geq \bigg( 1 + C \, \min_i \Big\{ \, 1 - \sum_{k \neq i} \Big| \frac{ \partial^2 \lambda^i}{\partial u^{i} \, \partial u^{k}} \, \Big/ \, \frac{ \partial^2 \lambda^i}{\partial (u^{i})^2} \Big| \, \Big\}^{-1} \bigg)^{-1}\\
    &\geq (1+C)^{-1},
\end{aligned}
\end{equation}
where we have applied the fact that the diagonal dominance `gap' of $\nabla_{\boldsymbol{v}^\e}\mathfrak{F}(\boldsymbol{v}^\e, \boldsymbol{\tilde{y}}^\e)$ is bounded away from $0$. The discussion on the generalized Jacobian follows a similar routine. The key is to observe that the diagonal dominance `gap' of the generalized Jacobian $\partial_{\boldsymbol{v}^\e}\mathfrak{F}$ remains bounded away from $0$ under any convex combination.

By the implicit function theorem, there exists Lipschitz mapping $\varphi: \mathbb{R}^N \to \mathbb{R}^N$ such that $\mathfrak{F}(\boldsymbol{v}^\e, \boldsymbol{\tilde{y}}^\e) = 0$ is equivalent to $\mathfrak{F}(\varphi(\boldsymbol{\tilde{y}}^\e), \boldsymbol{\tilde{y}}^\e) = 0$. In accord with the definition of $\tilde{\boldsymbol{y}}^\e$, we can equivalently write $\varphi(\tilde{\boldsymbol{y}}^\e)$ as $\varphi(\boldsymbol{y}^\e, \boldsymbol{Y}_t^\m)$, where $\varphi$ now maps $\mathbb{R}^N \times \mathbb{R}^{\tilde{N}}$ to $\mathbb{R}^N$ by a slight abuse of notation. It is still Lipschitz in both $\boldsymbol{y}^\e$ and $\boldsymbol{Y}_t^\m$, after recognizing that the $\min$ function is Lipschitz. The stochastic maximum principle then suggests the FBSDE
\begin{equation*}
        \left\{
    \begin{aligned}
     dQ_t^{i, \e} &= \varphi^i(\boldsymbol{Y}_t^\e, \boldsymbol{Y}_t^\m) \, dt,\\
    \, dY_t^{i, \e} &= 2\phi_t^{i, \e} \, Q_t^{i, \e} \, dt + dM_t^{i, \e},\\
    Q_0^{i, \e} &= q_0^{i, \e}, \quad Y_T^{i, \e} = -2A^{i,\e}\,Q_T^{i, \e},
    \end{aligned}
    \right.
\end{equation*}
and that the equilibrium strategies satisfy
\begin{equation*}
    v^{i, \e}_t = \varphi^i(\boldsymbol{Y}_t^\e, \boldsymbol{Y}_t^\m)
\end{equation*}
for any $i$. A fixed point argument finally gives the system \eqref{paper 3 big FBSDE market maker}-\eqref{paper 3 big FBSDE optimal execution} and we complete the proof for the necessary condition.

With respect to the sufficient condition, consider a strategy profile $( \boldsymbol{v}^\e, (\boldsymbol{\delta}^j)_{j})$ specified by \eqref{general game feedback}. Based on the stochastic maximum principle, the strategies \((\boldsymbol{\delta}^j)_{j}\) constitute best responses for the market makers when \(\boldsymbol{v}^\e\) is fixed, and conversely. Hence the profile forms a Nash equilibrium, which completes the proof.
\end{proof}

\kong

\subsection{Proof of Proposition \ref{local wellpose}}

\begin{proof}
    It suffices to verify that forward equations of \eqref{paper 3 big FBSDE market maker} and \eqref{paper 3 big FBSDE optimal execution} are Lipschitz with respect to $\boldsymbol{y}^\e$ and $\boldsymbol{y}^\m$. Local well-posedness then follows from standard results for Lipschitz FBSDEs. For the optimal execution part, it is straightforward since $\varphi$ is a Lipschitz mapping. In regard to the market making side, if we consider $\hat{a}_t$ and $\hat{b}_t$ as functions of $\boldsymbol{y}^\e$ and $\boldsymbol{y}^\m$, they are both Lipschitz due to the property of $\varphi$. In view of the properties that: (1) function $\psi^a$ and $\min$ are both Lipschitz and (2) function $\psi^a$ takes value in $[-\xi, \xi]^{\tilde{N}}$, we can see that $\Lambda \big( \psi^{i, a} (\boldsymbol{y}^\m) - \bar{\psi}^{i, a} (\boldsymbol{y}^\m) \big)$ is Lipschitz in $\boldsymbol{y}^\m$. Observe that $|\hat{a}_t| \leq C + N \, \tilde{\xi}$ for some constant $C$ and $| \, \Lambda \big(\psi^{i,a}(\boldsymbol{y}^\m) - \bar{\psi}^{i,a}(\boldsymbol{y}^\m) \big) \, | \leq \Lambda(-2\xi)$. The property that the product of bounded Lipschitz functions is still Lipschitz completes the proof.
\end{proof}

\vspace{0.2cm}

\section{Appendix: Proofs in Section \ref{paper 3 section 4}}

\subsection{Proof of Proposition \ref{paper 3 first character}}

\begin{proof}
Suppose $(v^{1, \e}, (\boldsymbol{\delta}^j)_{j = 1}^2)$ is a Nash equilibrium. Provided with the trading strategy $v^{1, \e}$, the two market makers solve the market making game. The Hamiltonian of market maker $(i, \m)$ reads
\begin{equation*}
\begin{aligned}
    &\Big[ -\tilde{a}_t \, \exp \big( - \gamma( \delta^{i, a} - \bar{\delta}^{i, a} ) \big)  + b_t \, \exp \big( - \gamma (\delta^{i, b} - \bar{\delta}^{i, b})  \big) \Big] \, y^{i, \m}\\
    &\hspace{2cm} + \delta^{i, a} \, \tilde{a}_t \, \exp \big( - \gamma( \delta^{i, a} - \bar{\delta}^{i, a}) \big)  + \delta^{i, b} \, b_t \, \exp \big( - \gamma(\delta^{i, b} - \bar{\delta}^{i, b} ) \big)  - \phi_t^{i, \m} \, \big( q^{i, \m} \big)^2.
\end{aligned}
\end{equation*}
When maximizing the Hamiltonian, the tractability of exponential functions allows $(\delta^{i, a}, \delta^{i, b})$ to depend only on $y^{i, \m}$:
\begin{equation*}
    \hat{\delta}^{i, a} = \bigg( \frac{1}{\gamma} + y^{i, \m} \bigg) \vee (-\xi) \wedge \xi \text{\; and \;} \delta^{i, b} = \bigg( \frac{1}{\gamma} - y^{i, \m} \bigg) \vee (-\xi) \wedge \xi.
\end{equation*}
The stochastic maximum principle implies that $(Y^{j,\m})_{j=1}^2$ satisfies an FBSDE system; the forward equation of this system for agent $(1, \m)$ reads
\begin{equation}
\begin{aligned}
     dQ_t^{1, \m} = - \tilde{a}_t \, &\exp \bigg\{-\gamma \, \Big[ \Big(\frac{1}{\gamma} + Y_t^{1, \m} \Big) \vee (-\xi) \wedge \xi - \Big(\frac{1}{\gamma} + Y_t^{2, \m} \Big) \vee (-\xi) \wedge \xi \Big] \bigg\} \, dt,\\
     &+ b_t \, \exp \bigg\{-\gamma \, \Big[ \Big(\frac{1}{\gamma} - Y_t^{1, \m} \Big) \vee (-\xi) \wedge \xi - \Big(\frac{1}{\gamma} - Y_t^{2, \m} \Big) \vee (-\xi) \wedge \xi \Big] \bigg\} \, dt.
    \label{paper 3 big FBSDE _1}
\end{aligned}
\end{equation}
The forward equation of agent $(2, \m)$ is symmetric. Here, the backward equation is omitted for conciseness. Due to the homogeneity of penalty parameters, the ordering property in the market making games---Lemma 5.2 in \cite{guo2024macroscopicmarketmakinggames}---yields 
\begin{equation*}
Y_t^{1, \m} \leq Y_t^{2, \m}
\end{equation*}
for all $t$. 

Denote by $(\bar a, \bar b, \bar \phi, \bar A)$ the upper bounds of the order-flow and penalty parameters $(a, b, \phi, A)$, respectively. Based on the estimation:
\begin{equation*}
\begin{aligned}
    Q_t^{1, \m} &\leq q_0^{1,m} + \int_0^t  b_u \, \exp \bigg\{-\gamma \, \Big[ \Big(\frac{1}{\gamma} - Y_u^{1, \m} \Big) \vee (-\xi) \wedge \xi - \Big(\frac{1}{\gamma} - Y_u^{2, \m} \Big) \vee (-\xi) \wedge \xi \Big] \bigg\} \, du,\\
    & \leq q_0^{1, \m} + \bar{b} \, T, 
\end{aligned}
\end{equation*}
we can obtain that
\begin{align*}
    Y_t^{1, \m} &= -\mathbb{E}_t \bigg[ 2A \,Q_T^{1, \m} + 2 \int_t^T \phi_u \, Q_u^{1, \m} \, du \bigg]\\
    &\geq -(2\bar{A} + 2 \bar{\phi} \, T) \, \big(|q_0^{1, \m}| + \bar{b} \, T \big).
\end{align*}
We see that $Y^{1, \m}$ is bounded below by some constant independent of $\xi$ and $\tilde{\xi}$. Therefore, we first set $\xi \geq \gamma^{-1} + (2\bar{A} + 2 \bar{\phi} \, T) \,(|q_0^{1, \m}| + \bar{b} \, T )$. Such a large choice of $\xi$ yields
\begin{equation*}
    -\xi \leq \frac{1}{\gamma} + Y_t^{1, \m} \leq \frac{1}{\gamma} + Y_t^{2, \m} \text{\; and \;} \xi \geq \frac{1}{\gamma} - Y_t^{1, \m} \geq \frac{1}{\gamma} - Y_t^{2, \m}
\end{equation*}
for all $t$. In other words, half of the truncations no longer have any effect. 

On the other hand, equipped with $(Y^{k, \m})_{k=1}^2$, the trader is dealing with an optimal execution problem. Examining her Hamiltonian 
\begin{equation*}
\begin{aligned}
    v^{1, \e} \, y^{1, \e} - v^{1, \e} \, \min_{1 \leq j \leq 2} \delta^{j, a} &- \beta_t \, \big( v^{1, \e} \big)^2 - \phi_t^{i, \e} \, \big( q^{i,\e} \big)^2\\
    &= v^{1,\e} \, y^{1,\e} - v^{1, \e} \, \Big[ \Big( \frac{1}{\gamma} + Y_t^{1, \m} \Big) \wedge \xi \Big] - \beta_t \, \big( v^{1, \e} \big)^2 - \phi_t^{i, \e} \, \big( q^{i,\e} \big)^2,
\end{aligned}
\end{equation*}
the optimal feedback control reads
\begin{equation*}
    v^{1, \e} = \frac{1}{2 \, \beta_t} \, \Big[ y^{1, \e} - \Big(\frac{1}{\gamma} + Y_t^{1, \m} \Big) \wedge \xi \Big] \vee 0 \wedge \tilde{\xi}.
\end{equation*}
The stochastic maximum principle implies that $Y^{1, \e}$ solves an FBSDE with the forward equation
\begin{equation}
     dQ_t^{1,\e} = \frac{1}{2 \, \beta_t} \, \Big[ Y_t^{1, \e} - \Big(\frac{1}{\gamma} + Y_t^{1, \m} \Big) \wedge \xi \Big] \vee 0 \wedge \tilde{\xi} \, dt.
    \label{paper 3 big FBSDE _2}
\end{equation}
Since $Q^{1, \e}$ is increasing, a similar argument yields that $Y^{1, \e}$ is upper bounded by some constant independent of $\xi$ and $\tilde{\xi}$:
\begin{equation*}
    Y_t^{1, \e} \leq (2\bar{A} + 2 \bar{\phi} \, T) \, |q_0^{1, \e}|.
\end{equation*}
Combined with the lower bound of $Y^{1,\m}$, we are able to let the truncation $\tilde{\xi}$ have no effect by setting
\begin{equation*}
    \tilde{\xi} \geq \frac{1}{2 \, \underline{\beta}}  \bigg[ \frac{1}{\gamma} + (2\bar{A} + 2\bar{\phi} \, T) \, \big( |q_0^{1,\e}| + |q_0^{1, \m}| + \bar{b} \, T \big) \bigg],
\end{equation*}
where $\underline{\beta}$ denotes the lower bound of the process $\beta$. Note that such choice of $\tilde{\xi}$ is independent of $\xi$. Consequently, the resulting order flow $v^{1,\e} + a$ is also bounded above by some constant independent of $\xi$. Therefore, Proposition 5.3 in \cite{guo2024macroscopicmarketmakinggames} enables us to further pick $\xi$ sufficiently large to additionally ensure that
\begin{equation*}
    \xi \geq \frac{1}{\gamma} + Y_t^{2, \m} \geq \frac{1}{\gamma} + Y_t^{1, \m} \text{ \; and \; } -\xi \leq \frac{1}{\gamma} - Y_t^{2, \m} \leq \frac{1}{\gamma} - Y_t^{1, \m}
\end{equation*}
for all $t$. System \eqref{paper 3 big FBSDE _1}-\eqref{paper 3 big FBSDE _2} finally leads to \eqref{paper 3 big FBSDE 1}-\eqref{paper 3 big FBSDE 2} since both $\xi$ and $\tilde{\xi}$ have been removed.
\end{proof}

\kong

\subsection{Proof of Proposition \ref{Riccati with M matrix}}
A mean value theorem from non-smooth analysis is stated below and will be used in the proof.

\kong

\begin{theorem}[\cite{clarke1990optimization}]
\label{mean value and calculus}
Let $F$ be Lipschitz on an open convex set $U$ in $\mathbb{R}^n$, and let $x$ and $y$ be two points in $U$. Then, it holds that
\begin{equation*}
F(y) - F(x) \in \textnormal{co} \Big\{ K \, (y - x) \, : \, K \in \partial F(z) \textnormal{ and } z \textnormal{ is in the line segment between $x$ and $y$} \Big\}.
\end{equation*}
The notation $\textnormal{co}(A)$ denotes the convex hull generated by the set $A$
\end{theorem}

\kong

\begin{proof}
    Given $q_0^{1, \m} \geq  q_0^{2, \m}$, it can be deduced from Lemma 5.2 in \cite{guo2024macroscopicmarketmakinggames} that $Q_t^{1, \m} \geq Q_t^{2, \m}$ and $Y_t^{1, \m} \leq Y_t^{2, \m}$ for any $t$ as long as the solution exists. Equivalently, if $\mathcal{Q}_0^{2, \m} \geq 0$, then $\mathcal{Q}_t^{2, \m} \geq 0$ and $\mathcal{Y}_t^{2, \m} \leq 0$ for any $t$. Therefore, it suffices to consider the case where the initial condition of $\mathcal{Q}^{2, \m}$ is non-negative. Moreover, we only need to study $(\mathcal{Q}^{2, \m}, \mathcal{Y}^{2, \m}, \mathcal{M}^{2, \m})$ and $(\mathcal{Q}^{1, \e}, \mathcal{Y}^{1, \e}, \mathcal{M}^{1, \e})$, provided with which $(\mathcal{Q}^{1, \m}, \mathcal{Y}^{1, \m}, \mathcal{M}^{1, \m})$ can be immediately obtained. Let us introduce the following notations:
    \begin{align*}
        \mathcal{U}^1(t, \mathcal{Y}^{2, \m}, \mathcal{Y}^{1, \e}) &= \bigg(a_t + b_t + \frac{1}{2 \, \beta_t} \, \Big( \mathcal{Y}^{1, \e} - \frac{1}{\gamma} \Big) \vee 0 \bigg) \, \Big( \exp(\gamma \, \mathcal{Y}^{2, \m}) - \exp(-\gamma \, \mathcal{Y}^{2, \m}) \Big),\\
        \mathcal{U}^2(t, \mathcal{Y}^{2, \m}, \mathcal{Y}^{1, \e}) &= \bigg( \frac{1}{2 \, \beta_t} \, \Big( \mathcal{Y}^{1, \e} - \frac{1}{\gamma} \Big) \vee 0 \bigg) \, \Big( 1 + \exp( - \gamma \, \mathcal{Y}^{2, \m}) \Big)\\
    &\hspace{4.5cm} + a_t \, \exp( - \gamma \, \mathcal{Y}^{2, \m}) - b_t \, \exp( \gamma \, \mathcal{Y}^{2, \m}),\\
     \mathcal{U}(t, \mathcal{Y}^{2, \m}, \mathcal{Y}^{1, \e}) &= \begin{bmatrix}
        \mathcal{U}^1\\
        \mathcal{U}^2
    \end{bmatrix}(t, \mathcal{Y}^{2, \m}, \mathcal{Y}^{1, \e}).
    \end{align*}
    
    Due to its Lipschitz property, for some $s > 0$ sufficiently close to $T$, we know the FBSDE \eqref{paper 3 equiv big FBSDE} is well-posed on $[s, T]$, that is
    \begin{align*}
    d\mathcal{Q}_t^{2, \m} &= \mathcal{U}^1(t, \mathcal{Y}_t^{2, \m}, \mathcal{Y}_t^{1, \e}) \, dt,\\
    d\mathcal{Q}_t^{1, \e} &= \mathcal{U}^2(t, \mathcal{Y}_t^{2, \m}, \mathcal{Y}_t^{1, \e}) \, dt,
\end{align*}
with $\mathcal{Q}_s^{2, \m} = \iota^{2, \m} \geq 0$ and $\mathcal{Q}_s^{1, \e} = \iota^{1, \e}$. Consequently, given any initial conditions $\mathcal{Q}_{\tau}^{2, \m} = \iota^{2, \m}$ and $\mathcal{Q}_{\tau}^{1, \e} = \iota^{1, \e}$ at time $\tau \in [s, T]$, there exist unique $\mathcal{Y}_\tau^{2, \m}$ and $\mathcal{Y}_\tau^{1, \e}$. We denote this mapping by $\theta: [s, T] \times \Omega \times (\mathbb{R}_+ \times \mathbb{R})$ defined as
\begin{equation*}
    \theta\big(\tau, (\iota^{2, \m}, \iota^{1, \e}) \big) := (\mathcal{Y}_\tau^{2, \m}, \mathcal{Y}_\tau^{1, \e}),
\end{equation*}
which is known as the decoupling field; see Definition 5.7 in \cite{guo2024macroscopicmarketmakinggames}. The focus is on the Lipschitz property of $\theta$.

Given two initial conditions $(\iota^{2, \m}, \iota^{1, \e})$ and $(\tilde{\iota}^{2, \m}, \tilde{\iota}^{1, \e})$ with $\iota^{2, \m}, \tilde{\iota}^{2, \m} \geq 0$, denote by $(\mathcal{Q}^{2, \m}, \mathcal{Y}^{2, \m})$, $(\mathcal{Q}^{1, \e}, \mathcal{Y}^{1, \e})$ and $(\tilde{\mathcal{Q}}^{2, \m}, \tilde{\mathcal{Y}}^{2, \m})$, $(\tilde{\mathcal{Q}}^{1, \e}, \tilde{\mathcal{Y}}^{1, \e})$ the corresponding local solutions. Let us write
\begin{gather*}
    \mathscr{Q} := \begin{bmatrix}
        \tilde{\mathcal{Q}}^{2, \m}\\
        \tilde{\mathcal{Q}}^{1, \e}
    \end{bmatrix} - \begin{bmatrix}
        \mathcal{Q}^{2, \m} \\
        \mathcal{Q}^{1, \e}
    \end{bmatrix} \text{\; and \;} \mathscr{Y} := \begin{bmatrix}
        \tilde{\mathcal{Y}}^{2, \m} \\
        \tilde{\mathcal{Y}}^{1, \e}
    \end{bmatrix} - \begin{bmatrix}
        \mathcal{Y}^{2, \m} \\
        \mathcal{Y}^{1, \e}
    \end{bmatrix}.
\end{gather*}
We aim to derive a linear FBSDE for $(\mathscr{Q}, \mathscr{Y})$. The Jacobian matrix of $\mathcal{U}$ with respect to $(y^{2, \m}, y^{1, \e})$ is either
\begin{equation}
    \begin{bmatrix}
    \gamma \, \big( \breve{a}_t + b_t \big) \, \big( \exp(\gamma \, y^{2, \m}) + \exp(-\gamma \, y^{2, \m}) \big) & - \frac{1}{2 \, \beta_t} \, \big( \exp(-\gamma \, y^{2, \m}) - \exp(\gamma \, y^{2, \m}) \big)
    \\
    -\gamma \, \breve{a}_t \, \exp(-\gamma \, y^{2, \m}) - \gamma \, b_t \, \exp(\gamma \, y^{2, \m}) & \frac{1}{2 \, \beta_t} \, \big( 1 + \exp(-\gamma \, y^{2, \m}) \big)
    \label{paper 3 jacob matrix 1}
    \end{bmatrix}
\end{equation}
when $\frac{1}{2 \, \beta_t} \, ( y^{1, \e} - \frac{1}{\gamma} ) \geq 0$, or otherwise
\begin{equation}
    \begin{bmatrix}
    \gamma \, \big(a_t + b_t\big) \, \big( \exp(\gamma \, y^{2, \m}) + \exp(-\gamma \, y^{2, \m}) \big) & 0
    \\
    -\gamma \, a_t\, \exp(- \gamma \, y^{2, \m}) - \gamma \, b_t \, \exp(\gamma \, y^{2, \m}) & 0
    \end{bmatrix},
    \label{paper 3 jacob matrix 2}
\end{equation}
where $\breve{a}_t = a_t + \frac{1}{2 \, \beta_t} \, ( y^{1, \e} - \frac{1}{\gamma} ) \vee 0$. Because $\iota^{2, \m}, \tilde{\iota}^{2, \m} \geq 0$, it follows for all $t$ that $\max(\mathcal{Y}_t^{2, \m}, \tilde{\mathcal{Y}}_t^{2, \m}) \leq 0$. For any $(y^{2, \m}, y^{1, \e})$ that belongs to the line segment between $(\mathcal{Y}_t^{2, \m}, \mathcal{Y}_t^{1, \e})$ and $(\tilde{\mathcal{Y}}_t^{2, \m}, \tilde{\mathcal{Y}}_t^{1, \e})$, it is straightforward to check that both \eqref{paper 3 jacob matrix 1} and \eqref{paper 3 jacob matrix 2} are $M_+$-matrices, due to the non-negative column sums. We apply the mean value theorem \ref{mean value and calculus} to obtain 
\begin{equation*}
    \mathcal{U}(t, \tilde{\mathcal{Y}}_t^{2, \m}, \tilde{\mathcal{Y}}_t^{1, \e}) - \mathcal{U}(t, \mathcal{Y}_t^{2, \m}, \mathcal{Y}_t^{1, \e}) = G_t \, \mathscr{Y}_t,
\end{equation*}
where $G_t$ is an $M_+$-matrix for any $t\in[s, T]$. Indeed, the non-negativity of column sums remains under any convex combinations. 

The linear system for $(\mathscr{Q}, \mathscr{Y})$ hence reads
\begin{equation}
\left\{
    \begin{aligned}
     d\mathscr{Q}_t &= G_t \, \mathscr{Y}_t \, dt,\\
    \, d\mathscr{Y}_t &= 2 \phi_t \, \mathscr{Q}_t \, dt + d\mathscr{M}_t,\\
    \mathscr{Q}_s &= \begin{pmatrix}
        \tilde{\iota}^{2, \m} - \iota^{2, \m}\\
        \tilde{\iota}^{1, \e} - \iota^{1, \e}
    \end{pmatrix}, \quad \mathscr{Y}_T = -2A \, \mathscr{Q}_T.
    \end{aligned}
    \right.
    \label{variational FBSDE}
\end{equation}
This is known as the \textit{variational FBSDE}, introduced by \cite{ma2015well} for the one-dimensional cases. We proceed to show the existence and uniqueness of the solution. Consider the affine ansatz
\begin{equation}
    \mathscr{Y}_t = R_t \, \mathscr{Q}_t.
    \label{affine ansatz for vari}
\end{equation}
Consequently, the BSRE \eqref{paper 3 M+ BSRE} for $R$ is obtained by matching the coefficients. Considering equation \eqref{paper 3 M+ BSRE}, if there exists a unique bounded solution $R$ on $[s, T]$, then a solution of the system \eqref{variational FBSDE} can be constructed by plugging \eqref{affine ansatz for vari} back to \eqref{variational FBSDE}. For uniqueness, due to the boundedness of $R$, the linear form \eqref{affine ansatz for vari} defines a regular decoupling field for \eqref{variational FBSDE}; see Definition 5.8 in \cite{guo2024macroscopicmarketmakinggames}. Subsequently, the existence of a regular decoupling field guarantees the uniqueness of the solution, by Theorem 5.9 in \cite{guo2024macroscopicmarketmakinggames}.

Given the well-posedness of \eqref{variational FBSDE}, the decoupling field $\theta$ is regular on $[s, T]$, fulfilling the integrability and uniform Lipschitz continuity properties. The integrability can be deduced by the boundedness of the solution. To derive the Lipschitz property, by the definitions of $\mathscr{Y}_t$ and $\mathscr{Q}_t$ we can have
\begin{equation}
    \big| \,\theta\big(s, (\tilde{\iota}^{2, \m}, \tilde{\iota}^{1, \e}) \big) - \theta\big(s, (\iota^{2, \m}, \iota^{1, \e}) \big) \, \big| = | \mathscr{Y}_s | \leq \|R_s\|_2 \, \big| (\tilde{\iota}^{2, \m}, \tilde{\iota}^{1, \e}) -(\iota^{2, \m}, \iota^{1, \e}) \big|,
    \label{regular decoupling}
\end{equation}
for any $(\iota^{2, \m}, \iota^{1, \e})$ and $(\tilde{\iota}^{2, \m}, \tilde{\iota}^{1, \e})$ with $\iota^{2, \m}, \tilde{\iota}^{2, \m} \geq 0$. Here, the notation $\| \cdot \|_2$ denotes the spectral norm of a matrix.

A regular decoupling field yields the well-posedness of \eqref{paper 3 equiv big FBSDE} on $[s, T]$. According to Theorem 5.10 in \cite{guo2024macroscopicmarketmakinggames}, the only case when the decoupling field can not be extended to the whole interval $[0, T]$ is that its Lipschitz coefficient may explode if $s$ approaches some $t_{\min} \geq 0$. However, this cannot occur if \eqref{paper 3 M+ BSRE} admits a unique bounded solution on the whole interval $[0, T]$, as shown in \eqref{regular decoupling}. Therefore, the well-posedness of \eqref{paper 3 M+ BSRE} ensures that a regular decoupling field can be extended to the entire interval $[0, T]$, which in turn implies the well-posedness of \eqref{paper 3 equiv big FBSDE}.
\end{proof}

\kong

\subsection{Proof of Theorem \ref{global well-posed M+}}
We first introduce the Radon's lemma. Consider the following linear matrix differential equations:
\begin{equation}
\begin{bmatrix}
    V'(t) \\
    U'(t)
\end{bmatrix}
=
\begin{bmatrix}
    0 & G_t\\
    2\phi_t\, I & 0
\end{bmatrix} 
\begin{bmatrix}
    V(t) \\
    U(t)
\end{bmatrix}
; \quad
\begin{bmatrix}
    V(T) \\
    U(T)
\end{bmatrix}
=
\begin{bmatrix}
    I \\
    -2A \, I
\end{bmatrix}.
    \label{radon system}
\end{equation}
The purpose of Radon's lemma is to establish a connection between the matrix Riccati equation \eqref{paper 3 M+ BSRE} and the linear equation \eqref{radon system}.

\kong

\begin{lemma}[\cite{freiling2002survey}]
\label{radon lemma}
(1) Let $R$ be a solution
of the deterministic Riccati equation \eqref{paper 3 M+ BSRE} on some interval $\mathcal{J} \subseteq [0, T]$ such that $T\in\mathcal{J}$. Denote by $V$ the unique solution of the linear equation
\begin{equation*}
    V'(t) = G_t \, R_t \, V(t); \quad V(T) = I,
\end{equation*}
for $t\in \mathcal{J}$ and set $U(t) = R_t\, V(t)$. Then, the matrix $\begin{bmatrix}
    V(t) \\ U(t)
\end{bmatrix}$
defines for $t \in \mathcal{J}$ the solution of the linear differential equation \eqref{radon system}.

(2) Suppose $\begin{bmatrix}
    V(t) \\ U(t)
\end{bmatrix}$
is on some interval $\mathcal{J} \subseteq [0, T]$ a solution of the linear differential equation \eqref{radon system} such that $\det V(t) \neq 0$ for all $t \in \mathcal{J}$, then
\begin{equation}
    R: \mathcal{J}\to \mathbb{R}^{2 \times 2}, \quad t \mapsto U(t) \, V(t)^{-1} =: R_t
    \label{radon bridge}
\end{equation}
is a solution to the deterministic Riccati equation \eqref{paper 3 M+ BSRE} on $\mathcal{J}$.
\end{lemma}

\kong

Radon’s lemma reveals that the matrix Riccati equation \eqref{paper 3 M+ BSRE} is locally equivalent to the linear differential equation \eqref{radon system}. This equivalence persists until a potentially finite blow-up time, as indicated by \eqref{radon bridge}. Furthermore, it is evident from \eqref{radon bridge} that the solution $R$ of the Riccati equation experiences blow-ups at moments when the determinant of $V$ vanishes. 

\kong

\begin{proof}
\noindent Because we are considering the deterministic case, then the coefficient $G$ is also non-random. Let us introduce the following notations:
\begin{equation*}
    R(t) = \begin{bmatrix}
        r_1(t) & r_2(t)\\
        r_3(t) & r_4(t)
    \end{bmatrix} \text{\quad and \quad}
    G(t) = \begin{bmatrix}
        g_1(t) & -g_2(t)\\
        -g_3(t) & g_4(t)
    \end{bmatrix}. 
\end{equation*}
By the properties of $M_+$-matrix, for any $i$ and $t$,  we know that $g_i(t) \geq 0$, $g_1(t) \geq g_3(t)$, and $g_4(t) \geq g_2(t)$. For notational convenience, we will omit the explicit dependence on $t$ whenever feasible. Since the local well-posedness of \eqref{paper 3 M+ BSRE} is well-known, we begin with a priori estimate. We will first prove the following claim for the solution as long as it exists:
\begin{itemize}
    \item[(1)] The matrix $R$ is element-wise non-positive;\\
    
    \vspace{-0.4cm}

    \item[(2)] It is also diagonally dominant in row, i.e., $r_1 \leq r_2$ and $r_4 \leq r_3$.  
\end{itemize}

Since we are looking backward, we explicitly compute the backward drift:
\begin{align*}
    &R \, G \, R - 2\phi_t \, I\\
    &= \begin{bmatrix}
        r_1  (g_1 r_1 - g_3 r_2) + r_3 (g_4 r_2 - g_2 r_1) - 2\phi_t & r_2  (g_1 r_1 - g_3 r_2) + r_4 (g_4 r_2 - g_2 r_1)\\
        r_1  (g_1 r_3 - g_3 r_4) + r_3 (g_4 r_4 - g_2 r_3) & r_2  (g_1 r_3 - g_3 r_4) + r_4 (g_4 r_4 - g_2 r_3) - 2\phi_t
    \end{bmatrix}.
\end{align*}
Let us consider the scenario when $r_2 = 0$. The corresponding drift for $r_2$ (the upper-right element of above matrix) reads $-g_2 \, (r_1 \, r_4)$. Because $r_2(T) = 0$ and $r_1(T) = r_4(T) = -2A <0$, it implies that $r_2$ will never go above $0$ as long as both $r_1$ and $r_4$ remain non-positive. A similar implication holds for $r_3$.

The only concern for above implications is that $r_1$ or $r_4$ may become positive. We take $r_1$ as the case for discussion. Due to the continuity of paths and $r_1(T) < r_2(T)$, it must hold that $r_1(t) = r_2(t) \leq 0$ in order for $r_1$ to go above $0$, for some $t < T$. Suppose that $r_1 = r_2$ no later than when $r_3$ meets $r_4$. In that case of $r_1 = r_2$, denote by $\mu_1$ and $\mu_2$ the backward drifts of $r_1$ and $r_2$ accordingly. Note that
\begin{align*}
    \mu_1 - \mu_2 &= (r_1)^2 (g_1 - g_3) + r_1 r_3 (g_4 -g_2) - 2\phi_t - (r_1)^2 (g_1 - g_3) - r_1 r_4 (g_4 - g_2)\\
    &= r_1 (r_3 - r_4) (g_4 - g_2) - 2\phi_t\\
    &\leq 0.
\end{align*}
The last inequality holds because $r_1 \leq 0$ and $r_3 - r_4 \geq 0$. Since the case of $r_3 = r_4$ can be discussed similarly, we conclude that $r_1 \leq r_2 \leq 0$ and $r_4 \leq r_3 \leq 0$ for all $t$. Hence, we have proved the claim proposed. 

We then apply Lemma \ref{radon lemma}. It suffices to study the determinant of $V$. Define $v(t) := \text{det\,}V(t)$. By the Jacobi's formula, we obtain
\begin{align*}
    v'(t) &= v(t) \, \text{tr} \big(V(t)^{-1} V'(t) \big)\\
    &=v(t) \, \text{tr} \big(G(t) \, U(t) \, V(t)^{-1} \big)\\
    &=v(t) \, \text{tr} \big(G(t) \, R(t)\big).
\end{align*}
Using the notations defined for matrices $G$ and $R$, the trace term reads
\begin{align*}
    \text{tr} \big(G(t) \, R(t)\big) &= g_4 r_4 - g_2 r_3 + g_1 r_1 - g_3 r_2\\
    &\leq g_2 (r_4 - r_3) + g_3 (r_1 - r_2)\\
    &\leq 0,
\end{align*}
where we have applied the properties of $G$ and our claims for $R$. Combined with the terminal condition $v(T) = 1$, we can see that $v(t) \geq 1$ for any $t < T$. Hence, the non-singular property of $V$ yields the well-posedness of \eqref{paper 3 M+ BSRE}.
\end{proof}

\vspace{0.2cm}

\section{Appendix: Proofs in Section \ref{paper 3 section 5}}

\subsection{Proof of Proposition \ref{sec 5 sto max prin}}

\begin{proof}
In view of objective functionals, the addition of diffusion terms invites an extra term that is indifferent to the strategies. Combined with the fact that diffusion terms are not controlled, the Hamiltonian of each agent remains the same. The proof basically follows the line of Theorem \ref{paper 3 big maximum principle}, while the only difference lies in the new function $(\zeta^i)_{i}$.

We start with the necessary condition. Let the strategy profile $(\boldsymbol{v}^\e, (\boldsymbol{\delta}^j)_{j} )$ be a Nash equilibrium. Given the trading strategies $\boldsymbol{v}^\e$, market makers solve a market-making game. The Hamiltonian of market maker $(i, \m)$ reads
\begin{equation}
    \Big( - \tilde{a}_t \, \zeta^i \big( \boldsymbol{\delta}^{a} \big) + \tilde{b}_t \, \zeta^i \big(\boldsymbol{\delta}^{b} \big) \Big) \, y^{i, \m} + \delta^{i, a} \, \tilde{a}_t \, \zeta^i \big( \boldsymbol{\delta}^{a} \big) + \delta^{i, b} \, \tilde{b}_t \, \zeta^i \big( \boldsymbol{\delta}^{b} \big) - \phi_t^{i, \m} \, \big( q^{i, \m} \big)^2.
    \label{approx hamilton}
\end{equation}
Take the ask side for example. Equivalently, we intend to find $\delta^{i, a}$ that maximizes 
\begin{equation*}
    - \zeta^i \big( \boldsymbol{\delta}^{a} \big) \, y^{i, \m} + \delta^{i,a} \, \zeta^i \big( \boldsymbol{\delta}^{a} \big),
\end{equation*}
of which the derivative with respect to $\delta^{i,a}$ reads
\begin{equation}
    \frac{\partial \zeta^i}{\partial \delta^{i}} \big( \boldsymbol{\delta}^{a} \big) \, \Big[ \delta^{i, a} + \frac{ \zeta^i \big( \boldsymbol{\delta}^{a} \big)}{\frac{\partial \zeta^i}{\partial \delta^{i} } \big( \boldsymbol{\delta}^{a} \big)} - y^{i, \m} \Big].
    \label{paper 3 maker deriva}
\end{equation}
Properties of $\zeta^i$ infers that the square bracket part of \eqref{paper 3 maker deriva} is increasing in $\delta^{i, a}$. Hence, the first-order condition and the truncation by $\xi$ yield 
\begin{equation}
    \delta^{i,a} - \Big[ y^{i,\m} - \frac{ \zeta^i \big( \boldsymbol{\delta}^{a}\big) }{ \frac{ \partial \zeta^i }{\partial \delta^{i} } \big(\boldsymbol{\delta}^{a} \big)} \Big] \vee (-\xi) \wedge \xi = 0.
    \label{paper 3 maker hamilt}
\end{equation}

Because \eqref{paper 3 maker hamilt} should apply for every $i$, we have
\begin{equation}
    \mathfrak{F}^\m(\boldsymbol{\delta}^a, \boldsymbol{y}^\m) = 0,
    \label{paper 3 maker implicit function}
\end{equation}
where the $i$-th entry of function $\mathfrak{F}^\m: \mathbb{R}^{\tilde{N}} \times \mathbb{R}^{\tilde{N}} \to \mathbb{R}^{\tilde{N}}$ is given by the left hand side of \eqref{paper 3 maker hamilt}. In view of the proof of Theorem \ref{paper 3 big maximum principle}, to apply the implicit function theorem \ref{general_implicit}, it suffices to prove that $\nabla_{\boldsymbol{\delta}^a}\mathfrak{F}^\m (\boldsymbol{\delta}^a, \boldsymbol{y}^\m)$
has positive diagonals and the diagonal dominance gap is bounded away from $0$,  whenever differentiable. Cases when truncation $\pm \xi$ takes into effect are straightforward. Let us then pay attention to the untruncated case. Observe that
\begin{equation*}
    \frac{\partial \mathfrak{F}^{i, \m}}{\partial \delta^{i}} = 2 - \frac{\zeta^i \big( \boldsymbol{\delta}^{a} \big) \, \frac{\partial^2 \zeta^i}{\partial (\delta^{i})^2} \big( \boldsymbol{\delta}^{a} \big) }{ \big[ \frac{\partial \zeta^i}{\partial \delta^{i}} \big( \boldsymbol{\delta}^{a} \big) \big]^2} > 0
\end{equation*}
and
\begin{equation*}
     \frac{\partial \mathfrak{F}^{i, \m}}{\partial \delta^{k}} = \frac{\frac{\partial \zeta^i}{\partial \delta^{k}} \big( \boldsymbol{\delta}^{a} \big)}{\frac{\partial \zeta^i}{\partial \delta^{i}} \big( \boldsymbol{\delta}^{a} \big) } - \frac{\zeta^i \big( \boldsymbol{\delta}^{a} \big) \, \frac{\partial^2 \zeta^i}{\partial \delta^{i} \partial \delta^{k}} \big( \boldsymbol{\delta}^{a}\big)}{ \big[ \frac{\partial \zeta^i}{\partial \delta^{i}} \big( \boldsymbol{\delta}^{a} \big) \big]^2}
\end{equation*}
for $k \neq i$. Let us then compute the following gap:
\begin{equation*}
\begin{aligned}
    \frac{\partial \mathfrak{F}^{i, \m}}{\partial \delta^{i}} &- \sum_{k \neq i}  \Big| \frac{\partial \mathfrak{F}^{i, \m}}{\partial \delta^{k}} \Big| \\
    &= \Big( \frac{\partial \zeta^i}{\partial \delta^{i}} \Big)^{-2} \, \bigg\{ 2 \, \Big(\frac{\partial \zeta^i}{\partial \delta^{i}} \Big)^2 - \zeta^i \, \frac{ \partial^2 \zeta^i }{\partial (\delta^{i})^2} - \sum_{k \neq i} \Big| \frac{\partial \zeta^i}{\partial \delta^{i}} \, \frac{\partial \zeta^i}{\partial \delta^{k}} - \zeta^i \, \frac{\partial^2 \zeta^i}{\partial \delta^{i} \partial \delta^{k}} \Big| \bigg\}.
\end{aligned}
\end{equation*}
The gap obtained above is bounded away from $0$ according to the definition of $\zeta^i$. Therefore, by the implicit function theorem, we know there exists a unique Lipschitz mapping $\psi^{a}: \mathbb{R}^{\tilde{N}} \to \mathbb{R}^{\tilde{N}}$ such that \eqref{paper 3 maker implicit function} is equivalent to
\begin{equation*}
   \boldsymbol{\delta}^a = \psi^a (\boldsymbol{y}^\m).
\end{equation*}
The bid side can be discussed in a similar way and we obtain the (unique) Lipschitz mapping $\psi^{b}: \mathbb{R}^{\tilde{N}} \to \mathbb{R}^{\tilde{N}}$ such that $\boldsymbol{\delta}^b = \psi^b( \boldsymbol{y}^\m )$. 

While the concavity in $q^{i, \m}$ is clear, the concavity in $(\delta^{i, a}, \delta^{i, b})$ is not ensured by the definition of $\zeta^i$. However, the separation of the state $q^{i, \m}$ from the controls $(\delta^{i, a}, \delta^{i, b})$ in the Hamiltonian \eqref{approx hamilton} still enables us to apply the maximum principle; see the proof in \cite{guo2023macroscopic} or \cite{guo2024macroscopicmarketmakinggames}. By the stochastic maximum principle, the adjoint process $\boldsymbol{Y}^\m$ solves the FBSDE
\begin{equation*}
    \left\{
    \begin{aligned}
     dQ_t^{i, \m} &= - \tilde{a}_t \, \zeta^i \big( \psi^{a}(\boldsymbol{Y}_t^\m) \big) \, dt + \tilde{b}_t \, \zeta^i \big( \psi^{b}(\boldsymbol{Y}_t^\m) \big) \, dt + \epsilon \, dW_t^{i, \m},\\
    \, dY_t^{i, \m} &= 2\phi_t^{i, \m} \, Q_t^{i, \m} \, dt + dM_t^{i, \m},\\
    Q_0^{i, \m} &= q_0^{i, \m}, \quad Y_T^{i, \m} = -2A^{i, \m} \, Q_T^{i, \m},
    \end{aligned}
    \right.
\end{equation*}
which leads to the equilibrium profile $(\boldsymbol{\delta}^a_t, \boldsymbol{\delta}^b_t)_{t} = (\psi^a (\boldsymbol{Y}_t^\m), \psi^b (\boldsymbol{Y}_t^\m))_{t}$. Here, we assume the global solution exists, which will be proved soon.

The rest follows the proof of Theorem \ref{paper 3 big maximum principle}. Provided with the quoting strategy profile $(\psi^a (\boldsymbol{Y}_t^\m), \psi^b (\boldsymbol{Y}_t^\m))_{t}$, the traders engage in an optimal execution game. There exists a Lipschitz mapping $\varphi: \mathbb{R}^N \times \mathbb{R}^{\tilde{N}} \to \mathbb{R}^N$ such that the equilibrium trading profile satisfies
\begin{equation*}
    v_t^{i,\e} = \varphi^i(\boldsymbol{Y}_t^\e, \boldsymbol{Y}_t^\m),
\end{equation*}
where the adjoint process $\boldsymbol{Y}^{\e}$ solves the FBSDE
\begin{equation*}
        \left\{
    \begin{aligned}
     dQ_t^{i, \e} &= \varphi^i(\boldsymbol{Y}_t^\e, \boldsymbol{Y}_t^\m) \, dt + \epsilon \, dW_t^{i, \e},\\
    \, dY_t^{i, \e} &= 2\phi_t^{i, \e} \, Q_t^{i, \e} \, dt + dM_t^{i, \e},\\
    Q_0^{i, \e} &= q_0^{i, \e}, \quad Y_T^{i, \e} = -2A^{i, \e} \, Q_T^{i, \e},
    \end{aligned}
    \right.
\end{equation*}
for all $i$. Hence, the necessary part is obtained via a fixed-point argument. The sufficient part proceeds along the lines of the proof of Theorem \ref{paper 3 big maximum principle}.
\end{proof}

\kong

\subsection{Proof of Theorem \ref{sec 5 well pose}}

\begin{proof}
Let us consider \eqref{paper 3 noise sde} as a $\hat{N}$-dimensional trivial FBSDE, where the backward process is simply zero. Then, the extended system \eqref{paper 3 noise sde}-\eqref{paper 3 big FBSDE optimal execution approx} is Markovian. The complexity of our system lies in the forward equation. For the market making system \eqref{paper 3 big FBSDE market maker approx}, the drift caused by the ask side is the product of
\begin{equation*}
    -\hat{a}_t \text{ \; with \; } \zeta^i \big(\psi^{a}(\boldsymbol{Y_t}^\m)\big),
\end{equation*}
where $\hat{a}_t = \kappa^a(L_t) + \sum_{i = 1}^N \varphi^{i} \big( \boldsymbol{Y}^\e_t, \boldsymbol{Y}^\m_t \big) \, \mathbb{I}(q_0^{i,e} < 0)$. Precisely, process $\hat{a}$ can be regarded as a function of $L, \boldsymbol{y}^\m$, and $\boldsymbol{y}^\e$. Based on Lipschitz property of $\kappa^a$ and $\varphi$, it follows that $\hat{a}$ is Lipschitz with respect to $L$, $\boldsymbol{y}^\m$, and $\boldsymbol{y}^\e$. On the other hand, in spite of unbounded derivatives, the truncation in the definition of $\psi^a$ allows us to treat $\zeta^i$ as a Lipschitz function. Consequently, function $\zeta^i (\psi^{a}(\boldsymbol{y}^\m))$ is Lipschitz in $\boldsymbol{y}^\m$. Moreover, the truncation $\tilde{\xi}$ and boundedness of $\kappa^a$ infers that $\hat{a}$ is also bounded; similarly the truncation $\xi$ guarantees the boundedness of $\zeta^i (\psi^{a}(\boldsymbol{y}^\m))$. Being aware that the product of bounded Lipschitz functions is still Lipschitz, we obtain the Lipschitz property of the ask-side drift with respect to $L, \boldsymbol{y}^\m$, and $\boldsymbol{y}^\e$. After a similar discussion on the bid side, the forward equation of \eqref{paper 3 big FBSDE market maker approx} is hence Lipschitz. Part of previous discussion already implies the forward equation of \eqref{paper 3 big FBSDE optimal execution approx} is Lipschitz regarding both $\boldsymbol{y}^\e$ and $\boldsymbol{y}^\m$. 

Introduce the extended forward process $\bdX_t := (L_t, \boldsymbol{Q}_t^\e, \boldsymbol{Q}_t^\m)$ and backward process $\bdY_t := ( \vec{0}, \bdY^\e_t, \bdY^\m_t)$. Define the following vector-valued functions and matrix-valued functions:
\begin{equation*}
\begin{aligned}
    g^\m (L, \boldsymbol{y}^\e, \boldsymbol{y}^\m) &= 
    \begin{pmatrix}
        -\hat{a} \, \zeta^1 \big( \psi^{a}(\boldsymbol{y}^\m) \big) + \hat{b} \, \zeta^1 \big( \psi^{b}(\boldsymbol{y}^\m) \big)\\
        \vdots \\
        -\hat{a} \, \zeta^{\tilde{N}} \big( \psi^{a}(\boldsymbol{y}^\m) \big) + \hat{b} \, \zeta^{\tilde{N}} \big( \psi^{b}(\boldsymbol{y}^\m) \big)\\
    \end{pmatrix}, \quad
    g (t, L, \boldsymbol{y}^\e, \boldsymbol{y}^\m) =
    \begin{pmatrix}
        \Gamma(t, L) \\
        \varphi(\boldsymbol{y}^\e, \boldsymbol{y}^\m) \\
        g^\m (L, \boldsymbol{y}^\e, \boldsymbol{y}^\m)
    \end{pmatrix},\\
    \sigma(t, L) &= \begin{bmatrix}
        \Sigma(t, L) & \boldsymbol{0}\\
        \boldsymbol{0} & \epsilon \, I_{N + \tilde{N}}
    \end{bmatrix},\\
    k^\m(t) & = \begin{bmatrix}
    \phi_t^{1,\m} & &\\
     & \ddots & \\
    & & \phi_t^{\tilde{N},\m}
  \end{bmatrix}, \;
  k^\e(t) = \begin{bmatrix}
    \phi_t^{1,\e} & &\\
     & \ddots & \\
    & & \phi_t^{N,\e}
  \end{bmatrix}, \; k(t) = \begin{bmatrix}
    \boldsymbol{0} & & \\
    & k^\e(t) & \\
    & & k^\m(t)
  \end{bmatrix},\\
  h^\m & = \begin{bmatrix}
    A^{1,\m} & &\\
     & \ddots & \\
    & & A^{\tilde{N},\m}
  \end{bmatrix}, \; 
  h^\e = \begin{bmatrix}
    A^{1,\e} & &\\
     & \ddots & \\
    & & A^{N,\e}
  \end{bmatrix}, \; 
  h = \begin{bmatrix}
    \boldsymbol{0} & & \\
    & h^\e & \\
    & & h^\m
  \end{bmatrix}, 
\end{aligned}
\end{equation*}
where $\boldsymbol{0}$ and $I_{N + \tilde{N}}$ are respectively zero and identity matrices of proper dimensions. Our FBSDE system can be concisely represented by
\begin{equation}
    \left\{
    \begin{aligned}
     d\bdX_t &= \sigma(t, \bdX_t) \, \sigma^{-1}(t, \bdX_t) \, g(t,\bdX_t, \bdY_t) + \sigma(t, \bdX_t) \, dW_t, \\
    \, d\bdY_t &= 2k(t) \, \bdX_t \, dt + d\boldsymbol{M}_t,\\
    \bdX_0 &= \bdx_0, \quad \bdY_T = -2 h\, \bdX_T.
    \end{aligned}
    \right.
    \label{paper 3 non-deg concise FBSDE}
\end{equation}

We intend to apply Theorem 2.3 in \cite{nam2022coupled} for the well-posedness of \eqref{paper 3 non-deg concise FBSDE}. For the existence of solutions, we check the condition (F1) and (B1). The condition (F1) is clear based on the Lipschitz property of function $\sigma$. The following observations:
\kong
\begin{itemize}
    \item matrices $h$ and $k(t)$ are bounded linear coefficients;\\
    \vspace{-0.2cm}
    \item function $g$ is bounded and continuous in $\bdy$,
\end{itemize}
\kong
ensure condition (B1). Regarding uniqueness, we look at Lemma 4.1 in \cite{nam2022coupled} and its proof. In short, the lemma states that, under assumptions (H1)-(H4)---which can be implied by the condition (F1) and (B1) in Theorem 2.3---the main FBSDE admits a strong solution. Moreover, if equation (4.14) in \cite{nam2022coupled} has a unique strong solution for every It\^{o} process $I$, then the solution of the main FBSDE is unique. In the proof, the It\^{o} process $I$ is shown to solve (4.10). In our setting the process $I$ solves a standard Lipschitz SDE. Therefore, the integrability of $I$ together with the Lipschitz property of the coefficients in (4.14) ensure the existence of a unique solution to (4.14), and hence the uniqueness of the FBSDE solution.
\end{proof}

\vspace{0.2cm}
\sloppy
\printbibliography

\end{document}